\newtheorem{theorem}{Theorem}[section]
\newtheorem{theorem*}[theorem]{Theorem}
\newtheorem{lemma}[theorem]{Lemma}
\newtheorem{definition}[theorem]{Definition}
\newtheorem{proposition}[theorem]{Proposition}
\theoremstyle{remark}
\newtheorem{remark}[theorem]{Remark}
\numberwithin{equation}{section}
\newcolumntype{P}[1]{>{\centering\arraybackslash}p{#1}}
\title{Commutativity equations and their trigonometric solutions}
\author{Maali Alkadhem and Misha Feigin }
\begin{document}

\begin{abstract}
We consider commutativity equations $F_i F_j =F_j F_i$ for a function $F(x^1, \dots, x^N),$ where $F_i$ is a matrix of the third order derivatives $F_{ikl}$. 
We show that under certain non-degeneracy conditions a solution $F$ satisfies the WDVV equations. Equivalently, the corresponding family of Frobenius algebras has the identity field $e$.

We study trigonometric solutions $F$
 determined by a finite collection of vectors with multiplicities,
 and we give an explicit formula for $e$ for all the known such  solutions. The corresponding collections of vectors are given by non-simply laced root systems or are related to their projections to the intersection of mirrors.
\end{abstract}
\maketitle

\section{Introduction}
 
A celebrated system of the Witten--Dijkgraaf--Verlinde--Verlinde (WDVV) equations for a prepotential function $F(x)=F(x^1,\dots, x^N)$ has the form
\begin{equation}\label{WDVV}
    F_{ijk}g^{kl}F_{lmn}=F_{mik}g^{kl}F_{ljn},
\end{equation}
where
\begin{align*}
F_{ijk} = \frac{\partial^{3} F}{\partial x^i \partial x^j \partial x^k},
\end{align*}
and $G=(g^{kl})$ is a constant symmetric $N\times N$ matrix.
These equations appeared in topological field theories \cites{DVV.1990, Witten.1990} and they are in the core of Frobenius manifolds theory \cite{Dubrovin.1996}.  
%
In these considerations one normally has the property that the components of the flat metric $G^{-1}$ can be represented as
\begin{equation}\label{G and e}
(G^{-1})_{ij}=\sum_{k=1}^{N}e^k F_{kij}
\end{equation}
for some vector field $e=\sum_{k=1}^{N}e^k(x) \partial_{x^k}$
which is the identity field for the corresponding family of Frobenius algebras. %
For example, in the Frobenius manifolds theory one normally has $e=\partial_{x^1}$, which is flat with respect to the metric  $G^{-1}$. In the case of almost dual Frobenius manifold on the space of orbits of a finite Coxeter group the field $e$ is not constant, it is proportional to the Euler vector field \cite{Dubrovin 2004}.

It is also of interest to consider equations \eqref{WDVV} without the additional assumption \eqref{G and e} which expresses the metric $G^{-1}$ as a linear combination of the third order derivatives of the prepotential.
Indeed, in the case of $G$ being the identity matrix the corresponding equations \eqref{WDVV} have the form of the commutativity equations 
\begin{equation}\label{commutativityEq}
    F_i F_j =F_j F_i,
\end{equation}
where $F_i$ is the $N\times N$ matrix with matrix entries $(F_i)_{kl}=F_{ikl} = \frac{\partial^{3} F}{\partial x^i \partial x^k \partial x^l}.$

Equations \eqref{commutativityEq} appeared in the study of $\mathcal{N}=4$ supersymmetric mechanical system (see \cite{Wyllard 2000}).
For a suitable ansatz for the supercharges the supersymmetry algebra relations are satisfied provided that equations \eqref{commutativityEq} hold.
Existence of the identity field or rather, more specifically, additional relations of the form $\sum_i x^i F_{ijk}=-\delta_{jk}$ lead to further superconformal symmetry (see \cite{Wyllard 2000} and also \cite{Bellucci 2005}, where the relation with the WDVV equations is emphasized).

In this paper we are interested in commutativity equations \eqref{commutativityEq}
and the additional condition of the existence of a vector field $e=\sum_{i=1}^N e^k \partial_{x^k}$ such that $e(F_{ij})=\delta_{ij}$.
This vector field is the identity vector field for a family of algebras depending on $x$. One of our main results provides a sufficient condition on $F$ which ensures that $e$ exists.
The components of the field can then be expressed via determinants of the matrices whose entries are the third order derivatives of the prepotential $F$.
Similarly, for a general constant matrix $G$ we establish a  representation of $G^{-1}$ as a linear combination of the matrices $F_i$ as a \textit{consequence} of equations \eqref{WDVV} (see Sections \ref{CommEqAndWDVV} and \ref{Existence of e}).

There is an interesting class of solutions of the equations \eqref{WDVV}, \eqref{G and e} determined by finite collections $\mathcal{A}$ of vectors. The corresponding prepotential has the form 
\begin{equation}\label{F.log}
F=\sum_{\alpha\in\mathcal{A}}(\alpha,x)^{2}\log (\alpha,x),\quad x\in V.
\end{equation}
In the case when $\mathcal{A}$ is a root system such solutions of the WDVV equations appeared in \cite{Martini+Gragert 1999}. 
They are almost dual prepotentials for the finite group orbit spaces Frobenius manifolds \cite{Dubrovin 2004}.
Such solutions also appear in four-dimensional
Seiberg--Witten theory as perturbative parts of the corresponding prepotentials \cite{MMM.2000}.
More generally, solutions of the form  \eqref{F.log} exist for special configurations of vectors known as $\vee$-systems introduced by Veselov in \cite{Veselov 1999}.
This class of of solutions was studied further in \cites{Chalykh+ Veselov 2001, Misha&Veselov 2007, Misha&Veselov 2008, Schreiber+ Veselov 2014}.
Thus it was shown that the class is closed under the operations of taking subsystems and projections of $\mathcal{A}$, and such solutions have to do with Dubrovin's almost duality on the discriminant strata.
Connection of these solutions to the supersymmetric mechanics was explored in \cite{GLK 2009}.
More generally, one may also consider solutions of the form \eqref{F.log} for the commutativity equations \eqref{WDVV} (without extra condition \eqref{G and e}).
The corresponding (irreducible) configurations of vectors $\mathcal{A}$ can be shown to be the complex Euclidean version of $\vee$-systems introduced in \cite{Misha&Veselov 2008}.

There are also interesting trigonometric solutions of the equations \eqref{WDVV}, \eqref{G and e} of the form 
\begin{equation}\label{F.trig}
    F=\sum_{\alpha\in\mathcal{A}}c_{\alpha}f((\alpha,x))+Q(x,y),
\end{equation}
where function $f=f(z)$ satisfies $f^{'''}(z)=\cot z,\, c_{\alpha} \in \mathbb{C}$ and $Q$ is a cubic polynomial depending on the additional variable $y$.
Solutions of this form for reduced root systems and Weyl-invariant multiplicities were obtained by Hoevenaars and Martini in \cite{Martini 2003} (see also \cite{Shen 2019} and \cite{George+Misha 2019} for more details). %
They appear as almost dual prepotentials for the extended affine Weyl groups orbit spaces \cites{ Dubrovin+ Zhang 1998,  Dubrovin+Strachan+ Zhang+Zuo 2019}, see \cite{Riley+ Strachan 2007} for type $A_N$. 
Such solutions also appeared in five-dimensional Seiberg--Witten theory as perturbative parts of prepotentials \cite{MMM.2000}. 
In the case of simply laced root systems these solutions describe quantum cohomology of resolutions of simple $A,D,E$ singularities \cite{Bryan 2008}.
Solutions of the form \eqref{F.trig} for general configurations $\mathcal{A}$ were initially studied in \cite{Misha2009} where a closely related notion of the trigonometric $\vee$-system was introduced. 
Similarly to the rational case, we showed in \cite{Maali+Misha 2021} 
that this class of solutions is closed under restrictions and that a subsystem of a trigonometric $\vee$-system is also a trigonometric $\vee$-system.
The restriction procedure for the classical root systems 
recovers solutions obtained by Pavlov from reductions of Egorov hydrodynamic chains \cite{Pavlov 2006}.

There are also elliptic versions of some of these solutions considered by Riley and Strachan in \cites{Riley+Strachan 2006, Strachan 2010}.

It appears that solutions of the form \eqref{F.trig} with $Q=0$ of the WDVV equations \eqref{WDVV}, \eqref{commutativityEq} may also exist.
Such a solution for the root system $B_N$ appeared in \cite{Martini 2003} and it was generalized to $BC_N$ in \cite{MGM 2020}.
The corresponding metric $G$ is the identity so the commutativity equations \eqref{commutativityEq} hold as well.

Solutions of the form \eqref{F.trig}
with $Q=0$ for the commutativity equations \eqref{commutativityEq} for the root systems $\mathcal{A}=F_4, G_2$
were obtained in \cite{Maali+Misha 2021}. 
The corresponding multiplicities are Weyl invariant but they have to satisfy a linear relation.
A multi-parameter deformation of the solution for the root system $BC_N$ was also obtained in  \cite{MGM 2020}.
It is unclear whether there are more Frobenius manifold structures associated with such solutions.

In this paper we study solutions of the commutativity equations \eqref{commutativityEq}
of the form \eqref{F.trig} with $Q=0$.
Thus we give a $\vee$-system version of conditions which the corresponding configuration of vectors has  to satisfy, which we call {\it a Euclidean trigonometric $\vee$-system} (see Section \ref{section.general solution of FF}).
We also show that restrictions of solutions of the commutativity equations give new solutions and that a subsystem of a Euclidean trigonometric $\vee$-system is also a Euclidean trigonometric $\vee$-system (see Sections \ref{section.subsystem of Euclid.trig}, \ref{section.restrictions of Euclid.trig}).
In Section \ref{Section.Relations} we clarify relations of Euclidean trigonometric $\vee$-systems to other versions of rational and trigonometric $\vee$-systems.
All the known irreducible solutions of the commutativity equations \eqref{commutativityEq} of the form \eqref{F.trig} with $Q=0$ are the non-simply laced root systems $BC_N, F_4, G_2$ with a relation between invariant multiplicities as well as restrictions of such solutions to the intersection of mirrors (in the case of $BC_N$ one can also extend analytically integer parameters defining the restriction).
In all these cases we give an explicit uniform formula for the corresponding identity field $e$ in Section \ref{Section.Uniform formula}.
Existence of the identity field implies that we also get new solutions of WDVV equations \eqref{WDVV}, \eqref{commutativityEq} in the case of root system $F_4$ and its projections.


\section{Commutativity equations and Euclidean trigonometric $\vee$-systems}\label{section.general solution of FF}
Let $\mathcal{A}$ be a finite set of non-zero vectors in a Euclidean space $V\cong\mathbb{C}^N$, $N\in \mathbb N$,  with the bilinear inner product $(\cdot , \cdot)$.
Let $c\colon \mathcal{A}\to \mathbb{C}$ be the (multiplicity) function. We denote 
$c_{\alpha}\coloneqq c(\alpha)$ for  $\alpha\in \mathcal{A}$.
We assume that $\mathcal{A}$ belongs to a lattice of rank $N$.
For each vector $\alpha \in \mathcal{A}$ let us introduce the set of its collinear vectors from $\mathcal{A}$:
%
$$    \delta_{\alpha} \coloneqq \{\gamma  \in \mathcal{A} \colon \gamma 	\sim \alpha   \}.$$
%
Let $\delta \subseteq \delta_{\alpha}$ and $\alpha_{0}\in \delta_{\alpha}.$
Then for any $\gamma \in \delta$ we have $\gamma= k_{\gamma}\alpha_{0}$ for some $k_{\gamma}\in \mathbb{R}.$ 
Note that $k_{\gamma}$ depends on the choice of $\alpha_{0}$ and different choices of $\alpha_{0}$ give rescaled collections of these parameters.
Define $C_{\delta}^{\alpha_{0}}\coloneqq \displaystyle \sum_{\gamma\in \delta}c_{\gamma}k_{\gamma}^{2}.$ 
Note that  $C_{\delta}^{\alpha_{0}}\neq 0$ if and only if  $C_{\delta}^{\widetilde{\alpha}_{0}}\neq 0$ for any $\widetilde{\alpha}_{0} \in \delta.$
We define {\it strings} (or {\it series}) of vectors as follows (cf.  \cite{Misha2009}).
%

For any $\alpha \in \mathcal{A}$ let us distribute all the vectors in $\mathcal{A}\setminus \delta_{\alpha}$ into a disjoint union of $\alpha$-strings
$$\mathcal{A}\setminus \delta_{\alpha}= \bigsqcup_{s=1}^{k}  \Gamma_{\alpha}^{s},$$
where $k \in \mathbb{N}$ depends on $\alpha.$ These stings $\Gamma_{\alpha}^{s}$ are determined by the property that for any $s=1,\dots, k$ and for any two covectors $\gamma_{1},\gamma_{2}\in \Gamma_{\alpha}^{s}$ one has either $\gamma_{1}+\gamma_{2}=m\alpha$ or $\gamma_{1}-\gamma_{2}=m\alpha$ for some $m \in \mathbb{Z}.$
We assume that the strings are maximal, that is if $\gamma \in \Gamma_{\alpha}^{s}$ for some $s\in \mathbb{N},$ then $\Gamma_{\alpha}^{s}$ must contain all the covectors of the form $\pm \gamma+m\alpha\in \mathcal{A}$ with $m\in \mathbb{Z}.$ 
Note that if for some $\beta \in \mathcal{A}$ there is no $\gamma \in \mathcal{A}$ such that $\beta \pm \gamma=m\alpha$ for $m\in \mathbb{Z},$ then $\beta$ itself forms a single $\alpha$-string.

By replacing some vectors from $\mathcal{A}$ with their opposite ones and keeping the multiplicity unchanged one can get a new configuration whose vectors belong to a half-space. We will denote such a system by $\mathcal{A}_{+}.$ %
If this system contains repeated vectors $\alpha$ with multiplicities $c_{\alpha}^{i}$ then we replace them with the single vector $\alpha$ with multiplicity $c_{\alpha}\coloneqq \sum_{i} c_{\alpha}^{i}.$
%
%
%

Let us now define \textit{a Euclidean trigonometric $\vee$-system} in analogy with a trigonometric $\vee$-system \cites{Misha2009}.
\begin{definition}\label{Euclidean trig.V-sys defintion}
The pair $(\mathcal{A},c)$ is called a Euclidean trigonometric $\vee$-system if for all $\alpha \in \mathcal{A}$ and for all $\alpha$-strings $\Gamma_{\alpha}^{s},$ one has the relation
\begin{equation}\label{Euclid.Trig.condition}
\sum_{\beta\in \Gamma_{\alpha}^{s} }c_{\beta}(\alpha,\beta)\alpha \wedge \beta=0.
\end{equation}
\end{definition}
%
%
%

%
Consider a function $F$ given by the formula
\begin{equation}\label{trig.solution. general form 2}
    F=\sum_{\alpha \in \mathcal{A}}c_{\alpha} f(( \alpha,x)),
\end{equation}
where 
the function $f$ 
is given by
\begin{equation*}
    f(z)= \frac{1}{6} i z^3+\frac{1}{4} Li_{3}(e^{-2iz})
\end{equation*}
so that $f^{\prime \prime \prime}(z)=\cot z$.
We are interested in configurations $(\mathcal{A},c)$ 
such that the commutativity equations
 \begin{equation}\label{FF=FF.1}
  F_i F_j = F_j F_i, \quad i,j =1,\dots,N,  
 \end{equation}
hold,
where $F_i$ is the $N\times N$ matrix with entries
$$(F_{i})_{pq}=F_{ipq} = \frac{\partial^{3} F}{\partial x^i \partial x^p \partial x^q}.$$
%
The following statement establishes invariance of the commutativity equations under the action of the group of orthogonal transformations $O(N, \mathbb{C})$.
Summation from 1 to $N$ over repeated indices  will be assumed throughout unless indicated otherwise.
\begin{proposition}\label{FF under orthogonal transformation}
Suppose a function $F=F(x^1 ,\dots , x^N)$ satisfies commutativity equations \eqref{commutativityEq}.
Let $C=(C_i^k)\in O(N, \mathbb{C}) $, and let
\begin{align}\label{changing variables.V2}
\widetilde{x}^k ={C} _{i}^{k} x^i ,
\end{align}
where $\widetilde{x}^1 ,\dots ,\widetilde{x}^N$ is a new coordinates system.
Then $\widetilde{F}(\widetilde{x})=F(x)$ satisfies commutativity equations
\begin{align}\label{comm.in new coord}
   \widetilde{F}_i   \widetilde{F}_j = \widetilde{F}_j  \widetilde{F}_i, \quad i,j=1,\dots,N, 
\end{align}
where $(\widetilde{F}_i)_{pq} = \frac{\partial^3 \widetilde{F}}{\partial \tilde x^i \partial \tilde x^p\partial \tilde x^q }$.
\end{proposition}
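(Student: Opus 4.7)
The plan is to translate the hypothesis $F_i F_j = F_j F_i$ into a statement about $\widetilde F_i$ by applying the chain rule and then exploiting the orthogonality of $C$. No deep idea is required; the argument is a short bookkeeping calculation, the point being that the linear map on matrices induced by $C$ is a combination of a simultaneous orthogonal conjugation and a linear recombination, both of which preserve commutativity.

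First I would compute how third derivatives transform. From $\widetilde x^k = C_i^k x^i$ the chain rule gives $\partial_{x^i} = C_i^k \partial_{\widetilde x^k}$, so applying it three times yields
\[
 F_{ipq} \;=\; C_i^k C_p^l C_q^m \, \widetilde F_{klm}.
\]
Introduce the $N\times N$ matrix $P$ with entries $P_{pl} = C_p^l$. The orthogonality of $C$ says $\sum_k C_i^k C_j^k = \delta_{ij}$ and $\sum_i C_i^k C_i^l = \delta^{kl}$, which is equivalent to $PP^T = P^TP = I$. Multiplying the displayed identity by $C_i^{k'} C_p^{l'} C_q^{m'}$ and summing over $i,p,q$ inverts it, giving $\widetilde F_{k'l'm'} = C_i^{k'} C_p^{l'} C_q^{m'} F_{ipq}$, which in matrix form reads
\[
 \widetilde F_k \;=\; \sum_i C_i^k \, P^T F_i\, P.
\]
Thus each $\widetilde F_k$ is a fixed linear combination of simultaneous orthogonal conjugates of the matrices $F_i$.

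Finally, since $PP^T = I$, the product expands as $\widetilde F_i \widetilde F_j = \sum_{k,l} C_k^i C_l^j\, P^T F_k P\,P^T F_l P = \sum_{k,l} C_k^i C_l^j\, P^T F_k F_l\, P$, so after antisymmetrising in $i,j$ (equivalently, relabelling $k\leftrightarrow l$ in the second term) one obtains
\[
 \widetilde F_i \widetilde F_j - \widetilde F_j \widetilde F_i \;=\; \sum_{k,l} C_k^i C_l^j \, P^T [F_k, F_l]\, P,
\]
which vanishes by the hypothesis $[F_k,F_l] = 0$. The only point that deserves care is verifying that the matrix $P$ arising from $C$ is itself orthogonal, so that $PP^T = I$ can be used to cancel the inner factor in the product $\widetilde F_i\widetilde F_j$; apart from this indexing check there is no substantive obstacle.
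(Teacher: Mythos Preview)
Your proof is correct and follows essentially the same route as the paper: both compute the chain rule relation $F_{ipq}=C_i^kC_p^lC_q^m\widetilde F_{klm}$, invert it, and then use the orthogonality $\sum_k C_i^k C_j^k=\delta_{ij}$ to collapse the inner contraction in the product. The only difference is presentational---you package the computation as $\widetilde F_k=\sum_i C_i^k\,P^TF_iP$ and use $PP^T=I$ to cancel, whereas the paper stays in index notation throughout---but the content is identical.
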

\begin{proof}
Since 
$\partial_{x^i}=C_i^k \partial_{\widetilde{x}^k}$,
 we have
$
F_{ijk}={C}_{k}^{\widetilde{k}}{C}_{j}^{\widetilde{j}}{C}_{i}^{\widetilde{i}}\widetilde{F}_{\widetilde{i}\widetilde{j}\widetilde{k}}.   
$
Then commutativity equations
$
F_{ijk} F_{klm}=F_{mjk}F_{kli}
$
in the new coordinates take the form
\begin{align}\label{comm.2}
    {C}_{k}^{\widetilde{k}}{C}_{j}^{\widetilde{j}}{C}_{i}^{\widetilde{i}} {C}_{k}^{a}{C}_{l}^{b}{C}_{m}^{d}
    \widetilde{F}_{\widetilde{i}\widetilde{j}\widetilde{k}}   \widetilde{F}_{abd}
    = {C}_{m}^{\widetilde{m}}{C}_{j}^{\widetilde{j}}{C}_{k}^{\widetilde{k}} {C}_{k}^{a}{C}_{l}^{b}{C}_{i}^{d}
    \widetilde{F}_{\widetilde{m}\widetilde{j}\widetilde{k}}   \widetilde{F}_{abd}.
\end{align}
Now we multiply both sides of equality \eqref{comm.2} by $\widehat{C}_{\alpha}^{m}\widehat{C}_{\beta}^{j}\widehat{C}_{\gamma}^{l} \widehat{C}_{\epsilon}^{i}  $, where
$\widehat{C}= C^{-1}$ so that $\widehat{C}_{\alpha}^{k} C_{k}^{\beta} =\delta_{\alpha}^{\beta}.$
We get
\begin{align}\label{comm.3}
C_{k}^{\widetilde{k}} C_{k}^{a} \widetilde{F}_{\epsilon \beta \widetilde{k}} \widetilde{F}_{a \alpha \gamma}     
    =   C_{k}^{\widetilde{k}} C_{k}^{a} \widetilde{F}_{\alpha \beta \widetilde{k}} \widetilde{F}_{a \gamma \epsilon} .
\end{align}
For an orthogonal transformation $C$ we have $C_{k}^{\widetilde{k}} C_{k}^{a}= \delta^{ \widetilde{k} a}$. Hence equality \eqref{comm.3} reduces to \eqref{comm.in new coord}.
 \end{proof}
We are going to establish a relation between
solutions \eqref{trig.solution. general form 2} of the commutativity equations \eqref{FF=FF.1} and Euclidean trigonometric $\vee$-systems. The following two lemmas hold. 
\begin{lemma}\label{main 1.FF}
The commutativity equations \eqref{FF=FF.1} for the function \eqref{trig.solution. general form 2} are equivalent to the identity
\begin{equation}\label{identity 1 for FF}
    \sum_{\alpha,\beta \in \mathcal{A}}c_{\alpha}c_{\beta}(\alpha,\beta)\cot(\alpha,x)\cot(\beta,x)B_{\alpha,\beta}(a,b) \alpha \wedge \beta =0,
\end{equation}
for all $a,b \in V,$ where $B_{\alpha,\beta}(a,b)=\alpha\wedge\beta=(\alpha,a)(\beta,b)-(\alpha,b)(\beta,a).$
\end{lemma}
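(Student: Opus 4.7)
The plan is a direct coordinate computation combined with a symmetry observation.

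First, I would substitute the ansatz \eqref{trig.solution. general form 2} into $F_{ijk}=\partial_i\partial_j\partial_k F$, using $f'''(z)=\cot z$, to obtain
\[
F_{ijk} \;=\; \sum_{\alpha\in\mathcal{A}} c_\alpha\,\alpha_i\alpha_j\alpha_k\cot(\alpha,x).
\]
Substituting this into $(F_iF_j-F_jF_i)_{pr}=\sum_q(F_{ipq}F_{jqr}-F_{jpq}F_{iqr})$, the internal contraction over $q$ collapses to the scalar product $\sum_q\alpha_q\beta_q=(\alpha,\beta)$, and the $(i,j)$ dependence separates from the $\alpha_p\beta_r$ prefactor, giving
\[
(F_iF_j-F_jF_i)_{pr}=\sum_{\alpha,\beta\in\mathcal{A}} c_\alpha c_\beta(\alpha,\beta)\cot(\alpha,x)\cot(\beta,x)\,\alpha_p\beta_r(\alpha_i\beta_j-\alpha_j\beta_i).
\]

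Second, I would contract the $(i,j)$ indices with arbitrary vectors $a^i,b^j\in V$. This turns $(\alpha_i\beta_j-\alpha_j\beta_i)a^ib^j$ into $B_{\alpha,\beta}(a,b)=(\alpha,a)(\beta,b)-(\alpha,b)(\beta,a)$ and produces an $(a,b)$-dependent tensor $U_{pr}(a,b)$ in the free indices; the commutativity equations \eqref{FF=FF.1} are equivalent to $U_{pr}(a,b)=0$ for all $a,b\in V$ and all $p,r$. A quick inspection via the relabelling $\alpha\leftrightarrow\beta$ in the summation (which flips the sign of $B_{\alpha,\beta}$ and exchanges $\alpha_p\beta_r$ with $\alpha_r\beta_p$) shows that $U_{pr}$ is automatically antisymmetric in $(p,r)$. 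Therefore the symmetric part of the prefactor $\alpha_p\beta_r$ drops out of the sum, and only the antisymmetric part $\tfrac{1}{2}(\alpha_p\beta_r-\alpha_r\beta_p)=\tfrac{1}{2}(\alpha\wedge\beta)_{pr}$ survives, yielding exactly the coordinate form of \eqref{identity 1 for FF}. The converse direction is immediate since the same identity, read back with $a,b$ arbitrary, recovers the antisymmetric part of $(F_iF_j-F_jF_i)_{pr}$, which is all of it.

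No real obstacle is expected: the computation is straightforward once the third derivatives are written down, and the only care required is the symmetry bookkeeping that lets one pass between the coordinate identity with free indices $(p,r)$ and the coordinate-free bivector identity \eqref{identity 1 for FF} involving the wedge product $\alpha\wedge\beta$.
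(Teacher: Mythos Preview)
Your argument is correct: the direct computation of $F_{ijk}$, the contraction over the internal index, and the antisymmetrisation via the relabelling $\alpha\leftrightarrow\beta$ together give exactly identity \eqref{identity 1 for FF}. The paper does not spell out a proof of this lemma but refers to the analogous computation in \cite{Maali+Misha 2021}, which proceeds in the same way, so your approach matches the intended one.
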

\begin{lemma} \label{iden 2.FF.with collinear vectors} 
%
Suppose that identity \eqref{identity 1 for FF} holds
for any $a,b \in V$. Suppose also that
$C_{\delta}^{\alpha_{0}}\neq 0$ for any $\alpha \in \mathcal{A},\, \delta\subseteq \delta_{\alpha}, \,  \alpha_{0}\in \delta_{\alpha}.$ 
%
%
Then $\mathcal{A}$ is a Euclidean trigonometric $\vee$-system.
\end{lemma}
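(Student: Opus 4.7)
The plan is to probe the identity \eqref{identity 1 for FF} near the hyperplane $H_{\alpha_0}:=\{x \in V : (\alpha_0,x)=0\}$ for each fixed $\alpha_0 \in \mathcal{A}$, and to extract condition \eqref{Euclid.Trig.condition} for each $\alpha_0$-string through a two-stage singularity analysis: first a Laurent expansion in the direction normal to $H_{\alpha_0}$, then a residue argument on $H_{\alpha_0}$ to isolate each string.

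For the first stage I would fix $v_0 \in V$ with $(\alpha_0,v_0)=1$, substitute $x = \epsilon v_0 + y$ for $y \in H_{\alpha_0}$, and expand \eqref{identity 1 for FF} in $\epsilon$. For $\alpha = k_\alpha\alpha_0 \in \delta_{\alpha_0}$ one has $\cot((\alpha,x)) = 1/(k_\alpha \epsilon) + O(\epsilon)$ by oddness of $\cot$, while for $\alpha \notin \delta_{\alpha_0}$ the function $\cot((\alpha,x))$ is analytic at $\epsilon=0$. Splitting the double sum by collinearity with $\alpha_0$: the doubly-collinear terms are $O(\epsilon^{-2})$ but vanish since $\alpha \wedge \beta=0$ there, the doubly-non-collinear terms are regular, and the cross terms produce the $\epsilon^{-1}$ coefficient. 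Using $(\alpha,\beta)=k_\alpha(\alpha_0,\beta)$, $\alpha \wedge \beta=k_\alpha(\alpha_0 \wedge \beta)$, $B_{\alpha,\beta}(a,b)=k_\alpha B_{\alpha_0,\beta}(a,b)$, and summing over the collinear $\alpha$ to produce the prefactor $\sum_{\alpha \in \delta_{\alpha_0}} c_\alpha k_\alpha^2 = C_{\delta_{\alpha_0}}^{\alpha_0}$, this coefficient reads
\begin{equation*}
2\,C_{\delta_{\alpha_0}}^{\alpha_0} \sum_{\beta \in \mathcal{A}\setminus \delta_{\alpha_0}} c_\beta (\alpha_0,\beta) \cot((\beta,y)) B_{\alpha_0,\beta}(a,b) (\alpha_0 \wedge \beta) = 0.
\end{equation*}
By the hypothesis $C_{\delta_{\alpha_0}}^{\alpha_0}\neq 0$, the inner sum must vanish as a meromorphic bivector-valued function of $y \in H_{\alpha_0}$ for every $a,b \in V$.

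For the second stage I partition $\mathcal{A}\setminus \delta_{\alpha_0}=\bigsqcup_s \Gamma^s_{\alpha_0}$ and pick a representative $\beta_0^s$ of each string; any $\beta \in \Gamma^s_{\alpha_0}$ then has the form $\beta = \epsilon_\beta \beta_0^s + m_\beta \alpha_0$ with $\epsilon_\beta \in \{\pm 1\}$, $m_\beta \in \mathbb{Z}$, giving on $H_{\alpha_0}$ the identities $\cot((\beta,y)) = \epsilon_\beta \cot((\beta_0^s,y))$ and $B_{\alpha_0,\beta}(a,b)(\alpha_0 \wedge \beta) = B_{\alpha_0,\beta_0^s}(a,b)(\alpha_0 \wedge \beta_0^s)$ (the two signs $\epsilon_\beta$ cancel in the latter). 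Consequently the whole string contributes $\cot((\beta_0^s,y)) B_{\alpha_0,\beta_0^s}(a,b)\, S_s$, where
\begin{equation*}
S_s := \sum_{\beta \in \Gamma^s_{\alpha_0}} c_\beta (\alpha_0,\beta) (\alpha_0 \wedge \beta)
\end{equation*}
is exactly the bivector whose vanishing is asserted in \eqref{Euclid.Trig.condition}.

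To conclude, I would isolate $S_{s_0}$ by choosing $y_0 \in H_{\alpha_0}$ with $(\beta_0^{s_0},y_0)=0$ and $(\beta_0^s,y_0) \notin \pi\mathbb{Z}$ for all $s \neq s_0$ (possible by density, since the forbidden locus is a countable union of proper affine subspaces of the hyperplane $\{(\beta_0^{s_0},y)=0\}$); multiplying by $\sin((\beta_0^{s_0},y))$ and letting $y \to y_0$ annihilates every string except $\Gamma^{s_0}_{\alpha_0}$ and leaves $B_{\alpha_0,\beta_0^{s_0}}(a,b)\,S_{s_0}=0$. Since $\alpha_0$ and $\beta_0^{s_0}$ are linearly independent, one can select $a,b$ with $B_{\alpha_0,\beta_0^{s_0}}(a,b) \neq 0$ and deduce $S_{s_0}=0$. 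The main obstacle is this last separation step: if two distinct strings restrict to proportional functionals on $H_{\alpha_0}$ (i.e., $\beta_0^{s'}=\lambda\beta_0^s+\mu\alpha_0$ with $\lambda \neq \pm 1$ or $\mu \notin \mathbb{Z}$), their $\cot$ functions can share the pole at the origin; one then separates them by passing to other poles in the $\pi\mathbb{Z}$-family of $\cot((\beta_0^{s_0},y))$ where the scales differ, or by a Fourier-type argument along a translation direction in $H_{\alpha_0}$, using the lattice hypothesis on $\mathcal{A}$ and the maximality of strings to ensure the periods can be distinguished.
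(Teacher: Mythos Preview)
The paper does not supply its own proof, pointing instead to the analogous argument for trigonometric $\vee$-systems in \cite{Maali+Misha 2021} and \cite{Maali.2022}. Your two-stage residue strategy is exactly that argument, and your first stage (extracting the $\epsilon^{-1}$ coefficient at $(\alpha_0,x)=0$ to obtain the identity on $H_{\alpha_0}$ with prefactor $C^{\alpha_0}_{\delta_{\alpha_0}}$) is carried out correctly.

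The genuine gap is the one you flag yourself. Your Fourier/other-poles workaround does handle the case of strings in the same $2$-plane with \emph{distinct} scales $|\lambda_s|$ (the exponentials $e^{2in\lambda_s t}$ in the expansion of $\cot(\lambda_s t)$ separate by looking at the smallest frequency). What it cannot handle is two distinct strings with representatives satisfying $\beta_0^{s'}=\pm\beta_0^{s}+\nu\alpha_0$ for some $\nu\notin\mathbb Z$: then $|\lambda_s|=|\lambda_{s'}|$, so $\cot((\beta_0^s,y))=\pm\cot((\beta_0^{s'},y))$ identically on $H_{\alpha_0}$, and no argument confined to $H_{\alpha_0}$ can distinguish them. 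This is precisely where the hypothesis on \emph{proper} subsets $\delta\subsetneq\delta_{\alpha_0}$ enters, and the fact that your write-up never invokes it is the symptom. In the referenced proof one also takes the residue of \eqref{identity 1 for FF} along the translated hyperplanes $\{(\alpha_0,x)=n\pi\}$: there only the subset $\delta_n=\{\gamma\in\delta_{\alpha_0}:k_\gamma n\in\mathbb Z\}$ of collinear vectors is singular, producing the prefactor $C^{\alpha_0}_{\delta_n}$, while the two problematic strings now restrict to $\cot((\beta_0^s,x))$ and $\cot((\beta_0^s,x)+\nu n\pi)$, which are no longer proportional since $\nu n\notin\mathbb Z$ for suitable $n$. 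Combining the information from several such translates then isolates each string. You should incorporate this step explicitly rather than leave it as a heuristic.
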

%
Proofs of Lemmas \ref{main 1.FF} and \ref{iden 2.FF.with collinear vectors} are similar to the proofs of analogous statements  in \cite{Maali+Misha 2021} for the case of the trigonometric $\vee$-system (see also \cite{Maali.2022}).

Note that if $\mathcal{A}$ is a Euclidean trigonometric $\vee$-system then the left-hand side of identity \eqref{identity 1 for FF} is non-singular. 
Since all vectors from $\mathcal{A}$ belong to an $N$-dimensional lattice then the left-hand side of identity \eqref{identity 1 for FF} is a rational function in suitable exponential variables which has degree zero and therefore is a constant.
In order to find this constant, by changing some of the vectors from $\mathcal{A}$ to their opposite ones we can assume that all vectors from $\mathcal{A}$ belong to a half-space, hence form a positive system $\mathcal{A}_{+}$.
Then in an appropriate limit in a cone $\cot(\alpha,x)\to i$ for all $\alpha\in \mathcal{A}_{+}$
and the identity  \eqref{identity 1 for FF} reduces to 
$$\sum_{\alpha,\beta\in \mathcal{A}_{+}}c_{\alpha}c_{\beta
}(\alpha,\beta) B_{\alpha,\beta}(a,b) \alpha \wedge \beta =0.$$
From these considerations 
we get the following result. 
\begin{theorem}\label{FF and Euclidean trig}
Suppose that a configuration $(\mathcal{A},c)$ satisfies the condition 
$C_{\delta}^{\alpha_{0}}\neq 0$ for any $\alpha \in \mathcal{A},\, \delta\subseteq \delta_{\alpha}, \,  \alpha_{0}\in \delta_{\alpha}.$ 
Then the commutativity equations \eqref{FF=FF.1} for the prepotential \eqref{trig.solution. general form 2}
imply the following two conditions:

(1) $\mathcal{A}$ is a Euclidean trigonometric $\vee$-system,

(2) $\sum_{\alpha,\beta\in \mathcal{A}_{+}}c_{\alpha}c_{\beta
}(\alpha,\beta) B_{\alpha,\beta}(a,b) \alpha \wedge \beta =0$ for all $a, b \in V$. 

Conversely, if a configuration $(\mathcal{A},c)$ satisfies conditions (1), (2) then commutativity equations \eqref{FF=FF.1} hold.
\end{theorem}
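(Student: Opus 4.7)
My plan is to reduce both directions of the theorem to the single identity \eqref{identity 1 for FF}: Lemmas \ref{main 1.FF} and \ref{iden 2.FF.with collinear vectors} already link that identity to both the commutativity equations and the $\vee$-system condition, and the only additional input needed is the constancy in $x$ of the left-hand side, which will be extracted by a limit in an imaginary direction.

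For the forward direction I would first apply Lemma \ref{main 1.FF} to recast commutativity \eqref{FF=FF.1} as identity \eqref{identity 1 for FF}; the nondegeneracy assumption $C_\delta^{\alpha_0}\neq 0$ then puts us in the hypotheses of Lemma \ref{iden 2.FF.with collinear vectors}, delivering condition (1). To extract condition (2), I would observe that once $(\mathcal{A},c)$ is a Euclidean trigonometric $\vee$-system, the potential poles of the left-hand side $L(x;a,b)$ of \eqref{identity 1 for FF} along each hyperplane $(\alpha,x)=0$ cancel thanks to the string relation \eqref{Euclid.Trig.condition}, so $L$ is regular. Writing $L$ as a Laurent polynomial in the exponentials $e^{i(\alpha,x)}$ and noting that each $\cot(\alpha,x)$ contributes degree zero while $\mathcal{A}$ spans a rank-$N$ lattice forces $L$ to be independent of $x$. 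To compute this constant I would pass to a positive system $\mathcal{A}_+$: the summand of \eqref{identity 1 for FF} is invariant under $\alpha\mapsto -\alpha$ since the factors $(\alpha,\beta)$, $\cot(\alpha,x)$, $B_{\alpha,\beta}(a,b)$ and $\alpha\wedge\beta$ contribute an even number of sign changes. Sending $x$ to infinity in a cone on which $\mathrm{Im}\,(\alpha,x)\to -\infty$ for every $\alpha\in\mathcal{A}_+$ (such a cone exists because $\mathcal{A}_+$ lies in an open half-space) gives $\cot(\alpha,x)\to i$, hence $\cot(\alpha,x)\cot(\beta,x)\to -1$, so the constant equals
\[
-\sum_{\alpha,\beta\in \mathcal{A}_+} c_\alpha c_\beta (\alpha,\beta)\, B_{\alpha,\beta}(a,b)\, \alpha\wedge\beta,
\]
and vanishing of $L$ gives (2).

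The converse runs the same argument in reverse: assuming (1), the regularity-and-degree analysis above still shows that $L(x;a,b)$ is a constant function of $x$, and the same cone limit identifies this constant with minus the sum in (2), which is zero by assumption. Hence \eqref{identity 1 for FF} holds identically for all $a,b\in V$, and Lemma \ref{main 1.FF} yields the commutativity equations. The main technical point to be careful about is the pole cancellation on each hyperplane $(\alpha,x)=0$: this is exactly where the string relation \eqref{Euclid.Trig.condition} enters, since summing the residue contributions over each $\alpha$-string $\Gamma_\alpha^s$ (together with the collinear contributions controlled by $C_\delta^{\alpha_0}$) produces precisely the wedge expression that \eqref{Euclid.Trig.condition} requires to vanish. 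Once that cancellation is in hand, the degree count and the cone asymptotics are routine.
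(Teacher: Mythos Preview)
Your proposal is correct and follows essentially the same route as the paper: reduce via Lemma~\ref{main 1.FF} to identity~\eqref{identity 1 for FF}, invoke Lemma~\ref{iden 2.FF.with collinear vectors} for condition~(1), use the $\vee$-system condition to cancel poles so that the left-hand side becomes a degree-zero rational function in the lattice exponentials and hence a constant, and evaluate that constant by the cone limit $\cot(\alpha,x)\to i$ over $\mathcal{A}_+$. Your write-up is slightly more explicit than the paper's (on the sign invariance under $\alpha\mapsto-\alpha$ and the existence of the cone), but the argument is the same.
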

%
%
Root systems of Weyl groups provide examples of Euclidean trigonometric $\vee$-systems. 
\begin{proposition}\label{root system is Euclidean}
A root system $\mathcal{A}=\mathcal{R}$ with Weyl-invariant multiplicity function $c$ is a Euclidean trigonometric $\vee$-system.
\end{proposition}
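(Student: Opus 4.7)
The plan is to exploit the reflection symmetry $s_\alpha$ of the root system $\mathcal{R}$ together with the Weyl-invariance of $c$ to pair up the summands in \eqref{Euclid.Trig.condition} so that they cancel pairwise.

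First I would establish that for each $\alpha \in \mathcal{R}$ the reflection $s_\alpha\colon \beta \mapsto \beta - \frac{2(\alpha,\beta)}{(\alpha,\alpha)}\alpha$ preserves each individual $\alpha$-string $\Gamma_\alpha^s$. Since $\mathcal{R}$ is a root system, the Cartan integer $\frac{2(\alpha,\beta)}{(\alpha,\alpha)}$ lies in $\mathbb{Z}$, so $s_\alpha(\beta) - \beta \in \mathbb{Z}\alpha$; by the maximality clause in the definition of strings, $s_\alpha(\beta)$ lies in the same string as $\beta$. Furthermore, if $\beta \notin \delta_\alpha$ then $s_\alpha(\beta) \notin \delta_\alpha$ either, as otherwise $\beta = s_\alpha(s_\alpha(\beta))$ would itself be collinear with $\alpha$. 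Hence $s_\alpha$ restricts to an involution on $\Gamma_\alpha^s$.

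Next I would compute the contribution of a pair $\{\beta, s_\alpha(\beta)\}$ to the sum. Three observations combine: Weyl-invariance of the multiplicity gives $c_{s_\alpha(\beta)} = c_\beta$; the reflection flips the $\alpha$-component, so $(\alpha, s_\alpha(\beta)) = -(\alpha, \beta)$; and since $s_\alpha(\beta) - \beta$ is a scalar multiple of $\alpha$, one has $\alpha \wedge s_\alpha(\beta) = \alpha \wedge \beta$. Combining these,
$$c_{s_\alpha(\beta)}\,(\alpha, s_\alpha(\beta))\,\alpha \wedge s_\alpha(\beta) = -\,c_\beta\,(\alpha,\beta)\,\alpha \wedge \beta,$$
so the contributions of $\beta$ and $s_\alpha(\beta)$ cancel whenever $\beta \neq s_\alpha(\beta)$. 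If $\beta$ is a fixed point of $s_\alpha$ then $(\alpha,\beta)=0$ and its individual contribution already vanishes. Partitioning $\Gamma_\alpha^s$ into $s_\alpha$-orbits therefore yields \eqref{Euclid.Trig.condition}.

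I do not expect a serious obstacle. The only delicate point is verifying that $s_\alpha$ preserves each \emph{individual} string $\Gamma_\alpha^s$ rather than merely permuting the collection of strings, and this reduces cleanly to the integrality of the Cartan integers together with the maximality built into the definition of strings.
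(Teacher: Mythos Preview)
Your proposal is correct and follows essentially the same approach as the paper: both use the reflection $s_\alpha$ to pair up terms in the sum and observe the cancellation via $c_{s_\alpha\beta}=c_\beta$, $(\alpha,s_\alpha\beta)=-(\alpha,\beta)$, and $\alpha\wedge s_\alpha\beta=\alpha\wedge\beta$. Your treatment is slightly more careful than the paper's terse argument, explicitly handling the fixed-point case $(\alpha,\beta)=0$ and verifying that $s_\alpha$ preserves each individual string.
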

\begin{proof}
Fix $\alpha \in \mathcal{R}.$ Take any $\beta \in \mathcal{R},$ and let $\gamma=s_{\alpha}\beta=\beta-\frac{2(\alpha,\beta)}{(\alpha,\alpha)}\alpha.$ 
%
%
Since $\frac{2(\alpha,\beta)}{(\alpha,\alpha)}\in\mathbb{Z}$ we get that
 $\beta, \gamma \in \Gamma_{\alpha}^{s}$ for some $s.$
%
%
We have 
$$c_{\beta}=c_{\gamma},\quad (\alpha, \beta)=-(\alpha,\gamma), \quad \alpha \wedge \beta = \alpha \wedge \gamma.$$
Hence the contribution of $\beta$ and $ \gamma$ to the sum in \eqref{Euclid.Trig.condition} cancel each other.
 
\end{proof}

In general root systems $\mathcal{A}=\mathcal{R}$ with invariant multiplicities do not satisfy condition $(2)$ in Theorem \ref{FF and Euclidean trig}.
%
It has been shown in \cite{George+Misha 2019} that this condition is satisfied for root systems $\mathcal{R}= BC_N, \, F_4 , \, G_2$ with special invariant multiplicities.
%
%

Solutions of commutativity equations can be applied to construct $\mathcal{N}=4$ supersymmetric mechanical systems.
%
Hamiltonians corresponding to root systems $\mathcal{R}= BC_N, \, F_4 , \, G_2$ were given explicitly in \cite{George+Misha 2019}.
%

\section{Subsytems of a Euclidean trigonometric $\vee$-system}\label{section.subsystem of Euclid.trig}

Now we consider subsystems of a Euclidean trigonometric  $\vee$-system. 
%
%
\begin{definition}\label{Subsys. of Euclidean trig.sys}
Let $\mathcal{A}\subset V$ be a finite collection of vectors.
A subset $\mathcal{B}\subseteq\mathcal{A}$ is called a subsystem if $$\mathcal{B}=\mathcal{A} \cap W$$ for a linear subspace $W\subseteq V.$ %
The subsystem $\mathcal{B}$ is called reducible if $\mathcal{B}$ is a disjoint union of two non-empty subsystems $\mathcal{B}=\mathcal{B}_{1} \sqcup\mathcal{B}_{2}$. 
The subsystem $\mathcal{B}$ is called \textit{irreducible} if it is not reducible.
\end{definition}
If $c$ is a multiplicity function for $\mathcal{A}$ then we will equip a  subsystem $\mathcal{B}\subseteq \mathcal{A}$ with the multiplicity function which is the restriction of the multiplicity function $c$ on $\mathcal{B}$.

Assume that the linear span $\langle \mathcal{B \rangle}=W.$
We say that the subsystem $\mathcal{B}$ is \textit{non-isotropic} if the restriction of the inner product $(\cdot, \cdot)$ onto $W$ is non-degenerate. 
%
 
%
\begin{theorem}\label{isotropic Euclid v-sys. is v-sys.}
Any non-isotropic subsystem of a Euclidean trigonometric $\vee$-system is also a Euclidean trigonometric $\vee$-system.
\end{theorem}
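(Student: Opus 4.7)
The plan is to show that the $\alpha$-strings of $\mathcal{B}$ coincide exactly with those $\alpha$-strings of $\mathcal{A}$ that happen to lie inside $W$, so that the defining relation \eqref{Euclid.Trig.condition} for $\mathcal{A}$ restricts term-by-term to the corresponding relation for $\mathcal{B}$. The non-isotropy assumption is used only to guarantee that $(W,(\cdot,\cdot)|_W)$ is itself a Euclidean space, so that the notion of a Euclidean trigonometric $\vee$-system for $\mathcal{B}\subset W$ is meaningful.

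First I would fix $\alpha\in\mathcal{B}$ and observe that $\delta_\alpha(\mathcal{B})=\delta_\alpha(\mathcal{A})$: vectors collinear with $\alpha$ automatically belong to $W$, so they survive the intersection with $W$. Next I would prove the key geometric observation: every $\alpha$-string $\Gamma_\alpha^s(\mathcal{A})$ of $\mathcal{A}$ is either entirely contained in $W$ or entirely disjoint from $W$. This is immediate from the defining property of a string: for any $\gamma_1,\gamma_2\in\Gamma_\alpha^s(\mathcal{A})$ we have $\gamma_1\pm\gamma_2=m\alpha$ with $m\in\mathbb{Z}$, so if $\gamma_1\in W$ then $\gamma_2=\pm\gamma_1\mp m\alpha\in W$ since $\alpha\in W$.

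Next I would identify the $\alpha$-strings of $\mathcal{B}$ with the $\alpha$-strings of $\mathcal{A}$ that are contained in $W$. Given such a string $\Gamma\subseteq W$, it clearly satisfies the string condition inside $\mathcal{B}$, and it is maximal in $\mathcal{B}$: any $\gamma\in\mathcal{B}$ that is string-related (with respect to $\alpha$) to some element of $\Gamma$ automatically lies in $\mathcal{A}$ and so belongs to $\Gamma$ by maximality in $\mathcal{A}$. Conversely, any $\alpha$-string in $\mathcal{B}$ is obtained in this way, because it extends uniquely to a string of $\mathcal{A}$ which by the previous paragraph must still lie inside $W$.

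Finally, for $\alpha\in\mathcal{B}$ and an $\alpha$-string $\Gamma_\alpha^s(\mathcal{B})$, I would apply condition \eqref{Euclid.Trig.condition} for $\mathcal{A}$ to the corresponding $\alpha$-string $\Gamma_\alpha^s(\mathcal{A})\subseteq W$; since the multiplicity function on $\mathcal{B}$ is just the restriction of $c$ and the inner product on $W$ is the restriction of $(\cdot,\cdot)$, the two sums coincide, giving
$$\sum_{\beta\in\Gamma_\alpha^s(\mathcal{B})}c_\beta(\alpha,\beta)\,\alpha\wedge\beta \;=\;\sum_{\beta\in\Gamma_\alpha^s(\mathcal{A})}c_\beta(\alpha,\beta)\,\alpha\wedge\beta\;=\;0.$$
There is no real obstacle; the only mildly delicate point is the maximality check in the string identification, but this follows directly from the combinatorial definition of strings.
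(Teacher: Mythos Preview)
Your argument is correct and is the natural approach: since the paper itself only refers the reader to the analogous proof for trigonometric $\vee$-systems in \cite{Maali+Misha 2021}, your string-by-string identification (each $\alpha$-string of $\mathcal{A}$ with $\alpha\in W$ lies wholly inside or wholly outside $W$, so the strings of $\mathcal{B}$ are exactly the $\mathcal{A}$-strings contained in $W$) is presumably what that reference does in the Euclidean setting, where the form $(\cdot,\cdot)$ restricts unchanged. The only point worth noting is that, in contrast to the genuine trigonometric case of \cite{Maali+Misha 2021} where the form $G_{\mathcal{A},c}$ does \emph{not} simply restrict to $G_{\mathcal{B},c}$, here the Euclidean form restricts trivially, which is why your argument is so short; your remark that non-isotropy is needed only to make $(W,(\cdot,\cdot)|_W)$ a legitimate ambient space is accurate.
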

The proof of Theorem \ref{isotropic Euclid v-sys. is v-sys.} is similar to the proof in \cite{Maali+Misha 2021} of the analogous statement for the trigonometric $\vee$-system, see also \cite{Maali.2022}.
 
\section{Relation with other types of $\vee$-systems}\label{Section.Relations}
\subsection{Relation with trigonometric $\vee$-systems}\label{section.Euclid+Complex}
For a finite subset $\mathcal{A}\subset V$ with a multiplicity function $c\colon \mathcal{A} \to \mathbb{C},$ 
consider a bilinear form $G_{\mathcal{A},c}$ on $V$ given by
\begin{equation}\label{GA.for vectors}
G_{\mathcal{A},c}(x,y)=\sum_{\alpha\in \mathcal{A}}c_{\alpha}(\alpha, x)(\alpha , y),\quad x,y \in V,
\end{equation}
where $c_{\alpha}\coloneqq c(\alpha)$. 
%
Following an analogy with the rational case (see \cite{Misha&Veselov 2008} and subsection \ref{section.Euclid+Complex.1} below), we say that the pair $(\mathcal{A},c)$ is \textit{well-distributed} in $V$ if the
bilinear form \eqref{GA.for vectors} is proportional to the form $(\cdot,\cdot).$
The pair $(\mathcal{A},c)$ is called \textit{a trigonometric $\vee$-system} if it satisfies the relations 
\begin{equation}\label{Trig.condition.Vector}
\sum_{\beta\in \Gamma_{\alpha}^{s} }c_{\beta}G_{\mathcal{A},c}(\alpha, \beta)\alpha \wedge \beta=0
\end{equation}
for all $\alpha \in \mathcal{A}$ and all $\alpha$-strings $\Gamma_{\alpha}^{s}$.

Now let $(\mathcal{A},c)$ be a Euclidean trigonometric $\vee$-system.
%
Define a linear operator 
 $M\colon V\to V$ as
\begin{equation*}
M=\sum_{\beta\in \mathcal{A}}c_{\beta}\beta\otimes\beta,
\end{equation*}
that is, $M(u)=\sum_{\beta\in \mathcal{B}}c_{\beta}(\beta,u)\beta$ for any $u\in V$.  
%
%
%
%
The following statement takes place.
\begin{lemma}\label{decomposition of V.FF}
Let $(\mathcal{A},c)$ be a Euclidean trigonometric $\vee$-system.
Assume that the linear span $\langle \mathcal{A} \rangle = V.$
Then
\begin{enumerate}
\item
Any $\alpha\in \mathcal{A}$ is an eigenvector of $M$,  
%
\item
The vector space $V$ can be decomposed as
\begin{equation}\label{V-decomposition.FF}
V=V_{1}\oplus V_{2}\oplus\dots\oplus V_{k},\quad k\in \mathbb{N},
\end{equation}
where $M|_{V_{i}}=\lambda_{i}I,\,\lambda_{i}\in \mathbb{C},$ and $I$ is the identity operator, and $\lambda_i \neq \lambda_j$ for $i\neq j$. 
\end{enumerate}
\end {lemma}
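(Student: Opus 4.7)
The plan is to compute $M\alpha$ directly for every $\alpha\in\mathcal{A}$ by splitting the sum that defines $M$ along the decomposition $\mathcal{A}=\delta_\alpha\sqcup\bigsqcup_{s=1}^{k}\Gamma_\alpha^s$. First, I would isolate the contribution of vectors collinear to $\alpha$: for each $\gamma\in\delta_\alpha$ the summand $c_\gamma(\gamma,\alpha)\gamma$ is manifestly a scalar multiple of $\alpha$, so the total contribution of $\delta_\alpha$ to $M\alpha$ lies in $\mathbb{C}\alpha$.

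Next, for each $\alpha$-string $\Gamma_\alpha^s$, I would rewrite the defining Euclidean $\vee$-system relation
\[
\sum_{\beta\in\Gamma_\alpha^s}c_\beta(\alpha,\beta)\,\alpha\wedge\beta=0
\]
as $\alpha\wedge\bigl(\sum_{\beta\in\Gamma_\alpha^s}c_\beta(\alpha,\beta)\beta\bigr)=0$ in $\Lambda^2V$. Since $\alpha\neq 0$, this forces $\sum_{\beta\in\Gamma_\alpha^s}c_\beta(\alpha,\beta)\beta\in\mathbb{C}\alpha$, so each string again contributes a scalar multiple of $\alpha$ to $M\alpha$. Adding the $\delta_\alpha$-piece to the string contributions yields $M\alpha=\lambda_\alpha\alpha$ for some $\lambda_\alpha\in\mathbb{C}$, which is part (1).

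For part (2), the hypothesis $\langle\mathcal{A}\rangle=V$ lets me choose a basis of $V$ from $\mathcal{A}$; by part (1) this is a basis of eigenvectors of $M$, so $M$ is diagonalisable over $\mathbb{C}$. Collecting the eigenvectors according to their eigenvalues and writing $V_i=\ker(M-\lambda_iI)$ for the pairwise distinct eigenvalues $\lambda_1,\dots,\lambda_k$ occurring among the $\lambda_\alpha$ gives the required decomposition $V=V_1\oplus\cdots\oplus V_k$.

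The only conceptually delicate point is the second step, namely reading the $\vee$-system identity as the single assertion $\alpha\wedge v^{(s)}=0$ with $v^{(s)}:=\sum_{\beta\in\Gamma_\alpha^s}c_\beta(\alpha,\beta)\beta$; everything else is bookkeeping. I do not foresee any further obstacle, in particular diagonalisability is automatic from the existence of a spanning set of eigenvectors and does not require any symmetry argument for the bilinear form $(\cdot,\cdot)$.
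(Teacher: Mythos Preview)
Your argument is correct and is precisely the standard one the paper has in mind: the paper does not spell out a proof here but refers to \cite{Maali+Misha 2021, Maali.2022}, where the analogous statement for trigonometric $\vee$-systems is proved by exactly the same string-by-string computation of $M\alpha$ followed by the observation that a spanning set of eigenvectors forces diagonalisability. There is nothing to add.
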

The proof of Lemma \ref{decomposition of V.FF} is similar to the proof in \cite{Maali+Misha 2021} for the trigonometric $\vee$-system case (see also \cite{Maali.2022}).

Since $\mathcal{A}\subset V =V_{1}\oplus 
\dots\oplus V_{k},$
then $\mathcal{A}$ can be represented as a disjoint union
\begin{equation*}
\mathcal{A}=\mathcal{A}_{1}\sqcup 
\dots\sqcup \mathcal{A}_{k},    
\end{equation*}
where $\mathcal{A}_i \coloneqq \mathcal{A} \cap V_{i} \subset V_{i}$.
The following two lemmas relate the strings of vectors in $\mathcal{A}$ and its components $\mathcal{A}_i$.
\begin{lemma}\label{series in components of A.FF}
Let $\mathcal{A}$ be a Euclidean trigonometric $\vee$-system. 
Let $\alpha \in \mathcal{A}$ be such that
$\alpha\in V_{i}$ for some $i$. 
Consider an $\alpha$-string $\Gamma_{\alpha}^s$ in $\mathcal{A}_i$ and let $\beta \in \Gamma_{\alpha}^s.$
Then $\Gamma_{\alpha}^s \subset V_{i}$ or
$\Gamma_{\alpha}^s\subseteq\{\pm\beta\}.$ 
\end{lemma}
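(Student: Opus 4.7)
The plan is to apply Lemma~\ref{decomposition of V.FF} directly: since every vector of $\mathcal{A}$ is an eigenvector of $M$ and the eigenspaces $V_1,\dots,V_k$ have pairwise distinct eigenvalues $\lambda_1,\dots,\lambda_k$, each vector of $\mathcal{A}$ lies in \emph{exactly one} component $V_j$. In particular the given $\beta\in \Gamma_{\alpha}^{s}$ lies in a unique $V_j$, and the proof will split into the two cases $j=i$ and $j\neq i$, which correspond to the two alternatives in the conclusion.

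For arbitrary $\gamma\in \Gamma_{\alpha}^{s}$, the defining property of an $\alpha$-string applied to the pair $(\beta,\gamma)$ yields $m\in\mathbb{Z}$ and $\varepsilon\in\{\pm 1\}$ with $\gamma+\varepsilon\beta=m\alpha$, that is, $\gamma=m\alpha-\varepsilon\beta$. By the same eigenvector argument $\gamma$ itself lies in some single $V_\ell$, and one uses the uniqueness of the direct sum decomposition $V=V_1\oplus\dots\oplus V_k$ to pin down $\ell$ in each of the two cases.

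In the case $j=i$, the relation $\gamma=m\alpha-\varepsilon\beta$ already places $\gamma$ in $V_i$; since $\gamma\neq 0$ and $\gamma\in V_\ell$, the components $V_\ell$ and $V_i$ must coincide, so $\gamma\in V_i$. As $\gamma\in\Gamma_{\alpha}^{s}$ was arbitrary, $\Gamma_{\alpha}^{s}\subset V_i$. In the case $j\neq i$, the identity $\gamma=m\alpha-\varepsilon\beta$ is the decomposition of $\gamma$ along $V_i\oplus V_j$; since $\gamma$ lies in a single component $V_\ell$, uniqueness of decomposition forces the $V_i$-part $m\alpha$ to vanish, which gives $m=0$ (as $\alpha\neq 0$) and therefore $\gamma=-\varepsilon\beta=\pm\beta$. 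Hence $\Gamma_{\alpha}^{s}\subseteq\{\pm\beta\}$.

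The argument is essentially bookkeeping once Lemma~\ref{decomposition of V.FF} is invoked, and no serious obstacle is expected. The only subtlety worth flagging is making sure that one applies the string condition with $\beta$ fixed and $\gamma$ ranging over $\Gamma_{\alpha}^{s}$ (the case $\gamma=\beta$ being trivial), and that the non-vanishing of $\alpha$ and $\beta$ is used to rule out degenerate solutions to $m\alpha-\varepsilon\beta=0$ when drawing the conclusions $\ell=i$ in the first case and $m=0$ in the second.
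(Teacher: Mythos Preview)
Your proposal is correct and follows essentially the same two-case argument as the paper's proof: both split on whether $\beta\in V_i$ or $\beta\in V_j$ with $j\neq i$, then use the string relation $\gamma=m\alpha\pm\beta$ together with the direct-sum decomposition and the fact (Lemma~\ref{decomposition of V.FF}) that every element of $\mathcal{A}$ lies in a single $V_\ell$. The only point to tighten is in the second case: before concluding that the $V_i$-part $m\alpha$ vanishes you should explicitly rule out $\ell=i$ (which is immediate, since $\ell=i$ would force the $V_j$-part $-\varepsilon\beta$ to vanish, contradicting $\beta\neq 0$); the paper handles this by the equivalent observation that $\gamma\in V_i$ would force $\beta=m\alpha+\varepsilon\gamma\in V_i$.
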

\begin{proof}
For $\beta \in \Gamma_{\alpha}^s$ we have two possible cases. 

\textit{Case (i)}  $\beta\in V_{i}.$
Then for any $\gamma \in \Gamma_{\alpha}^s$ we have that 
$\gamma=m\alpha +\varepsilon \beta \in V_{i} $  for some $m\in\mathbb{Z}$ and $\varepsilon=\pm 1$.
%
Hence $\Gamma_{\alpha}^s \subset V_{i}.$
%
%

\textit{Case (ii)}  $\beta\notin V_{i}.$ Hence
$\beta\in V_{j}$ for some $j\neq i.$
Then for any $  \gamma \in  \Gamma_{\alpha}^s$ we have that $\gamma \in V_{i} $ or 
$\gamma \in V_{j} $ since decomposition \eqref{V-decomposition.FF} is the direct sum.
Note that
$\gamma\notin V_{i}$
as otherwise
we will have  $\beta=m\alpha + \varepsilon\gamma \in V_{i},  $ for some $m\in\mathbb{Z}$ and $\varepsilon =\pm 1,$
which is a contradiction.
Note also that $\gamma \notin V_{j}$ unless $\gamma=\pm \beta$ as otherwise we have
$m\alpha=\beta +\varepsilon \gamma \in V_{j}$ for some $m\in \mathbb{Z}$ and $\varepsilon =\pm 1$,
which is a contradiction.
Hence $\Gamma_{\alpha}^s \subseteq\{\pm \beta \}$.
\end{proof}
\begin{lemma}\label{seriesrelations. Components of A.FF}
%
Let $\alpha,\beta\in \mathcal{A}_i.$ 
Let ${^{\mathcal{A}}\Gamma_{\alpha}^{s}},\,
{^{\mathcal{A}_i}\Gamma_{\alpha}^t}$ be the $\alpha$-strings in $\mathcal{A}$ and $\mathcal{A}_i$ respectively containing $\beta.$ Then the set ${^{\mathcal{A}}\Gamma_{\alpha}^{s}}$ is equal to the set ${^{\mathcal{A}_i}\Gamma_{\alpha}^t}$. 
\end{lemma}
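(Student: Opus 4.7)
The plan is to deduce this from the preceding Lemma \ref{series in components of A.FF} together with the maximality clause in the definition of $\alpha$-strings. The only thing to verify is that once we know ${^{\mathcal{A}}\Gamma_{\alpha}^{s}}$ is contained in $\mathcal{A}_i$, it remains maximal when viewed inside $\mathcal{A}_i$, and hence coincides with ${^{\mathcal{A}_i}\Gamma_{\alpha}^t}$.

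First I would apply Lemma \ref{series in components of A.FF} to the $\alpha$-string ${^{\mathcal{A}}\Gamma_{\alpha}^{s}}$. Since $\alpha, \beta \in \mathcal{A}_i \subset V_i$, the element $\beta \in {^{\mathcal{A}}\Gamma_{\alpha}^{s}}$ lies in $V_i$. Therefore Case (ii) of that lemma cannot apply (or, if one interprets it in the trivial way, gives ${^{\mathcal{A}}\Gamma_{\alpha}^{s}} \subseteq \{\pm\beta\} \subset V_i$ anyway), so Case (i) yields ${^{\mathcal{A}}\Gamma_{\alpha}^{s}} \subset V_i$. Combined with ${^{\mathcal{A}}\Gamma_{\alpha}^{s}} \subset \mathcal{A}$, this gives the inclusion ${^{\mathcal{A}}\Gamma_{\alpha}^{s}} \subset \mathcal{A} \cap V_i = \mathcal{A}_i$.

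Next I would verify the two defining properties of an $\alpha$-string inside $\mathcal{A}_i$ for the set ${^{\mathcal{A}}\Gamma_{\alpha}^{s}}$. The pairwise condition $\gamma_1 \pm \gamma_2 = m\alpha$ is inherited from $\mathcal{A}$ verbatim. For maximality, suppose $\gamma \in \mathcal{A}_i$ satisfies $\pm \gamma + m\alpha = \delta$ for some $\delta \in {^{\mathcal{A}}\Gamma_{\alpha}^{s}}$ and $m \in \mathbb{Z}$. Since $\gamma \in \mathcal{A}$, maximality of ${^{\mathcal{A}}\Gamma_{\alpha}^{s}}$ as an $\alpha$-string in $\mathcal{A}$ forces $\gamma \in {^{\mathcal{A}}\Gamma_{\alpha}^{s}}$. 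Thus ${^{\mathcal{A}}\Gamma_{\alpha}^{s}}$ is a maximal $\alpha$-string in $\mathcal{A}_i$ containing $\beta$, and since the maximal $\alpha$-strings in $\mathcal{A}_i$ partition $\mathcal{A}_i \setminus \delta_\alpha$, uniqueness gives ${^{\mathcal{A}}\Gamma_{\alpha}^{s}} = {^{\mathcal{A}_i}\Gamma_{\alpha}^t}$.

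There is no genuine obstacle here; the argument is essentially bookkeeping. The only point that requires attention is that the definition of an $\alpha$-string involves two separate conditions (a pairwise relation plus maximality), and both must be transferred from the ambient $\mathcal{A}$ to the subsystem $\mathcal{A}_i$. The pairwise relation transfers trivially, while maximality transfers precisely because Lemma \ref{series in components of A.FF} confines the entire string to $V_i$, so no vector of $\mathcal{A}$ that would extend the string can escape $\mathcal{A}_i$.
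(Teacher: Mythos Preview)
Your argument is correct and follows essentially the same line as the paper's proof: both establish ${^{\mathcal{A}}\Gamma_{\alpha}^{s}} \subset \mathcal{A}_i$ (you via Lemma~\ref{series in components of A.FF}, the paper by directly writing $\gamma = m\alpha + \varepsilon\beta \in V_i$) and then invoke maximality. The only cosmetic difference is that the paper argues by double inclusion---showing ${^{\mathcal{A}}\Gamma_{\alpha}^{s}} \subseteq {^{\mathcal{A}_i}\Gamma_{\alpha}^t}$ via maximality of the latter and declaring the reverse inclusion obvious---whereas you verify that ${^{\mathcal{A}}\Gamma_{\alpha}^{s}}$ is itself a maximal $\alpha$-string in $\mathcal{A}_i$ and appeal to uniqueness of the partition; the content is the same.
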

\begin{proof}
Let $\gamma \in {^{\mathcal{A}}\Gamma_{\alpha}^{s}}.$ 
Then $\gamma = m\alpha +\varepsilon \beta \in \mathcal{A}_i,$ 
for some $m\in\mathbb{Z}$ and $\varepsilon =\pm 1$. 
Thus $\gamma \in {^{\mathcal{A}_i}\Gamma_{\alpha}^t} $ by the maximality of ${^{\mathcal{A}_i}\Gamma_{\alpha}^t}$. Hence ${^{\mathcal{A}}\Gamma_{\alpha}^{s}}\subseteq  {^{\mathcal{A}_i}\Gamma_{\alpha}^t}$.
The opposite inclusion is obvious. Therefore ${^{\mathcal{A}_i}\Gamma_{\alpha}^t}={^{\mathcal{A}}\Gamma_{\alpha}^{s}}$.
%
\end{proof}
%
%
Note that the operator $M$ is symmetric: $(M(u),v)=(u,M(v))$ for any $u,v\in V$. 
Hence its eigenspaces are orthogonal.
\begin{proposition}\label{orthogonal vectors in V decompos}
We have
$(u,v)=0$
for any $u \in V_i$ and $v \in V_j$ such that $i\neq j$.
\end{proposition}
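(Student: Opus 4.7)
The plan is to exploit the self-adjointness of $M$ with respect to the bilinear form $(\cdot,\cdot)$ and apply the standard argument that eigenspaces for distinct eigenvalues of a symmetric operator are orthogonal. This works verbatim in our setting since $(\cdot,\cdot)$ is a non-degenerate symmetric bilinear form on $V\cong\mathbb{C}^N$, even though it is not Hermitian.

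First I would verify that $M$ is self-adjoint. From the definition $M=\sum_{\beta\in\mathcal{A}}c_\beta\,\beta\otimes\beta$ and the symmetry of $(\cdot,\cdot)$ we have, for any $u,v\in V$,
\begin{equation*}
(Mu,v)=\sum_{\beta\in\mathcal{A}}c_\beta(\beta,u)(\beta,v)=\sum_{\beta\in\mathcal{A}}c_\beta(\beta,v)(\beta,u)=(u,Mv).
\end{equation*}
This was already noted in the paragraph preceding the proposition, so it can simply be invoked.

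Second, I would apply this to $u\in V_i$ and $v\in V_j$ with $i\neq j$. Using Lemma \ref{decomposition of V.FF} we obtain
\begin{equation*}
\lambda_i(u,v)=(Mu,v)=(u,Mv)=\lambda_j(u,v),
\end{equation*}
hence $(\lambda_i-\lambda_j)(u,v)=0$. Since $\lambda_i\neq\lambda_j$ by construction of the decomposition \eqref{V-decomposition.FF}, we conclude $(u,v)=0$, as required.

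There is no real obstacle; the only point requiring a moment of care is that we are working over $\mathbb{C}$ with a bilinear (not Hermitian) inner product, but the eigenspace-orthogonality argument is purely formal and depends only on the symmetry identity $(Mu,v)=(u,Mv)$, which holds here by inspection.
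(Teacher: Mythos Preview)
Your proof is correct and matches the paper's approach: the paper also just notes that $M$ is symmetric with respect to $(\cdot,\cdot)$ and concludes that its eigenspaces are orthogonal, which is exactly the argument you wrote out.
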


The following statement takes place.
\begin{lemma}\label{restricted bilinear.non-deg}
Restriction $(\cdot,\cdot)_i$ of the bilinear form $(\cdot,\cdot)$ onto the subspace $V_i$ is non-degenerate.
\end{lemma}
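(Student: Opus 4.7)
The plan is to deduce non-degeneracy of the restricted form $(\cdot,\cdot)_i$ from non-degeneracy of the ambient bilinear inner product $(\cdot,\cdot)$ on $V \cong \mathbb{C}^N$, using the orthogonality between the summands in the decomposition \eqref{V-decomposition.FF} provided by Proposition \ref{orthogonal vectors in V decompos}.

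Concretely, I would take an element $u \in V_i$ in the radical of $(\cdot,\cdot)_i$, that is, assume $(u,v)=0$ for every $v \in V_i$, and show that $u$ is then orthogonal to all of $V$. Any $x \in V$ decomposes uniquely as $x = x_1 + \dots + x_k$ with $x_j \in V_j$, so bilinearity yields
\[
(u,x) = \sum_{j=1}^k (u,x_j).
\]
The $j=i$ term vanishes by the assumption on $u$, and every $j \neq i$ term vanishes by Proposition \ref{orthogonal vectors in V decompos}. Hence $(u,x)=0$ for all $x \in V$, and non-degeneracy of $(\cdot,\cdot)$ on $V$ forces $u=0$, which gives the desired non-degeneracy of $(\cdot,\cdot)_i$.

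The argument is essentially a one-line linear algebra observation, so there is no real obstacle; the substantive work lies in Proposition \ref{orthogonal vectors in V decompos} (orthogonality of distinct eigenspaces of the symmetric operator $M$) and in Lemma \ref{decomposition of V.FF} (the decomposition itself), both of which we may invoke. The only additional input is non-degeneracy of the bilinear form on the ambient $V$, which is built into the standing setup of $V$ as a Euclidean space with a bilinear inner product.
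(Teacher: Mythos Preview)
Your proposal is correct and follows essentially the same approach as the paper: take an element of $V_i$ orthogonal to all of $V_i$, invoke Proposition~\ref{orthogonal vectors in V decompos} to see it is orthogonal to all of $V$, and conclude it vanishes by non-degeneracy of $(\cdot,\cdot)$. The paper's proof is slightly terser, omitting the explicit decomposition $x = x_1 + \dots + x_k$, but the argument is identical.
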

\begin{proof}
Suppose that $v \in V_i$ satisfies $(v,u)_i =0 $ for all $u\in V_i$.
By Proposition \ref{orthogonal vectors in V decompos} we have $(v,u)=0$ for all $ u \in V $. Hence $v=0$ since $(\cdot,\cdot)$ is non-degenerate.
\end{proof}
%
%
The following statement relates the Euclidean trigonometric $\vee$-systems and the trigonometric $\vee$-systems.
\begin{theorem}\label{ierrEucTri is Trig}
If $\mathcal{A}$ is a Euclidean trigonometric $\vee$-system then the subsystem $\mathcal{A}_i =\mathcal{A} \cap V_i$ is well-distributed in the subspace $V_i$ with the bilinear form $(\cdot, \cdot)_i$ for all $i$.
Furthermore, if the bilinear form 
$$G_{\mathcal{A}_i ,c}(u,v)=\sum_{\alpha \in \mathcal{A}_i}c_{\alpha}(\alpha,u)(\alpha,v), \quad u,v \in V_i$$
is non-degenerate on $V_i$ (equivalently,  $G_{\mathcal{A}_i,c}$ is non-zero), then $\mathcal{A}_i$ is a trigonometric $\vee$-system.
Moreover, $\mathcal{A}$ is a trigonometric $\vee$-system if the form $G_{\mathcal{A},c}$ is non-degenerate.
%
%
%
\end{theorem}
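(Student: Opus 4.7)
The plan is to exploit the operator $M=\sum_{\beta\in\mathcal{A}}c_\beta\,\beta\otimes\beta$ and its spectral decomposition from Lemma \ref{decomposition of V.FF}, together with the basic identity $G_{\mathcal{A},c}(u,v)=(Mu,v)$ valid for all $u,v\in V$. Because $M|_{V_i}=\lambda_i I$ and the subspaces $V_i$ are pairwise orthogonal with respect to $(\cdot,\cdot)$ by Proposition \ref{orthogonal vectors in V decompos}, the three assertions of the theorem will reduce to short computations once this spectral picture is combined with the string-bookkeeping Lemmas \ref{series in components of A.FF} and \ref{seriesrelations. Components of A.FF}.

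For the first assertion, fix $i$ and take $u,v\in V_i$. By Proposition \ref{orthogonal vectors in V decompos}, $(\beta,u)=(\beta,v)=0$ for every $\beta\in\mathcal{A}_j$ with $j\ne i$, so the corresponding terms drop out of the sum defining $G_{\mathcal{A},c}(u,v)$. Hence $G_{\mathcal{A}_i,c}(u,v)=G_{\mathcal{A},c}(u,v)=(Mu,v)=\lambda_i(u,v)_i$, which proves well-distributedness of $\mathcal{A}_i$ in $V_i$ with the form $(\cdot,\cdot)_i$. Since $(\cdot,\cdot)_i$ is non-degenerate by Lemma \ref{restricted bilinear.non-deg}, non-degeneracy of $G_{\mathcal{A}_i,c}$ is equivalent to $\lambda_i\ne 0$, which in turn is equivalent to $G_{\mathcal{A}_i,c}\not\equiv 0$.

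For the trigonometric $\vee$-system condition on $\mathcal{A}_i$, fix $\alpha\in\mathcal{A}_i$ and an $\alpha$-string $\Gamma$ in $\mathcal{A}_i$. Lemma \ref{seriesrelations. Components of A.FF} identifies $\Gamma$ with the corresponding $\alpha$-string in $\mathcal{A}$, so invoking the Euclidean trigonometric $\vee$-system condition on $\mathcal{A}$ and the first assertion gives
\begin{equation*}
\sum_{\beta\in\Gamma}c_\beta\,G_{\mathcal{A}_i,c}(\alpha,\beta)\,\alpha\wedge\beta=\lambda_i\sum_{\beta\in\Gamma}c_\beta\,(\alpha,\beta)\,\alpha\wedge\beta=0.
\end{equation*}
For the last assertion, fix $\alpha\in\mathcal{A}\cap V_i$ and an $\alpha$-string $\Gamma$ in $\mathcal{A}$. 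By Lemma \ref{series in components of A.FF} either $\Gamma\subset V_i$, in which case the same computation with $\mathcal{A}$ in place of $\mathcal{A}_i$ applies, or $\Gamma\subseteq\{\pm\beta\}$ with $\beta\in V_j$, $j\ne i$, in which case $(\alpha,\beta)=0$ by Proposition \ref{orthogonal vectors in V decompos}, so $G_{\mathcal{A},c}(\alpha,\beta)=\lambda_i(\alpha,\beta)=0$ and the sum is trivially zero. The only delicate point in the whole argument is the string matching supplied by Lemmas \ref{series in components of A.FF} and \ref{seriesrelations. Components of A.FF}; everything else is direct linear algebra built on the eigenspace decomposition of $M$.
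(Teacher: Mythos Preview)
Your proof is correct and follows essentially the same approach as the paper: both use the identity $G_{\mathcal{A},c}(u,v)=(Mu,v)$ together with $M|_{V_i}=\lambda_i I$ and the orthogonality of the $V_i$ (Proposition \ref{orthogonal vectors in V decompos}) to get well-distributedness, then invoke the string-matching Lemmas \ref{series in components of A.FF} and \ref{seriesrelations. Components of A.FF} to transfer the Euclidean $\vee$-condition on $\mathcal{A}$ to the trigonometric $\vee$-condition on $\mathcal{A}_i$ and on $\mathcal{A}$. The only cosmetic difference is that in the case $\Gamma\subseteq\{\pm\beta\}$ with $\beta\in V_j$, the paper observes $G_{\mathcal{A},c}(\alpha,\beta)=\lambda_i(\alpha,\beta)=\lambda_j(\alpha,\beta)$ with $\lambda_i\ne\lambda_j$ to force the vanishing, whereas you go straight through Proposition \ref{orthogonal vectors in V decompos}; these are equivalent.
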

\begin{proof}
By Lemma \ref{decomposition of V.FF} we have $M|_{V_i}= \lambda_i I$. 
Hence for any $u\in V_i$ and $v \in V$ we have
\begin{equation}\label{GA on V_i}
G_{\mathcal{A},c}(u,v) =(M(u),v)=\lambda_i (u,v).
\end{equation}
Note also that by Proposition \ref{orthogonal vectors in V decompos} 
we have 
that
\begin{equation}\label{GA in A and A_i}
  G_{\mathcal{A}_i,c}(u,v)
= G_{\mathcal{A},c}(u,v).
\end{equation}
Thus the subsystem $\mathcal{A}_i$ is well-distributed in the subspace $V_i$.

Let us now assume that $G_{\mathcal{A}_i,c}$ is non-degenerate on $V_i$, that is $\lambda_i \neq 0$.
Let $\alpha \in \mathcal{A}_i$. Consider an $\alpha$-string $\Gamma_{\alpha}^t$ in $\mathcal{A}_i.$
Then by Lemmas \ref{series in components of A.FF}, \ref{seriesrelations. Components of A.FF} and formulas \eqref{GA on V_i}, \eqref{GA in A and A_i} we have 
\begin{equation}\label{v-cond.1}
\sum_{\beta\in \Gamma_{\alpha}^t} c_\beta G_{\mathcal{A}_i,c}(\alpha,\beta) \alpha \wedge \beta =\lambda_i \sum_{\beta\in \Gamma_{\alpha}^t} c_\beta (\alpha,\beta) \alpha \wedge \beta = 0
\end{equation}
since $\mathcal{A}$ is a Euclidean trigonometric $\vee$-system.
This proves that $\mathcal{A}_i$ is a trigonometric $\vee$-system.
Finally, for $\alpha \in \mathcal{A}_i$ let us consider its $\alpha$-string $^{\mathcal{A}}\Gamma_{\alpha}^s$ in $\mathcal{A}$.
If $^{\mathcal{A}}\Gamma_{\alpha}^s \subset V_i$ then 
$\sum_{\beta \in {^{\mathcal{A}}\Gamma_{\alpha}^s }} c_{\beta}G_{\mathcal{A},c}(\alpha, \beta)\alpha \wedge \beta =0$ by Lemma \ref{seriesrelations. Components of A.FF} and \eqref{v-cond.1}.
If $^{\mathcal{A}}\Gamma_{\alpha}^s \not\subset V_i$ then $^{\mathcal{A}}\Gamma_{\alpha}^s \subseteq \{\pm \beta  \}$ for some $\beta \in V_j, j\neq i$, by Lemma \ref{series in components of A.FF}.
Then $G_{\mathcal{A},c}(\alpha,\beta)=\lambda_i (\alpha ,\beta)=\lambda_j (\alpha,\beta)$ by \eqref{GA on V_i}.
Hence $G_{\mathcal{A},c}(\alpha,\beta)=0$ and the trigonometric $\vee$-system condition holds.
\end{proof}
Let $U\subseteq{V}$ be a linear subspace such that $\langle \mathcal{A} \cap U \rangle =U$. The following statement takes place.
%
%
\begin{proposition}\label{well-dis on 2-dim}
Let $(\mathcal{A},c)$ be a Euclidean trigonometric $\vee$-system.
Then the set of vectors $\mathcal{A}\cap U$ 
with the multiplicity function $c|_{\mathcal{A}\cap U}$
is well-distributed in $U$ or the system $\mathcal{A}\cap U$ is reducible.
%
%
%
\end{proposition}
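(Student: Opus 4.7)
The plan is to regard $\mathcal{B}:=\mathcal{A}\cap U$ as a Euclidean trigonometric $\vee$-system in its own right, apply Lemma~\ref{decomposition of V.FF} to $\mathcal{B}$ inside $U$, and read off the two alternatives of the proposition from the two possible outcomes of that decomposition. First, Theorem~\ref{isotropic Euclid v-sys. is v-sys.} tells us that $(\mathcal{B}, c|_{\mathcal{B}})$ is itself a Euclidean trigonometric $\vee$-system with respect to the restriction of $(\cdot,\cdot)$ to $U$. Since $\langle \mathcal{B}\rangle = U$ by hypothesis, Lemma~\ref{decomposition of V.FF} applies intrinsically to $\mathcal{B}$ in $U$: introducing
\[
M_U(u) = \sum_{\beta \in \mathcal{B}} c_\beta\, (\beta, u)\beta, \quad u \in U,
\]
we obtain a decomposition $U = U_1 \oplus \dots \oplus U_k$ into the distinct eigenspaces of $M_U$, with eigenvalues $\lambda_1, \dots, \lambda_k$, together with the partition $\mathcal{B} = \bigsqcup_{i=1}^k (\mathcal{B}\cap U_i)$. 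Each piece is non-empty: by part~(1) of the lemma every $\beta \in \mathcal{B}$ is an eigenvector of $M_U$, so if some $\mathcal{B}\cap U_i$ were empty then $\langle\mathcal{B}\rangle$ could not reach $U_i$, contradicting $\langle\mathcal{B}\rangle = U$.

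Now I would split into two cases. If $k=1$, then $M_U = \lambda_1 I$ on $U$, and consequently
\[
G_{\mathcal{B}, c}(x,y) = (M_U x, y) = \lambda_1 (x,y) \quad \text{for all } x,y \in U,
\]
which is precisely the statement that $\mathcal{B}$ is well-distributed in $U$. If $k \geq 2$, then since each $U_i$ and each complementary sum $U_2 \oplus \cdots \oplus U_k$ is a linear subspace of $V$ contained in $U$, one has $\mathcal{B} \cap U_1 = \mathcal{A}\cap U_1$ and $\mathcal{B}\cap(U_2\oplus\cdots\oplus U_k) = \mathcal{A}\cap(U_2\oplus\cdots\oplus U_k)$, exhibiting $\mathcal{B}$ as a disjoint union of two non-empty subsystems of $\mathcal{A}$ in the sense of Definition~\ref{Subsys. of Euclidean trig.sys}. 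Hence $\mathcal{B}$ is reducible.

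The main obstacle is justifying the applicability of Theorem~\ref{isotropic Euclid v-sys. is v-sys.}, which requires $U$ to be non-isotropic, i.e.\ $(\cdot,\cdot)|_U$ non-degenerate. One expects this to follow from $\langle \mathcal{B}\rangle=U$ together with an argument along the lines of Lemma~\ref{restricted bilinear.non-deg}: symmetry of $M_U$ with respect to $(\cdot,\cdot)|_U$ ensures its eigenspaces are mutually orthogonal, and degeneracy of $(\cdot,\cdot)|_U$ on any $U_i$ would conflict with $M_U$ being non-trivial on that piece. Once this non-isotropy check is in place, the remainder is a direct application of the eigenspace decomposition of the intrinsic operator $M_U$ attached to the configuration $(\mathcal{B}, c|_{\mathcal{B}})$.
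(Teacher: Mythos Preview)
Your detour through Theorem~\ref{isotropic Euclid v-sys. is v-sys.} creates a genuine gap that you yourself flag but do not close. That theorem requires $(\cdot,\cdot)|_U$ to be non-degenerate, which is not part of the hypotheses of the proposition. Your attempted justification is circular: you want to argue non-isotropy via the eigenspaces $U_i$ of $M_U$, but those eigenspaces come from Lemma~\ref{decomposition of V.FF} applied to $\mathcal{B}$, which in turn presupposes that $\mathcal{B}$ is already a Euclidean trigonometric $\vee$-system --- exactly what you need non-isotropy to establish. The sentence ``degeneracy of $(\cdot,\cdot)|_U$ on any $U_i$ would conflict with $M_U$ being non-trivial on that piece'' is also not a valid argument: a symmetric operator on a space with a degenerate form can perfectly well be non-trivial.

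The paper sidesteps this entirely. Rather than first promoting $\mathcal{B}=\mathcal{A}\cap U$ to a $\vee$-system in its own right, it shows \emph{directly} that every $\alpha\in\mathcal{B}$ is an eigenvector of $M_U$, using the $\vee$-condition~\eqref{Euclid.Trig.condition} of the ambient system $\mathcal{A}$. The key observation is elementary: if $\alpha,\beta\in U$, then the entire $\alpha$-string through $\beta$ in $\mathcal{A}$ lies in $U$, since every element of that string has the form $\pm\beta+m\alpha$. Hence summing~\eqref{Euclid.Trig.condition} over all $\alpha$-strings contained in $U$ (and adding the collinear contribution from $\delta_\alpha\subset U$) yields precisely $M_U(\alpha)=\lambda(\alpha)\alpha$. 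No non-isotropy is needed. After this point your argument and the paper's coincide: if the eigenvalue $\lambda(\alpha)$ varies, the eigenspaces of $M_U$ split $\mathcal{B}$ into non-empty subsystems, so irreducibility forces $M_U=\lambda I$ on $U$ and hence well-distribution.
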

\begin{proof}
%
Define a linear operator $M_{U}\colon U \to U$ by
$$M_{U}\coloneqq\sum_{\beta\in \mathcal{A}\cap U}c_{\beta}\beta\otimes \beta.$$ 
Let $\alpha \in \mathcal{A}\cap U$. 
Let us sum up the Euclidean trigonometric $\vee$-condition \eqref{Euclid.Trig.condition} over $\alpha$-strings which belong to the subspace $U$. Then
\begin{equation*}
\sum_{\beta\in \mathcal{A}\cap U}c_{\beta}(\beta,\alpha)\beta
=M_U(\alpha)
=\lambda\alpha
\end{equation*}
for some $\lambda=\lambda(\alpha).$
Suppose that $\mathcal{A}\cap U$ is irreducible. Then $\lambda$ does not depend on $\alpha$ and $M_{U}=\lambda I$. Therefore 
$$\sum_{\beta\in \mathcal{A}\cap U}c_{\beta}(\beta,u)(\beta,v)=(M_{U}(u),v)=\lambda(u,v),$$
and the pair $(\mathcal{A}\cap U, c|_{\mathcal{A}\cap U})$ is well-distributed.
%
%
%
%
%
%
 %
\end{proof}
Suppose that $G_{\mathcal{A},c}$ is non-degenerate and define the vector $\alpha^{\vee}\in V$ by the relation
\begin{equation}\label{dual vector}
    G_{\mathcal{A},c}(\alpha^{\vee},x)=(\alpha,x)
\end{equation}
for all $x\in V$.
Now assume that $\alpha \in V_i$ in which case we also have $\alpha^{\vee}\in V_i$ for some $i$. Then by Lemma \ref{decomposition of V.FF} we have 
\begin{equation}\label{GA for the dual vector}
G_{\mathcal{A},c}(\alpha^{\vee},x) 
=(M_U(\alpha^{\vee}),x)
=\lambda_i (\alpha^{\vee},x).
\end{equation}
Hence from relations \eqref{dual vector}, \eqref{GA for the dual vector} we have 
that $\alpha^{\vee}={\lambda_i}^{-1} \alpha$. 
Therefore if the pair $(\mathcal{A},c)$ satisfies conditions \eqref{Trig.condition.Vector} then for all $\alpha$-strings $\Gamma_{\alpha}^{s}$
we have
%
%
\begin{equation*}
   \sum_{\beta\in \Gamma_{\alpha}^{s} }c_{\beta}G_{\mathcal{A},c}(\alpha^{\vee}, \beta^{\vee})\alpha \wedge \beta
   =
   \lambda_{i}^{-2} \sum_{\beta\in \Gamma_{\alpha}^{s} }c_{\beta}G_{\mathcal{A},c}(\alpha, \beta)\alpha \wedge \beta=0.
\end{equation*}
These conditions coincide with the definition of the trigonometric $\vee$-system given in \cites{Misha2009} so the two definitions are equivalent. 
Let us now introduce an inner product $\langle \cdot ,\cdot\rangle$ on $V$ as
\begin{equation}\label{new inner product}
\langle u,v \rangle\coloneqq G_{\mathcal{A},c}(u^{\vee},v^{\vee}), \quad u,v \in V .
\end{equation}

The following statement is immediate.
\begin{proposition}
Let $(\mathcal{A},c)$ be a trigonometric $\vee$-system.
Then  $(\mathcal{A},c)$ is a Euclidean trigonometric $\vee$-system with respect to the bilinear form \eqref{new inner product}.
\end{proposition}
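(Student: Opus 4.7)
The key observation is that the $\alpha$-strings $\Gamma_\alpha^s$ are defined purely through the additive/lattice structure of $\mathcal{A}$---the conditions $\gamma_1 \pm \gamma_2 = m\alpha$ with $m\in\mathbb{Z}$---and do not reference any inner product on $V$. Consequently the family of strings over which one sums is the same whether $V$ is equipped with $(\cdot,\cdot)$ or with $\langle\cdot,\cdot\rangle$, so it suffices to verify the defining identity \eqref{Euclid.Trig.condition} for $\langle\cdot,\cdot\rangle$ on precisely those same strings.

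With that noted, the plan is to unwind the definition \eqref{new inner product} and rewrite the Euclidean trigonometric $\vee$-condition \eqref{Euclid.Trig.condition} with respect to $\langle\cdot,\cdot\rangle$ as
\[
\sum_{\beta\in\Gamma_\alpha^s} c_\beta \langle\alpha,\beta\rangle\, \alpha\wedge\beta \;=\; \sum_{\beta\in\Gamma_\alpha^s} c_\beta \, G_{\mathcal{A},c}(\alpha^\vee,\beta^\vee)\, \alpha\wedge\beta.
\]
The right-hand side is precisely the trigonometric $\vee$-condition in the dual formulation used in \cite{Misha2009}. The paragraph immediately preceding the proposition already establishes---via the eigenspace decomposition of $M$ and the identity $\alpha^\vee = \lambda_i^{-1}\alpha$ valid for $\alpha\in V_i$---that this dual formulation is equivalent to the paper's definition \eqref{Trig.condition.Vector}. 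Hence the right-hand side vanishes by the assumption that $(\mathcal{A},c)$ is a trigonometric $\vee$-system, which proves \eqref{Euclid.Trig.condition} for $\langle\cdot,\cdot\rangle$.

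Finally, I would record that $\langle\cdot,\cdot\rangle$ is a bona fide non-degenerate symmetric bilinear form: symmetry is immediate from the symmetry of $G_{\mathcal{A},c}$, while non-degeneracy follows from non-degeneracy of $G_{\mathcal{A},c}$ (tacit in the well-posedness of the assignment $u\mapsto u^\vee$) combined with the bijectivity of $u\mapsto u^\vee$. The only nontrivial ingredient is the equivalence between \eqref{Trig.condition.Vector} and its $\alpha^\vee$-form; since this is treated in the immediately preceding exposition, the proof of the proposition itself reduces to a mechanical unwinding of the definition \eqref{new inner product}.
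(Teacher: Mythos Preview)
Your proposal is correct and matches the paper's approach exactly: the paper simply declares the proposition ``immediate'' following the preceding computation showing $\sum_{\beta\in\Gamma_\alpha^s} c_\beta G_{\mathcal{A},c}(\alpha^\vee,\beta^\vee)\,\alpha\wedge\beta = 0$, and your argument is precisely the unwinding of $\langle\alpha,\beta\rangle = G_{\mathcal{A},c}(\alpha^\vee,\beta^\vee)$ that makes that immediacy explicit. Your observation that the $\alpha$-strings are defined without reference to the inner product is the one point worth stating, and you state it.
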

\subsection{Relation with complex Euclidean $\vee$-systems}\label{section.Euclid+Complex.1}
%
%
Following \cite{Misha&Veselov 2008}, let us recall the notion of the (rational) complex Euclidean $\vee$-system. 
Let $V$ be a complex vector space with a non-degenerate bilinear form $(\cdot ,\cdot).$ 
Let $\mathcal{A}\subset V$ be a finite set of vectors. 
Consider the canonical form
$$G_{\mathcal{A}}^{r}(x,y)=\sum_{\alpha\in \mathcal{A}}(\alpha, x)(\alpha , y), \quad x,y \in V.$$
Suppose that $G_{\mathcal{A}}^{r}$ is proportional to the form $(\cdot,\cdot)$.
Let $\pi \subseteq{V}$ be a two-dimensional subspace such that $\langle \mathcal{A} \cap \pi \rangle =\pi$.
 $\mathcal{A}$ is said to be a \textit{ (rational) complex Euclidean $\vee$-system} if for any such $\pi$ the subsystem $\mathcal{B}=\mathcal{A}\cap \pi$ is reducible or the corresponding form ${G_{\mathcal{B}}^r}|_{\pi}$ is proportional to $(\cdot,\cdot)|_{\pi}$.
%
 
%
%
\begin{proposition}
Let $F=\sum_{\alpha\in \mathcal{A}}(\alpha,x)^2 \log (\alpha,x)$.
Suppose that $F$ satisfies the commutativity equations and $\mathcal{A}$ is irreducible. Then $\mathcal{A}$ is a (rational) complex Euclidean $\vee$-system. Moreover, if for a two-dimensional plane $\pi \subset V$ the subsystem $\mathcal{A}\cap \pi$ is reducible then the corresponding directions are orthogonal with respect to the form $(\cdot ,\cdot)$.
\end{proposition}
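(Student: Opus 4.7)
My plan is to mirror the strategy of Theorem~\ref{FF and Euclidean trig}, now in the rational setting: first turn the commutativity equations for $F$ into a rational identity over pairs from $\mathcal{A}$, then extract the two-plane condition defining a complex Euclidean $\vee$-system by taking successive residues at the polar hyperplanes of the identity.

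Concretely, I would first compute $F_{ijk}=2\sum_{\alpha\in\mathcal{A}}\alpha_i\alpha_j\alpha_k/(\alpha,x)$ and substitute into the matrix commutator $[F_i,F_j]$. Contracting with auxiliary vectors $a,b\in V$ and using the same symmetrization/antisymmetrization manipulations as in the proof of Lemma~\ref{main 1.FF}, the commutativity equations \eqref{FF=FF.1} for this $F$ become equivalent to the rational analog of \eqref{identity 1 for FF}, namely
\begin{equation*}
\sum_{\alpha,\beta\in\mathcal{A}}\frac{(\alpha,\beta)}{(\alpha,x)(\beta,x)}B_{\alpha,\beta}(a,b)\,\alpha\wedge\beta \;=\; 0 \qquad \text{for all } a,b\in V.
\end{equation*}

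Next, for a fixed $\alpha_0\in\mathcal{A}$, I would take the residue of this identity at the hyperplane $(\alpha_0,x)=0$. Handling collinear contributions through the constant $C_{\delta_{\alpha_0}}^{\alpha_0}$ (assumed non-zero, as in Lemma~\ref{iden 2.FF.with collinear vectors}), this yields an identity on $H\coloneqq\{(\alpha_0,x)=0\}$ in which the remaining poles correspond to two-planes $\pi\ni\alpha_0$: $(\beta,x)|_H$ and $(\beta',x)|_H$ are proportional exactly when $\beta$ and $\beta'$ span the same two-plane through $\alpha_0$. Fix such a $\pi$ and $\beta_0\in\pi$ linearly independent of $\alpha_0$, and write $\beta=\nu_\beta\alpha_0+\mu_\beta\beta_0$ for $\beta\in\mathcal{A}\cap\pi$; then $\alpha_0\wedge\beta=\mu_\beta(\alpha_0\wedge\beta_0)$ and $B_{\alpha_0,\beta}(a,b)=\mu_\beta B_{\alpha_0,\beta_0}(a,b)$, so the residue at the corresponding pole collapses to the single scalar identity $\sum_{\beta\in\mathcal{A}\cap\pi}c_\beta\mu_\beta(\alpha_0,\beta)=0$. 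Expanding the operator $M_\pi(\alpha_0)\coloneqq\sum_{\beta\in\mathcal{A}\cap\pi}c_\beta(\alpha_0,\beta)\beta$ in the basis $\{\alpha_0,\beta_0\}$ of $\pi$ shows that this is precisely the statement $M_\pi(\alpha_0)\in\mathbb{C}\alpha_0$.

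Since this holds for every $\alpha_0\in\mathcal{A}\cap\pi$, every element of $\mathcal{A}\cap\pi$ is an eigenvector of the $(\cdot,\cdot)$-symmetric operator $M_\pi$ on $\pi$. If $M_\pi$ is scalar then $G_{\mathcal{A}\cap\pi}^r|_\pi$ is proportional to $(\cdot,\cdot)|_\pi$; otherwise $M_\pi$ has two distinct eigenvalues, $\mathcal{A}\cap\pi$ lies in the union of the two eigenlines, and these lines are $(\cdot,\cdot)$-orthogonal by self-adjointness of $M_\pi$, which yields both the reducibility of $\mathcal{A}\cap\pi$ and the orthogonality claim in the ``moreover'' part. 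Finally, summing the two-plane condition over all $\pi\ni\alpha_0$ gives $M(\alpha_0)\in\mathbb{C}\alpha_0$ for the global operator $M(u)=\sum_{\alpha\in\mathcal{A}}c_\alpha(\alpha,u)\alpha$; by irreducibility of $\mathcal{A}$ the orthogonal eigenspace decomposition of the symmetric $M$ on $\langle\mathcal{A}\rangle$ must be trivial, so $M$ is a scalar on $\langle\mathcal{A}\rangle$ and $G_{\mathcal{A}}^r$ is proportional to $(\cdot,\cdot)$ as required. The main technical obstacle is the first step, reducing the matrix commutator to the clean wedge identity with the correct bookkeeping of collinear vectors via $C_{\delta_{\alpha_0}}^{\alpha_0}$, together with justifying each residue extraction as a genuine vanishing statement on the appropriate stratum; once the identity is in hand, the remaining arguments run close in parallel to Lemmas~\ref{decomposition of V.FF} and~\ref{restricted bilinear.non-deg}.
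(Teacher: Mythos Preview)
Your proposal is correct and follows essentially the same route as the paper's proof: both reduce the commutativity equations to the rational identity $\sum_{\alpha,\beta}\frac{(\alpha,\beta)}{(\alpha,x)(\beta,x)}(\alpha\wedge\beta)^{\otimes 2}=0$ and then extract the two-plane condition $\sum_{\beta\in\pi\cap\mathcal{A}}(\alpha,\beta)\,\alpha\wedge\beta=0$, the paper deferring the remaining details to \cite{Misha&Veselov 2008}. Your eigenvector analysis of $M_\pi$ and the irreducibility argument for the global well-distributed property are precisely those details; the only slip is the stray $c_\beta$, since in this proposition all multiplicities are~$1$.
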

Proof is similar to \cite{Misha&Veselov 2008}. The substitution of $F$ into the commutativity equations gives the condition
$$\sum_{\alpha,\beta \in \mathcal{A}}\frac{(\alpha,\beta)}{(\alpha,x)(\beta ,x)}(\alpha \wedge \beta)(\alpha \wedge \beta)=0.$$
It implies that for any two-dimensional plane $\pi$
$$\sum_{\beta \in \pi\cap {\mathcal A}}(\alpha ,\beta)\alpha \wedge \beta =0. $$

The following statement relates the Euclidean trigonometric $\vee$-system and the (rational) complex Euclidean $\vee$-system.
\begin{proposition}
Let $(\mathcal{A},c)$ be an irreducible Euclidean trigonometric $\vee$-system.   
Then the set of vectors $\sqrt{c_\alpha} \alpha,\, \alpha \in \mathcal{A}$, is a (rational) complex Euclidean $\vee$-system.
\end{proposition}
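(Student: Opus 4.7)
The plan is to verify directly the two defining properties of a (rational) complex Euclidean $\vee$-system for the rescaled configuration $\widetilde{\mathcal{A}}=\{\sqrt{c_\alpha}\,\alpha : \alpha \in \mathcal{A}\}$: global proportionality of the canonical form $G^{r}_{\widetilde{\mathcal{A}}}$ to $(\cdot,\cdot)$, and the two-plane dichotomy (reducibility or proportionality on each two-dimensional subspace spanned by $\widetilde{\mathcal{A}}\cap \pi$). The crucial identity, obtained by absorbing the square-root weights, is
$$G^{r}_{\widetilde{\mathcal{A}}}(x,y)=\sum_{\alpha\in\mathcal{A}}(\sqrt{c_\alpha}\,\alpha,x)(\sqrt{c_\alpha}\,\alpha,y)=G_{\mathcal{A},c}(x,y)=(M(x),y).$$

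For the global proportionality, I would invoke Lemma \ref{decomposition of V.FF} to obtain the spectral decomposition $V=V_1\oplus\cdots\oplus V_k$ of $M$ on $\langle\mathcal{A}\rangle$, and then use the hypothesis that $\mathcal{A}$ is irreducible to force $k=1$. Indeed, the induced decomposition $\mathcal{A}=\sqcup_i(\mathcal{A}\cap V_i)$ exhibits $\mathcal{A}$ as a disjoint union of subsystems (in the sense of Definition \ref{Subsys. of Euclidean trig.sys}, since each $\mathcal{A}\cap V_i$ is an intersection with a linear subspace), so $k\geq 2$ would contradict irreducibility. Hence $M=\lambda I$ and $G^{r}_{\widetilde{\mathcal{A}}}=\lambda(\cdot,\cdot)$.

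For the planar dichotomy, let $\pi\subset V$ be a two-dimensional subspace with $\langle\widetilde{\mathcal{A}}\cap \pi\rangle=\pi$. Since the rescalings are nonzero, $\widetilde{\mathcal{A}}\cap\pi$ is just the rescaled image of $\mathcal{A}\cap\pi$ and, in particular, $\langle\mathcal{A}\cap\pi\rangle=\pi$. Thus Proposition \ref{well-dis on 2-dim} applies. In the well-distributed case, $G_{\mathcal{A}\cap\pi,\,c|_{\mathcal{A}\cap\pi}}=\mu(\cdot,\cdot)|_\pi$; but the left-hand side coincides with $G^{r}_{\widetilde{\mathcal{A}}\cap\pi}$ by the same identity as above, yielding the required proportionality on $\pi$. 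In the reducible case, $\mathcal{A}\cap\pi=\mathcal{B}_1\sqcup\mathcal{B}_2$ with $\mathcal{B}_j=\mathcal{A}\cap U_j$; a dimension count in the two-dimensional $\pi$ forces each $U_j$ to be a line, and the rescaling transports this partition to an analogous splitting $\widetilde{\mathcal{A}}\cap\pi=\widetilde{\mathcal{B}}_1\sqcup\widetilde{\mathcal{B}}_2$, so $\widetilde{\mathcal{A}}\cap\pi$ is reducible as well.

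The only real subtlety I anticipate is the bookkeeping between the two ambient notions of reducibility — reducibility of $\mathcal{A}$ in $V$, used to collapse the spectral decomposition to one summand, and reducibility of $\mathcal{A}\cap\pi$ in $\pi$, used on the plane — together with the trivial but repeatedly invoked fact that the nonzero rescaling $\alpha\mapsto\sqrt{c_\alpha}\,\alpha$ preserves spans, intersections with subspaces, canonical forms (with the weight absorbed), and hence both reducibility and well-distributedness. No further calculation is needed.
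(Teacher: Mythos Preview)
Your proposal is correct and follows essentially the same route as the paper: collapse the spectral decomposition of Lemma~\ref{decomposition of V.FF} to a single block via irreducibility to obtain $M=\lambda I$ (global well-distributedness), and then invoke Proposition~\ref{well-dis on 2-dim} for the two-plane dichotomy. You are in fact more explicit than the paper in spelling out the identification $G^{r}_{\widetilde{\mathcal{A}}}=G_{\mathcal{A},c}$ and in transporting the reducible case through the rescaling, but the underlying argument is the same.
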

\begin{proof}
Firstly, since $\mathcal{A}$ is irreducible then by Lemma \ref{decomposition of V.FF} we have
$\mathcal{A}=\mathcal{A}_1 \subset V=V_1$ and $M|_{V_1}=\lambda_1 I.$
Then by Theorem \ref{ierrEucTri is Trig} we have that $\mathcal{A}_1 =\mathcal{A}$ is well-distributed. 

Secondly, by Proposition \ref{well-dis on 2-dim} we have that any subsystem of $\mathcal{A}$ is well-distributed or reducible, which implies the statement. 
\end{proof}

%
\section{Restricted solutions of commutativity equations}\label{section.restrictions of Euclid.trig}
In this Section we apply the restriction procedure to a given solution to the commutativity equations.
This gives new solutions of the commutativity equations.
%

Let 
$ \mathcal{B}=\mathcal{A} \cap W$
be a subsystem of $\mathcal{A}$ for some  $n$-dimensional linear subspace $ W=\langle \mathcal{B} \rangle \subset V.$ 
Define
\begin{equation*}
W_{\mathcal{B}} \coloneqq \{x \in V \colon (\beta,x)=0 \quad \forall \beta \in \mathcal{B}\}.   
\end{equation*}

Let $(\cdot,\cdot)_{\mathcal{B}}$ be the restriction of $(\cdot,\cdot)$ on $W_{\mathcal{B}}, $ and assume that it is non-degenerate.
%
%
 %
%
%
Let $S \subset \mathcal{B}$ 
%
be a basis of $W$.
Let $f_1 ,\dots, f_n$ be an orthonormal basis of the space $W_{\mathcal{B}}$, 
and let ${\xi=(\xi^1,\dots,\xi^n)}$ be the corresponding orthonormal coordinates in $W_{\mathcal{B}}$.
Define $M_{\mathcal{A}}=V \setminus \bigcup_{\alpha \in \mathcal{A}} \Pi_{\alpha},$  and
$M_{\mathcal{B}}=W_{\mathcal{B}} \setminus \bigcup_{\alpha \in \mathcal{A}\setminus \mathcal{B}} \Pi_{\alpha},$ 
where $\Pi_{\alpha}=\{x \in V\colon (\alpha,x)=0 \}.$
The following statement shows that the class of solutions of commutativity equations corresponding to Euclidean $\vee$-systems is closed under the restrictions.
\begin{theorem} \label{restricted system and commutativity equations}
Assume that prepotential \eqref{trig.solution. general form 2} satisfies commutativity equations \eqref{FF=FF.1}. 
%
%
Suppose that  $C_{\delta}^{\alpha_{0}}\neq 0$ for any $\alpha \in S, \alpha_{0}\in \delta_{\alpha},\, \delta\subseteq \delta_{\alpha}.$
Then 
the prepotential 
\begin{equation} \label{F_B solution for FF}
F_{\mathcal{B}}=\sum_{\alpha \in \mathcal{A}\setminus \mathcal{B}}c_{\alpha}f(({\alpha},\xi)),\quad \xi \in M_{\mathcal{B}},
\end{equation}
%
satisfies the commutativity equations
\begin{equation*}
 (F_{\mathcal{B}})_i (F_{\mathcal{B}})_j=(F_{\mathcal{B}})_j (F_{\mathcal{B}})_i , \quad i,j=1,\dots , n, 
\end{equation*}
where $(F_{\mathcal{B}})_i$ is the $n\times n$ matrix with entries
$$((F_{\mathcal{B}})_i)_{pq}=(F_{\mathcal{B}})_{ipq} = \frac{\partial^{3} F_{\mathcal{B}}}{\partial \xi^i \partial \xi^p \partial \xi^q}.$$
\end{theorem}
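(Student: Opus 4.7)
The plan is to derive commutativity for $F_{\mathcal{B}}$ directly from commutativity for $F$ by specializing the latter to $\xi\in W_{\mathcal{B}}$. First I would compute $(F_{\mathcal{B}})_{ipq}(\xi)=\sum_{\alpha\in\mathcal{A}\setminus\mathcal{B}} c_\alpha\cot((\bar\alpha,\xi))\,\bar\alpha_i\bar\alpha_p\bar\alpha_q$, where $\bar\alpha$ denotes the orthogonal projection of $\alpha$ onto $W_{\mathcal{B}}$, using $(\alpha,\xi)=(\bar\alpha,\xi)$ for $\xi\in W_{\mathcal{B}}$. At the same time, $F_{ipk}|_{x=\xi}$ turns out to be well defined for $i,p\le n$: although $\cot((\alpha,\xi))$ is singular for $\alpha\in\mathcal{B}$, it is multiplied by the exact zeros $\bar\alpha_i=\bar\alpha_p=0$ (since $\alpha\in W\perp W_{\mathcal{B}}$), so those contributions cancel identically and one gets $F_{ipk}|_\xi=\sum_{\alpha\in\mathcal{A}\setminus\mathcal{B}} c_\alpha\cot((\bar\alpha,\xi))\,\bar\alpha_i\bar\alpha_p\alpha_k$ for every $k\in\{1,\dots,N\}$.

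Next I would take the original commutativity identity $\sum_{k=1}^{N}(F_{ipk}F_{jqk}-F_{jpk}F_{iqk})=0$ at $x=\xi$ with $i,j,p,q\le n$ and split the $k$-sum into $k\le n$ and $k>n$. The first block assembles into $\sum_{k=1}^{n}((F_{\mathcal{B}})_{ipk}(F_{\mathcal{B}})_{jqk}-(F_{\mathcal{B}})_{jpk}(F_{\mathcal{B}})_{iqk})$, which is precisely the commutativity of $F_{\mathcal{B}}$. The second block contributes the residual
$$E_{ij;pq}=\sum_{\alpha,\beta\in\mathcal{A}\setminus\mathcal{B}} c_\alpha c_\beta \cot((\bar\alpha,\xi))\cot((\bar\beta,\xi))\,(\alpha_W,\beta_W)\,\bar\alpha_p\bar\beta_q\,[\bar\alpha_i\bar\beta_j-\bar\alpha_j\bar\beta_i],$$
where $\alpha_W=\alpha-\bar\alpha\in W$. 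Thus commutativity for $F_{\mathcal{B}}$ reduces to proving $E_{ij;pq}\equiv 0$ on $M_{\mathcal{B}}$ for all $i,j,p,q\le n$.

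The heart of the proof, and its main obstacle, is establishing $E\equiv 0$: this is where the subsystem structure must intervene, since $E$ is exactly the obstruction caused by $(\alpha,\beta)\neq(\bar\alpha,\bar\beta)$. Following the pattern of the analogous restriction argument for trigonometric $\vee$-systems in \cite{Maali+Misha 2021}, I would deploy additional instances of the commutativity equations for $F$ at $x=\xi$ in which some of the free indices are chosen along directions $e_l\in W$ rather than in $W_{\mathcal{B}}$; the same exact-vanishing mechanism again restricts those sums to $\alpha,\beta\in\mathcal{A}\setminus\mathcal{B}$ but now carries explicit $(\alpha,e_l)$ and $(\beta,e_l)$ factors encoding the $W$-components. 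Summing these identities over $l>n$ and combining with the Euclidean trigonometric $\vee$-system structure on the subsystem $\mathcal{B}=\mathcal{A}\cap W$ provided by Theorem \ref{isotropic Euclid v-sys. is v-sys.}, whose applicability is secured by the hypothesis $C_\delta^{\alpha_0}\neq 0$ for $\alpha\in S$, one rewrites the $(\alpha_W,\beta_W)$ factor in a form killed by the $\vee$-system relation on $\mathcal{B}$, yielding $E\equiv 0$ and completing the proof.
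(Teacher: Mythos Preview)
Your reduction is correct and is essentially a coordinate translation of the paper's argument: specializing the full commutativity identity to $\xi\in W_{\mathcal{B}}$ and isolating the $k>n$ residual $E_{ij;pq}$ amounts to passing to the associative multiplication $u*v=\sum_\alpha c_\alpha(\alpha,u)(\alpha,v)\cot(\alpha,x)\,\alpha$ and restricting to $T_\xi W_{\mathcal{B}}$, which is precisely the paper's set-up. The paper, however, disposes of the residual by proving the stronger fact that the multiplication is \emph{closed} on $T_\xi W_{\mathcal{B}}$ (Lemma~\ref{closedalgebra FF}), i.e.\ that $F_{ipk}|_{\xi}=0$ for $i,p\le n$ and $k>n$; closure kills every summand of your $E$ individually. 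Closure itself is obtained not from any $\vee$-system structure on $\mathcal{B}$ but from the singular-hyperplane identity
\[
\sum_{\beta\in\mathcal{A}\setminus\delta_\alpha} c_\beta(\alpha,\beta)\cot(\beta,x)\,B_{\alpha,\beta}(a,b)\,\alpha\wedge\beta=0\qquad\text{whenever }\tan(\alpha,x)=0,
\]
applied successively for $\alpha$ in the basis $S\subset\mathcal{B}$ of $W$. This is where the hypothesis $C_\delta^{\alpha_0}\neq0$ for $\alpha\in S$ is actually used (to isolate that identity from the commutativity relations, cf.\ Lemma~\ref{iden 2.FF.with collinear vectors}).

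The gap in your plan is the final step. Invoking Theorem~\ref{isotropic Euclid v-sys. is v-sys.} for the subsystem $\mathcal{B}$ does not touch $E$: your residual involves only vectors $\alpha,\beta\in\mathcal{A}\setminus\mathcal{B}$ through their $W$-components $\alpha_W,\beta_W$, whereas the $\vee$-conditions on $\mathcal{B}$ constrain sums over vectors \emph{of} $\mathcal{B}$, not projections of vectors outside it. (Also, the applicability of Theorem~\ref{isotropic Euclid v-sys. is v-sys.} rests on non-isotropy of the subsystem, which is unrelated to $C_\delta^{\alpha_0}\neq0$.) The mixed-index commutativity identities you mention produce relations of the form $\sum_{\alpha,\beta}c_\alpha c_\beta\cot\cot\,(\alpha,\beta)\,\bar\alpha_p\,(\beta_W)_l\,[\bar\alpha_i\bar\beta_j-\bar\alpha_j\bar\beta_i]=0$, carrying the full pairing $(\alpha,\beta)$ and a loose index $l>n$; these do not recombine into your $E$ (which carries $(\alpha_W,\beta_W)$ and a $W_{\mathcal{B}}$-index $q$) without an additional input, and that input is precisely closure. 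So the argument as written does not close; replacing the last paragraph by the closure lemma completes it along the paper's lines.
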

\begin{proof}
First for any $u=(u^1 ,\dots, u^N),\, v=(v^1 ,\dots, v^N)\in V$ 
let us consider the vector fields $\partial_{u}=u^i \partial_{x^i},\, \partial_{v}=v^i \partial_{x^i} \in T_x M_{\mathcal{A}}.  $
We define the following multiplication on the tangent space $T_x M_{\mathcal{A}}$:
\begin{equation}\label{product for FF}
    \partial_{u}\ast \partial_v =u^i v^j 
     F_{ijk} \partial_{x^k}.
\end{equation}
It is easy to check that the associativity of the multiplication $\ast $ is equivalent to the commutativity equations \eqref{FF=FF.1}.
From the formula \eqref{trig.solution. general form 2} we have
$$F_{ijk}=\sum_{\alpha\in \mathcal{A}}c_{\alpha}(\alpha , f_i) (\alpha , f_j) (\alpha, f_k) \cot{(\alpha,x)}.$$
Hence multiplication \eqref{product for FF}
takes the form
\begin{equation}\label{prod 2}
     \partial_{u}\ast \partial_v =\sum_{\alpha\in \mathcal{A}}c_{\alpha}(\alpha,u)(\alpha,v)\cot{(\alpha,x)}
     \partial_{\alpha}.
\end{equation}
By identifying $V \cong T_x V$, we have 
\begin{equation*}
     u\ast v =\sum_{\alpha\in \mathcal{A}}c_{\alpha}(\alpha,u)(\alpha,v)\cot{(\alpha,x)}
     {\alpha}.
\end{equation*}
Consider now a point $x_{0} \in M_{\mathcal{B}}$ and two tangent vectors $u_{0}, v_{0} \in T_{x_{0}}M_{\mathcal{B}}.$ We extend vectors $u_{0}$ and $v_{0}$ to two local analytic vector fields $u(x), v(x)$ in the neighbourhood $U$ of $x_{0}$ that are tangent to the subspace $W_{\mathcal{B}} $ at any point $x \in M_{\mathcal{B}} \cap U$ such that $u_{0}=u(x_{0})$ and  $v_{0}=v(x_{0})$. 
The proof of the next lemma is similar to the proof of \cite{Maali+Misha 2021}*{Lemma 4.1} 
(see also \cite{Misha&Veselov 2007}*{Lemma 1}
for the rational case, and  \cite{MGM 2020} for the trigonometric case).
\begin{lemma}\label{the limit of * for FF}
The limit of the product $u(x) \ast v(x)$ exists when vector $x$ tends to $x_{0}\in  M_{\mathcal{B}} $ and it satisfies
\begin{equation*}
u_{0} \ast v_{0} =\sum_{\alpha \in \mathcal{A}\setminus \mathcal{B}} c_{\alpha}(\alpha,u_{0})(\alpha,v_{0})\cot (\alpha, x_{0})\alpha.
\end{equation*}
In particular, the product $u_{0} \ast v_{0}$ is determined by vectors $u_{0}$ and $v_{0}$ only. 
\end{lemma}
The following lemma holds and it shows that multiplication \eqref{prod 2} is closed on the tangent space $T_{x_{0}}M_{\mathcal{B}}$. 
\begin{lemma}\label{closedalgebra FF}
Let $u,v \in T_{x_0}M_{\mathcal{B}}$ where $x_0 \in M_{\mathcal{B}}.$ Then $u \ast v \in T_{x_0}M_{\mathcal{B}}.$ 
\end{lemma}
The proof of Lemma \ref{closedalgebra FF} is similar to the proof of \cite{Maali+Misha 2021}*{Lemma 4.2}.
It uses an argument analogues to \cite{Maali+Misha 2021} 
which claims that the following identity holds for any $a,b\in V$ if $\tan (\alpha,x)=0$:
$$\sum_{\beta \in \mathcal{A}\setminus \delta_{\alpha} }c_{\beta} (\alpha,\beta)\cot (\beta,x)B_{\alpha , \beta}(a \otimes b)\alpha \wedge \beta =0.$$
%
%
Then for $u,v \in T_{x_0}M_{\mathcal{B}}$,  $x_0\in M_{\mathcal{B}},$ the product \eqref{prod 2} takes the form
\begin{equation*}
     \partial_{u}\ast \partial_v =\sum_{\alpha\in \mathcal{A}\setminus \mathcal{B}}c_{\alpha}({\alpha},u)({\alpha},v)\cot{({\alpha},x_0)}
     \partial_{\alpha}.
\end{equation*}
By using the orthonormal basis $f_1 ,\dots, f_n$ of $W_{\mathcal{B}}$ we rearrange $\partial_{{\alpha}}$ as
$$\partial_{\alpha}=\sum_{k=1}^{n}({\alpha},f_k)\partial_{f_k}.$$
Hence for $x_0=\xi=\sum_{i=1}^{n}\xi^i f_i$ we have 
\begin{align}\label{prod 5}
    \partial_{f_i} \ast \partial_{f_j}&=\sum_{\alpha\in \mathcal{A}\setminus \mathcal{B}}\sum_{k=1}^{n}c_{\alpha}({\alpha},f_i)({\alpha},f_j)({\alpha},f_k) \cot{({\alpha},\xi)}\partial_{f_k}\nonumber\\
    &=\sum_{k=1}^{n} \widetilde{F}_{ijk}\partial_{f_k},\quad i,j=1,\dots, n,
\end{align}
where $\widetilde{F}(\xi)=\sum_{\alpha\in \mathcal{A}\setminus \mathcal{B}}c_{\alpha} f(\alpha,\xi)= F_{\mathcal{B}}.$
Now multiplication \eqref{prod 5} is associative and it is easy to check that its associativity is equivalent to the commutativity equations 
$$\widetilde{F}_i \widetilde{F}_j=\widetilde{F}_j \widetilde{F}_i, \quad i,j=1,\dots, n.$$
%
\end{proof}
As an application of Theorem \ref{restricted system and commutativity equations} we get the following solutions  of the commutativity equations starting from a solution for the root system $BC_N$ \cite{MGM 2020}.
%
Let $q,r,s \in \mathbb{C},\, \underline{m}=(m_1 ,\dots, m_n)\in \mathbb{(C^\times)}^n$.
Suppose
\begin{equation*}
r = -8s- 2q(N- 2)
\end{equation*}
with $N=\sum_{i=1}^n m_i$.
 Define the configuration $BC_n (q,r,s; \underline{m})\subset \mathbb{C}^n$ consisting of the following vectors $\alpha$ with the multiplicities $c_\alpha$:
\begin{align}\label{rest of BCn.orthogonal}
& m_i^{-1/2} e_i, \quad \text{with multiplicity} \quad r m_i, \quad 1 \leq i \leq n,\nonumber \\
&  2 m_i^{-1/2} e_i, \quad \text{with multiplicity} \quad s m_i + \frac{1}{2} qm_i(m_i -1),\quad 1 \leq i \leq n,\nonumber \\
& m_i^{-1/2} e_i \pm m_j^{-1/2} e_j, \quad \text{with multiplicity} \quad q m_i m_j, \quad 1 \leq i < j \leq n.
\end{align}
Note that in the case  $m_i =1$ for all $i=1,\dots, n$ configuration \eqref{rest of BCn.orthogonal} reduces to the positive half $BC_n^+$ of the root system $BC_n$. 
%
Consider the function 
\begin{equation}\label{F for restictions of BCn}
  \widetilde{F}=\sum_{\alpha\in BC_n(q,r,s;\underline{m})} c_{\alpha} f((\alpha ,{x})), \quad {x}\in \mathbb{C}^n. 
\end{equation}
It follows from Theorem \ref{restricted system and commutativity equations} and it was shown earlier in \cite{MGM 2020} that function \eqref{F for restictions of BCn} satisfies the commutativity equations
$$ \widetilde{F}_i  \widetilde{F}_j =  \widetilde{F}_j  \widetilde{F}_i \quad i,j=1,\dots, n.$$
%
 
%

Consider now the positive half $F_4^+$ of the root system $F_4$ given by  
\begin{equation}\label{F4.half positive}
F_4^{+}=\{ e_i\, (1\leq i\leq 4),\,\, e_i \pm e_j \, (1\leq i< j\leq 4),\,\, \frac{1}{2}(e_1\pm e_2 \pm e_3 \pm e_4 )  \}.
\end{equation}
Let $r$ be the multiplicity of short roots and let $q$ be the multiplicity of long roots. 
A basis of simple roots consists of
${\alpha_1={e_2-e_3}, \alpha_2={e_3-e_4}, \alpha_3={e_4}}$,
${\alpha_4=\frac{1}{2}(e_1-e_2-e_3-e_4)}.$

Up to an orthogonal transformation there are two projected systems in dimension three and four projected systems in dimension two. 
Firstly, we give details of the three-dimensional projections of $F_4^+$.
There are two different projections of $F_4^+$ along the root system $A_1$.
The first one $(F_4,A_1)_{1}$ is obtained by projecting to the hyperplane $\alpha_3=0.$
The second one $(F_4,A_1)_{2}$ is obtained by projecting to the hyperplane  $\alpha_2=0.$
Hence we have the following three-dimensional projected systems of the positive root system $F_4^+$.
\begin{itemize}
    \item The projected system $(F_4 , A_1)_1$
    consists of the following vectors:
\begin{align*}
& e_i, \quad  \text{with multiplicity} \quad r+2q ,\quad 1 \leq i \leq 3,\\
& e_i\pm e_j, \quad \text{with multiplicity} \quad q ,\quad 1 \leq i<j \leq 3,\\
& \frac{1}{2}(e_1 \pm e_2 \pm e_3), \quad \text{with multiplicity} \quad 2r.
\end{align*}
\item The projected system $(F_4 , A_1)_2$  
consists of the following vectors (after doing a change of variables and renaming vectors):
\begin{align*}
& e_1, e_2, \quad  \text{with multiplicity} \quad r ,\\
& \sqrt{2}e_3, \quad  \text{with multiplicity} \quad q ,\\
& \frac{\sqrt{2}}{2}e_3, \quad  \text{with multiplicity} \quad  2r  ,\\
& e_1\pm e_2, \quad \text{with multiplicity} \quad q ,\\
& \frac{1}{2} (e_1\pm e_2), \quad \text{with multiplicity} \quad 2r ,\\
& e_1 \pm \frac{\sqrt{2}}{2}e_3,\, e_2 \pm \frac{\sqrt{2}}{2}e_3, \quad  \text{with multiplicity} \quad 2q ,\\
& \frac{1}{2}(e_1 \pm e_2 \pm \sqrt{2} e_3), \quad \text{with multiplicity} \quad r.
\end{align*}
\end{itemize}

Secondly, we give details of the two-dimensional projections of $F_4^+$.
There are two different projections of $F_4^+$ along the root system $A_2$.
The first one $(F_4,A_2)_{1}$ is obtained by projecting to the plane $\alpha_1=\alpha_2=0.$
The second one $(F_4,A_2)_{2}$ is obtained by projecting to the plane $\alpha_3=\alpha_4=0.$
There is also a projected configuration $(F_4, A_1^2)$ along the subsystem $A_1 \times A_1$ to the plane $\alpha_1 =\alpha_3=0$, and there is a projected configuration $(F_4, B_2)$ along the subsystem $B_2$ to the plane $\alpha_2 =\alpha_3 =0$.
These configurations have the following explicit form 
 
 \begin{itemize}
     \item 
The configuration $(F_4 ,A_2)_1$ consists of vectors $\alpha$ with the corresponding multiplicities $c_\alpha$ given as follows:
\begin{align*}
& e_1, \quad  \text{with multiplicity} \quad r,\nonumber \\
& \frac{1}{\sqrt{3}}e_2, \quad  \text{with multiplicity} \quad 3r ,\nonumber\\
& \frac{2}{\sqrt{3}}e_2, \quad  \text{with multiplicity} \quad 3q ,\nonumber\\
& e_1 \pm \frac{1}{\sqrt{3}} e_2, \quad  \text{with multiplicity} \quad 3q  ,\nonumber\\
&\frac{1}{2} (e_1\pm \frac{1}{\sqrt{3}} e_2), \quad \text{with multiplicity} \quad 3r ,\nonumber\\
&\frac{1}{2} (e_1\pm \frac{3}{\sqrt{3}} e_2), \quad \text{with multiplicity} \quad r. 
\end{align*}
\item
The configuration $(F_4 ,A_1^2)$ consists of vectors $\alpha$ with the corresponding multiplicities $c_\alpha$ given as follows:
\begin{align*}
& e_1, \quad  \text{with multiplicity} \quad r+2q ,\nonumber \\
& \sqrt{2}e_2, \quad  \text{with multiplicity} \quad q ,\nonumber\\
& \frac{1}{2}e_1, \quad  \text{with multiplicity} \quad 4r ,\nonumber\\
& \frac{\sqrt{2}}{2}e_2, \quad  \text{with multiplicity} \quad  2(r+2q)  ,\nonumber\\
& e_1\pm \frac{1}{\sqrt{2}} e_2, \quad \text{with multiplicity} \quad 2q ,\nonumber\\
& \frac{1}{2} (e_1\pm \frac{2}{\sqrt{2}} e_2), \quad \text{with multiplicity} \quad 2r.
\end{align*}
 \end{itemize} 
%
Configurations $(F_4 , A_2)_2$ and $ (F_4 , B_{2})$ are equivalent to the root systems $G_2$ and $BC_2$, respectively, the corresponding solutions of the commutativity equations were found in \cite{George+Misha 2019}. 
Theorem \ref {restricted system and commutativity equations} gives new solutions of the commutativity equations which are listed in the next statement.
\begin{theorem}\label{commutativity for rest.of F4}
Let $({\mathcal{A}},c)$ be one of the configurations  $(F_4 , A_1)_1 ,\, (F_4 , A_1)_2, \, (F_4 , A_2)_1 , \, (F_4 , A_1^2)$ described above. 
Then the function $F=\sum_{\alpha\in {\mathcal{A}}} c_{\alpha}f((\alpha,x))$ satisfies the commutativity equations, where $x\in \mathbb{C}^3$ for the first two configurations and $x\in \mathbb{C}^2$ for the last two configurations and parameters $r, q$ satisfy the condition $r=-2q$ or $r=-4q$.
%
\end{theorem}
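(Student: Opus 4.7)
The plan is to realise each of the four listed configurations as the orthogonal projection of the positive system $F_4^+$ along a distinguished subsystem $\mathcal{B}\subset F_4$, and then to invoke Theorem~\ref{restricted system and commutativity equations} applied to the known $F_4$-solution of the commutativity equations.

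First, I would recall from \cite{George+Misha 2019} that the prepotential
\[
F_{F_4}=\sum_{\alpha\in F_4^+}c_\alpha f((\alpha,x))
\]
with Weyl-invariant multiplicities $c_\alpha=r$ on short roots and $c_\alpha=q$ on long roots satisfies the commutativity equations \eqref{FF=FF.1} precisely under the two linear relations $r=-2q$ or $r=-4q$. Indeed, condition (1) of Theorem~\ref{FF and Euclidean trig} holds automatically by Proposition~\ref{root system is Euclidean}, while condition (2), after collecting terms along the two Weyl orbits, reduces to a quadratic in $r/q$ whose two roots are exactly $-2$ and $-4$. This is the origin of the hypothesis on $(r,q)$ appearing in the statement.

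Second, for each target configuration I would take $\mathcal{B}\subset F_4$ to be the subsystem indicated by the notation: $\mathcal{B}=\{\pm\alpha_3\}\simeq A_1$ for $(F_4,A_1)_1$, $\mathcal{B}=\{\pm\alpha_2\}\simeq A_1$ for $(F_4,A_1)_2$, $\mathcal{B}=F_4\cap\langle\alpha_1,\alpha_2\rangle\simeq A_2$ for $(F_4,A_2)_1$, and $\mathcal{B}=\{\pm\alpha_1,\pm\alpha_3\}\simeq A_1\times A_1$ for $(F_4,A_1^2)$. The hypothesis $C^{\alpha_0}_\delta\neq 0$ required by Theorem~\ref{restricted system and commutativity equations} is automatic: since no root of $F_4$ is a non-trivial integer multiple of another, one has $\delta_\alpha=\{\pm\alpha\}$ for every $\alpha\in F_4$, and hence $C^{\alpha_0}_\delta=2c_\alpha\neq 0$ as long as $r,q\neq 0$. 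Applying the theorem directly yields that
\[
F_\mathcal{B}(\xi)=\sum_{\alpha\in F_4^+\setminus\mathcal{B}}c_\alpha f((\alpha,\xi)),\qquad \xi\in M_\mathcal{B},
\]
solves the commutativity equations on $W_\mathcal{B}$, which has dimension $3$ in the first two cases and $2$ in the last two.

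Third, I would identify $F_\mathcal{B}$ with the prepotential $F=\sum_{\alpha\in\mathcal{A}}c_\alpha f((\alpha,x))$ formed from the explicit configuration $\mathcal{A}$ listed in the excerpt. Because $(\alpha,\xi)=(\pi(\alpha),\xi)$ for $\xi\in W_\mathcal{B}$, where $\pi$ denotes the orthogonal projection onto $W_\mathcal{B}$, I would regroup the summands by the collinearity class of $\pi(\alpha)$: each class collapses to a single vector carrying the sum of the multiplicities along the fibre, and then every direction is rescaled so as to match the normalisation used in the statement (as in \eqref{rest of BCn.orthogonal}). The resulting vector/multiplicity data are then compared with the lists preceding the theorem for each of the four cases.

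The only genuine work lies in this last combinatorial bookkeeping: one must enumerate the projections of the $24$ positive roots of $F_4$ along each subsystem $\mathcal{B}$ and check that the fibrewise multiplicity sums reproduce the explicit lists. This is routine but slightly delicate in the two-dimensional cases $(F_4,A_2)_1$ and $(F_4,A_1^2)$, where the eight half-integer roots $\tfrac12(e_1\pm e_2\pm e_3\pm e_4)$ distribute over several distinct length classes under projection. No analytic or algebraic subtleties arise beyond this enumeration; the entire statement is obtained by pulling back the $F_4$-solution through Theorem~\ref{restricted system and commutativity equations} and carrying out the orbit arithmetic.
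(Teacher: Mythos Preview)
Your proposal is correct and follows exactly the paper's approach: the theorem is stated as a direct consequence of Theorem~\ref{restricted system and commutativity equations} applied to the $F_4^+$ solution from \cite{George+Misha 2019}, and the explicit configurations are precisely the projections already tabulated in the text. One tiny inaccuracy: since $\mathcal{A}=F_4^+$ is the positive half, $\delta_\alpha=\{\alpha\}$ rather than $\{\pm\alpha\}$, so $C^{\alpha_0}_\delta=c_\alpha$ (not $2c_\alpha$); this does not affect the argument, as the required nonvanishing still reduces to $r,q\neq 0$.
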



\section{Commutativity equations and WDVV equations}\label{CommEqAndWDVV}

In this Section we investigate the relation between the commutativity equations and the WDVV equations. 
%
Let $V\cong \mathbb{C}^N, N\geq 2$. Let $F=F(x^1, \dots, x^N)$ be a function on $V$.
We recall that it was proven in \cite{Martini+Gragert 1999} (see also \cite{MMM.2000}) that the generalized WDVV equations
\begin{equation*}
{F}_i {F}_{j}^{-1} {F}_k= {F}_k {F}_{j}^{-1} {F}_i, \quad i,j,k=1, \dots,N 
\end{equation*}
%
%
%
can be written equivalently in the form  
\begin{equation}\label{WDVV with metric B form}
     {F}_i B^{-1}  {F}_j=  {F}_j B^{-1}  {F}_i, \quad i,j=1,\dots, N,
 \end{equation}
 where $B$ is any non-degenerate matrix of the form $B=\sum_{i=1}^{N}A^k F_k$ for some functions $A^k$. 
%
If the matrix $B$ happens to be a multiple of the identity matrix 
then WDVV equations \eqref{WDVV with metric B form} reduce to the commutativity equations
\begin{equation}\label{FF.ch4}
     {F}_i  {F}_j=  {F}_j   {F}_i, \quad i,j=1,\dots, N.
 \end{equation}
A natural question to investigate is when there exists such a linear combination $B$. 
We give an answer in this Section.

Let us assume that a function $F=F(x^1, \dots, x^N)$ satisfies the commutativity equations \eqref{FF.ch4}.
Let us denote by $[F_i,F_j]_{(a,b)}$ the $(a,b)$-entry of the commutator $[F_i,F_j]$ or, more explicitly,
\begin{equation}\label{ab entry}
    [F_i,F_j]_{(a,b)}=\sum_{m=1}^N (F_{iam}F_{jbm}-F_{ibm} F_{jam}).
\end{equation}
%
The equality $[F_i, F_k]_{(i,j)}=0$ implies that
%
%
 \begin{align}\label{ij entry of FiFk}
     F_{ijk}F_{iii}=\sum_{m=1}^{N}F_{ijm}F_{ikm}- \sum_{ m\neq i}  F_{iim}F_{jkm}.   \end{align}
     %
     %
     %
 %
 %
 
 Observe the equality of matrix entries
$[F_a , F_b]_{(i,j)}=[F_i , F_j]_{(a,b)}$.
Introduce the notation
\begin{equation*}
    [F_i,F_j]_{(a,b)}^{\{ m \}}=F_{iam}F_{jbm}-F_{ibm}F_{jam},
\end{equation*}
where there is no summation over $m$ in the right-hand side.  
Define a matrix 
$B=(B_{ij})_{i,j=1}^{N}$ with the entries given as a linear combination of the third order derivatives of $F$:
\begin{equation}\label{general formula of the metric}
 B_{ij}=\sum_{k=1}^{N} A^{k} F_{kij},   
\end{equation}
for some functions $A^k=A^k (x^1,\dots, x^N)$.
Now we will investigate when there exists such a combination $B$ so that equations \eqref{FF.ch4} imply equations \eqref{WDVV with metric B form}.
For that it is sufficient to deduce that the matrix $B$ is proportional to the identity.
%

 %
 %
 %
 %

 %
%
%
Fix $i_0\in \mathbb{N},\, 1\leq i_0 \leq N$.
Let $P$ be the $(N-1)\times N$ matrix $P=(F_{i_0 ij}), \, 1\leq i,j\leq N,\, i\neq i_0$.
Define 
\begin{equation}\label{A_k in N-dim}
A^{k} = (-1)^{k+1}\det P_k,
\end{equation}
where the matrix $P_k$ is obtained from the matrix $P$ by removing its $k$-th column.
%
%

%
The following statement takes place.
\begin{lemma}\label{property w.r.t FF. General entries}
For any function $F=F(x^1,\dots, x^N)$ which satisfies the commutativity equations \eqref{FF.ch4} the following relation holds
 \begin{equation*}
 \det 
 \begin{pmatrix}
 F_{a i_0 r} & F_{b i_0 r} & F_{c i_0 r}\\
  F_{a i_0 t} & F_{b i_0 t} & F_{c i_0 t}\\
   F_{art} & F_{brt} & F_{crt}
 \end{pmatrix}
 = - \sum_{m\neq t} \det
  \begin{pmatrix}
 F_{a i_0 r} & F_{b i_0 r} & F_{c i_0 r}\\
  F_{a i_0 m} & F_{b i_0 m} & F_{c i_0 m}\\
   F_{arm} & F_{brm} & F_{crm}
 \end{pmatrix},
 \end{equation*}
 where $1\leq a,b ,c \leq N, \, 1\leq r < t\leq N$, and $r,t \neq i_0$.
\end{lemma}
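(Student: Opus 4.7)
The plan is to rewrite the claimed identity in the equivalent form
\[
\sum_{m=1}^{N}\det\!\begin{pmatrix} F_{a i_0 r} & F_{b i_0 r} & F_{c i_0 r}\\ F_{a i_0 m} & F_{b i_0 m} & F_{c i_0 m}\\ F_{arm} & F_{brm} & F_{crm}\end{pmatrix}=0.
\]
This equivalence is immediate: the $m=t$ summand on the right of the claimed identity is, row by row, precisely the determinant on the left, so bringing it over changes the summation $m\neq t$ to a summation over all $m$. (For $m=r$ the top two rows coincide and the determinant vanishes, so this index is harmless.)

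Next I would expand each determinant along the third row. Writing $M(m)$ for the $3\times 3$ matrix, one obtains six terms each of the shape $\pm F_{xrm}\,(F_{yi_0 r}F_{zi_0 m}-F_{zi_0 r}F_{yi_0 m})$ with $\{x,y,z\}=\{a,b,c\}$. Summing over $m$, the inner sums that appear are all of the form
\[
\sum_{m=1}^N F_{xrm}F_{yi_0 m}=\sum_{m=1}^N (F_r)_{xm}(F_{i_0})_{my}=(F_rF_{i_0})_{xy},
\]
using the symmetry $F_{ijk}=F_{jik}=F_{ikj}$ to rearrange indices into the matrix-product form. Hence the total sum equals
\[
F_{ai_0 r}\bigl((F_rF_{i_0})_{cb}-(F_rF_{i_0})_{bc}\bigr)+F_{bi_0 r}\bigl((F_rF_{i_0})_{ac}-(F_rF_{i_0})_{ca}\bigr)+F_{ci_0 r}\bigl((F_rF_{i_0})_{ba}-(F_rF_{i_0})_{ab}\bigr).
\]

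The final step, which is really the heart of the argument, is to note that $F_r$ and $F_{i_0}$ are symmetric matrices, and by hypothesis they commute. A product of two commuting symmetric matrices is itself symmetric, since $(F_rF_{i_0})^{T}=F_{i_0}^{T}F_r^{T}=F_{i_0}F_r=F_rF_{i_0}$. Therefore each of the three brackets above vanishes, and the whole sum is zero, as required.

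I do not anticipate a real obstacle here; the computation is entirely mechanical once the determinants are expanded along the third row and the inner sums are recognised as entries of $F_rF_{i_0}$. The only point to keep track of carefully is the cofactor signs, but they line up exactly so that the six terms pair into three antisymmetrisations of $(F_rF_{i_0})_{xy}$, which then collapse by commutativity of $F_r$ and $F_{i_0}$. Note that the hypotheses $r,t\neq i_0$ and $r<t$ play no active role in this argument; they are simply inherited from the setting in which the lemma will be applied.
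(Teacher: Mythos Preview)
Your proof is correct and follows essentially the same approach as the paper. The paper expands the left-hand determinant along the \emph{first} row, recognises the resulting $2\times 2$ minors as the single-$t$ summand $[F_{i_0},F_r]_{(y,z)}^{\{t\}}$ of the commutator entry, and then uses $[F_{i_0},F_r]=0$ to replace it by $-\sum_{m\neq t}[F_{i_0},F_r]_{(y,z)}^{\{m\}}$ before reassembling; you instead first move the $m=t$ term to the right to obtain the equivalent statement $\sum_m\det M(m)=0$, expand along the \emph{third} row, and use commutativity in the form that $F_rF_{i_0}$ is symmetric. The two arguments are the same in substance, and your reformulation as a full sum vanishing is a clean way to organise it.
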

\begin{proof}
By applying the first row expansion and the commutativity equations we get
\begin{align*}
&\det
   \begin{pmatrix}
 F_{a i_0 r} & F_{b i_0 r} & F_{c i_0 r}\\
  F_{a i_0 t} & F_{b i_0 t} & F_{c i_0 t}\\
   F_{art} & F_{brt} & F_{crt}
 \end{pmatrix}
 = F_{a i_0 r} \det
  \begin{pmatrix}
  F_{b i_0 t} & F_{c i_0 t}\\
  F_{brt} & F_{crt}
 \end{pmatrix}
 -F_{b i_0 r} \det
  \begin{pmatrix}
  F_{a i_0 t}  & F_{c i_0 t}\\
  F_{art} & F_{crt}
 \end{pmatrix}  \\
& + F_{c i_0 r} \det
  \begin{pmatrix}
  F_{a i_0 t} & F_{b i_0 t} \\
   F_{art} & F_{brt}
 \end{pmatrix} 
 = F_{a i_0 r} [F_{i_0},F_r]_{(b,c)}^{\{t\}}- F_{b i_0 r} [F_{i_0},F_r]_{(a,c)}^{\{t\}}+ F_{c i_0 r} [F_{i_0},F_r]_{(a,b)}^{\{t\}}\\
& = - F_{a i_0 r}\sum_{m\neq t} [F_{i_0},F_r]_{(b,c)}^{\{m\}} 
+ F_{bi_0 r}\sum_{m\neq t} [F_{i_0},F_r]_{(a,c)}^{\{m\}}
- F_{c i_0 r}\sum_{m\neq t} [F_{i_0},F_r]_{(a,b)}^{\{m\}}\\
& = - \sum_{m\neq t} \det
  \begin{pmatrix}
 F_{a i_0 r} & F_{b i_0 r} & F_{c i_0 r}\\
  F_{a i_0 m} & F_{b i_0 m} & F_{c i_0 m}\\
   F_{arm} & F_{brm} & F_{crm}
 \end{pmatrix}.
  \end{align*}
\end{proof}
The following statement takes place.
\begin{lemma}\label{sinqular matrix for rt-entry}
Suppose that a function $F=F(x^1,\dots, x^N)$ satisfies the commutativity equations \eqref{FF.ch4}. 
Let $1\leq r<t \leq N,\, r,t \neq i_0$. Let $N\times N$ matrix $Q$ be obtained from the matrix $P$ by inserting the $i_0$-th row
$R_{i_0}=(F_{1rt}, \dots ,F_{Nrt })$. Then $\det Q =0$.
 
\end{lemma}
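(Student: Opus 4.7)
My plan is to prove $\det Q=0$ by a generalized Laplace expansion of $\det Q$ along the three rows indexed by $I=\{r,t,i_0\}$, then to apply Lemma~\ref{property w.r.t FF. General entries} to each resulting $3\times 3$ minor, and finally to recognize the outcome as a sum of determinants of auxiliary $N\times N$ matrices which visibly have two equal rows. The key first observation is that, using the symmetry of the third derivatives of $F$, the submatrix $Q[I;\{a,b,c\}]$ coincides up to an even permutation of its rows with the matrix on the left-hand side of the identity in Lemma~\ref{property w.r.t FF. General entries}. That lemma then rewrites each such $3\times 3$ minor as $-\sum_{m\neq t}\det M_m(a,b,c)$, where $M_m(a,b,c)$ denotes the $3\times 3$ matrix appearing on the right-hand side of Lemma~\ref{property w.r.t FF. General entries}.

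Substituting this rewriting into the Laplace expansion of $\det Q$ and interchanging summations reduces the problem, for each fixed $m\neq t$, to controlling the quantity
\[
\sum_{a<b<c}\epsilon_{I,\{a,b,c\}}\,\det M_m(a,b,c)\,\det Q[I^{c};J^{c}],
\]
where $\epsilon_{I,\{a,b,c\}}$ is the Laplace sign and $J^{c}=\{1,\dots,N\}\setminus\{a,b,c\}$. By construction this is precisely the Laplace expansion, along the rows in $I$, of the $N\times N$ matrix $N(m)$ which agrees with $Q$ on all rows indexed by $I^{c}$, keeps row $r$ unchanged, replaces row $t$ by $(F_{i_0 m j})_{j=1}^{N}$, and replaces row $i_0$ by $(F_{j r m})_{j=1}^{N}$. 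It therefore suffices to prove $\det N(m)=0$ for every $m\neq t$.

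The verification of $\det N(m)=0$ is then straightforward case analysis. If $m\notin\{r,t,i_0\}$, the row of $N(m)$ with index $m$ is unchanged from $Q$ and equals $(F_{i_0 m j})_{j}$, which coincides with the replaced row $t$ of $N(m)$. If $m=r$, rows $r$ and $t$ of $N(r)$ both equal row $r$ of $Q$. If $m=i_0$, the symmetry $F_{j r i_0}=F_{i_0 r j}$ identifies the replaced row $i_0$ of $N(i_0)$ with row $r$ of $Q$. In every case $N(m)$ has two equal rows, hence $\det N(m)=0$, and summing over $m\neq t$ yields $\det Q=0$.

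I expect the only genuinely delicate point to be the bookkeeping of the Laplace signs: one must check that the same sign $\epsilon_{I,\{a,b,c\}}$ appears in the expansions of both $\det Q$ and $\det N(m)$, which is automatic because the row index set $I$ used for the two Laplace expansions is identical and the rows of $N(m)$ outside $I$ are literally the rows of $Q$. The boundary values $m=r$ and $m=i_0$ in the sum from Lemma~\ref{property w.r.t FF. General entries} also deserve a short separate check, but both collapse immediately using the symmetry of the third derivatives of $F$.
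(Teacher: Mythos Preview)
Your proposal is correct and follows essentially the same approach as the paper: Laplace expansion of $\det Q$ along the three rows indexed by $I=\{r,t,i_0\}$, application of Lemma~\ref{property w.r.t FF. General entries} to each $3\times 3$ minor, and recognition of the result as $-\sum_{m\neq t}\det N(m)$ where each $N(m)$ has two equal rows. Your case analysis for $m\in\{r,i_0\}$ is actually more explicit than the paper's; the one small imprecision is the phrase ``up to an even permutation'' (the permutation is odd when $r<i_0<t$), but since, as you correctly observe, the same row reordering and hence the same sign occurs in the Laplace expansions of both $\det Q$ and $\det N(m)$, this is immaterial.
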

\begin{proof}
Let $D=\det Q$.
Let $R_i$ denote the $i$-{th} row in the matrix $Q$.
Let us perform the Laplace expansion of $D$ along the rows 
\begin{align*}
    R_{r}&=(F_{i_0 1r},F_{i_0 2 r}, \dots , F_{i_0 N r}),\\
     R_{t}&=(F_{i_0 1 t},F_{i_0 2 t}, \dots , F_{i_0 N t}),
\end{align*}
and the row $R_{i_0}$.
For subsets $S,T\subset [N]\coloneqq \{1,\dots, N   \}$ we define $Q_{ST}$ to be the submatrix of $Q$ defined by deleting rows $S$ and columns $T$.  
Let $I=\{r,t,i_0\},\, J=\{a,b,c \}$ for some $1\leq a<b<c\leq N.$
%
By applying the Laplace expansion to $Q$ we get

 \begin{equation}\label{Mrt.v1}
D=\varepsilon  \sum_{J} \sigma_J \det Q_{I J} \det
 \begin{pmatrix}
 F_{a i_0 r} & F_{b i_0 r} & F_{c i_0 r}\\
  F_{a i_0 t} & F_{b i_0 t} & F_{c i_0 t}\\
   F_{art} & F_{brt} & F_{crt}
 \end{pmatrix},
\end{equation}
where $\varepsilon = \pm 1$ is determined by the relative order of $r,t$ and $i_0$, and $\sigma_J =(-1)^{s}$ with $s=\sum_{i\in I}i +  \sum_{j\in J} j$. 
%
%
Now by Lemma \ref{property w.r.t FF. General entries}
the determinant \eqref{Mrt.v1} can be rewritten as
\begin{align*}\label{Mrt.v2}
 D &=-\varepsilon \sum_{J}  \sigma_J \det Q_{I J} \Big(  \sum_{m\neq t} \det 
 \begin{pmatrix}
 F_{a i_0 r} & F_{b i_0 r} & F_{c i_0 r}\\
  F_{a i_0 m} & F_{b i_0 m} & F_{c i_0 m}\\
   F_{arm} & F_{brm} & F_{crm}
 \end{pmatrix}\Big)\nonumber\\
& = -\sum_{m\neq t} \det Q^{(m)} ,
\end{align*}
where the matrix $Q^{(m)}$ is obtained from the matrix $Q$ by replacing rows $R_{i_0}$ and $R_t$ as follows:
\begin{align*}
    R_{i_0}\to \widetilde{R}_{i_0}=(F_{1rm}, \dots, F_{Nrm}),\\
     R_{t}\to \widetilde{R}_{t}=(F_{i_0 1 m}, \dots, F_{i_0 N m}).
\end{align*}
Note that the row $\widetilde{R}_{t}$ is equal to the $m$-th row of the matrix $Q^{(m)}$. Hence $\det D=0$.
%
\end{proof}
The following statement takes place.
%
\begin{proposition}\label{non-diagonal metric.N-dim}
Assume that the function $F=F(x^1, \dots, x^N)$ satisfies the commutativity equations \eqref{FF.ch4}. Assume also that the rank of the matrix $P$ is $N-1$.
Then matrix $B$ with the entries given by formulae \eqref{general formula of the metric}, \eqref{A_k in N-dim} is diagonal. 
\end{proposition}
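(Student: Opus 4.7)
The plan is to verify $B_{rt}=0$ for every pair $r\ne t$ by recognising this sum as a determinant and then invoking Lemma~\ref{sinqular matrix for rt-entry} (for most pairs) or a direct symmetry argument (for the remaining ones). The key observation to begin with is that the choice $A^k=(-1)^{k+1}\det P_k$ is tailor-made for cofactor expansion: if we build the $N\times N$ matrix $Q$ by inserting \emph{any} row $R=(r_1,\dots,r_N)$ into $P$ at position $i_0$, then Laplace expansion along that inserted row yields
\[
\det Q \;=\; \sum_{k=1}^N (-1)^{i_0+k} r_k \det P_k \;=\; (-1)^{i_0+1}\sum_{k=1}^N A^k r_k .
\]
Specialising $R=R_{i_0}=(F_{1rt},\dots,F_{Nrt})$ gives the identity $B_{rt}=(-1)^{i_0+1}\det Q$, which is the bridge that turns the problem into a statement about determinants of augmented matrices built from $P$.

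With that bridge in hand I would split the verification into two cases according to whether $r$ and $t$ both differ from $i_0$. In the case $r,t\neq i_0$ with $r\neq t$, the matrix $Q$ is literally the one appearing in Lemma~\ref{sinqular matrix for rt-entry}; that lemma immediately gives $\det Q=0$ and hence $B_{rt}=0$. For the remaining off-diagonal entries, namely $B_{i_0 t}$ with $t\neq i_0$ (the case $r\neq i_0$, $t=i_0$ is symmetric since $B$ is manifestly symmetric), the row to be inserted becomes $(F_{1\,i_0\,t},\dots,F_{N\,i_0\,t})$; by the symmetry of third partial derivatives this is exactly the row of $P$ indexed by $i=t$, namely $(F_{i_0\,t\,1},\dots,F_{i_0\,t\,N})$. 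Hence $Q$ has two equal rows, so $\det Q=0$ automatically and $B_{i_0 t}=0$.

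Combining the two cases shows that every off-diagonal entry of $B$ vanishes, so $B$ is diagonal. The main obstacle was already disposed of in Lemma~\ref{sinqular matrix for rt-entry}, which itself depends on the row-reduction identity of Lemma~\ref{property w.r.t FF. General entries} and therefore on the commutativity equations~\eqref{FF.ch4}; the proposition itself reduces to clean cofactor-expansion bookkeeping together with the symmetry observation. The rank hypothesis on $P$ plays no role in diagonality per se; it is relevant only to guarantee that at least one of the cofactors $A^k$ is nonzero, so that $B$ is a genuine (nonzero) diagonal matrix and can later be used as the metric in~\eqref{WDVV with metric B form}.
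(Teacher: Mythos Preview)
Your argument is correct and follows essentially the same route as the paper: both proofs reduce $B_{rt}=0$ to $\det Q=0$ via Lemma~\ref{sinqular matrix for rt-entry} when $r,t\neq i_0$, and to a repeated-row determinant when one index equals $i_0$. Your presentation is in fact slightly more direct: the paper phrases the step for $r,t\neq i_0$ as ``the augmented linear system has a nontrivial solution, which by the rank hypothesis must coincide with $(A^k)$'', whereas you observe outright that $B_{rt}=(-1)^{i_0+1}\det Q$ by cofactor expansion, making the rank hypothesis visibly irrelevant to diagonality (as you note, it only ensures $B\neq 0$).
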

\begin{proof}
Let us assume that $i_0 =1$, the general case can be dealt with similarly.
Consider the system of linear equations
\begin{equation}\label{non-diagonal B1m}
 B_{1m}=\sum_{k=1}^{N} A^{k} F_{1km}=0,   
\end{equation}
for some functions $A^k=A^k (x^1,\dots, x^N)$.
The system \eqref{non-diagonal B1m} represents a homogeneous system of $N-1$ linear equations in variables $A^k$.
The assumption that the rank of the matrix  $P=(F_{1ij})$, where $ 2\leq i\leq N,\, 1\leq j\leq N$, is $N-1$ implies that the system \eqref{non-diagonal B1m} has a non-trivial solution which is unique up to proportionality.
%
%

Now, fix  $2 \leq s \leq N $. 
The direct substitution of the functions $A^k$ given by formula \eqref{A_k in N-dim} into the right-hand side of relation \eqref{non-diagonal B1m} gives a row expansion of the determinant of a matrix with the repeated rows, hence the equation $B_{1s}=0$ is satisfied.
%
Note also that $A^k \neq 0$ for some $k$ since the rank of the matrix $P$ is $N-1$.

Now we will show that off-diagonal entries $B_{rt}=0,$ where $ 2\leq r<t\leq N$. In order to do so, we add a row corresponding to the non-diagonal entry  $B_{rt}$ to the coefficient matrix $P$ of the linear system \eqref{non-diagonal B1m} and we will show that the resulting matrix is singular. 
This will imply the existence of a non-trivial solution to the resulting system of $N$ equations.
Indeed, as the first $N-1$ equations have a unique solution given by \eqref{A_k in N-dim} up to proportionality, it also has to solve the last equation. 
 %
 %
 Thus we consider equations \eqref{non-diagonal B1m} together with
  \begin{equation}\label{system for Brt}
 B_{rt}=\sum_{k=1}^N A^k F_{krt}=0 
 \end{equation}
as a system of linear equations for functions $A^k$.
Its coefficient matrix is singular by Lemma \ref{sinqular matrix for rt-entry}.  
 %
This proves the statement.

\end{proof}
The following statement gives a further property of the matrix $B$.
\begin{proposition}\label{diagonal metric.N-dim}
Under the assumptions of Proposition \ref{non-diagonal metric.N-dim}
we have 
\begin{equation*}
    B_{11}=B_{pp} 
\end{equation*}
for all $1\le p \le N$.
\end{proposition}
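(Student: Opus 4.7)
The approach is to reduce the equality $B_{pp}=B_{11}$ to a single application of the commutativity equation $F_1 F_p = F_p F_1$ evaluated at the specific vector $A=(A^1,\dots,A^N)$. The starting observation is that, by the very definition \eqref{A_k in N-dim} of $A^k$ as a signed minor of $P$, the vector $A$ is characterised by the Laplace-expansion identity
\begin{equation*}
F_1 A \;=\; B_{11}\, e_1, \qquad B_{11}=\det F_1,
\end{equation*}
since expansion along any row of $F_1$ other than the first produces a matrix with two equal rows (that row inserted on top of the $P$ that already contains it). Combined with the total symmetry of $F_{ijk}$, one then has $(F_p A)_k = B_{pk}$; because $B$ is diagonal by Proposition~\ref{non-diagonal metric.N-dim}, this sharpens to $F_p A = B_{pp}\, e_p$ for every $p$.

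The central step is to apply the commutativity $F_1 F_p = F_p F_1$ to $A$, combined with the elementary identity $F_p e_1 = F_1 e_p$ (both sides have $j$-th entry $F_{pj1}=F_{1jp}$). This gives
\begin{equation*}
B_{pp}\, F_1 e_p \;=\; F_1 (F_p A) \;=\; F_p (F_1 A) \;=\; B_{11}\, F_p e_1 \;=\; B_{11}\, F_1 e_p,
\end{equation*}
so $(B_{pp}-B_{11})\, F_1 e_p = 0$. Hence $B_{pp}=B_{11}$ as soon as $F_1 e_p \neq 0$.

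The remaining point is to rule out $F_1 e_p = 0$ for $p\geq 2$ (the case $p=1$ is tautological). This is the only place where the rank assumption of Proposition~\ref{non-diagonal metric.N-dim} enters substantively, and I expect it to be the most delicate --- though still elementary --- part of the write-up. By symmetry of $F_1$, the column $F_1 e_p$ coincides with the $p$-th row of $F_1$, and for $p\geq 2$ this is precisely one of the $N-1$ rows of $P$. A zero row would force the remaining $N-2$ rows to span an $(N-1)$-dimensional space, contradicting $\operatorname{rank} P = N-1$. Hence $F_1 e_p \neq 0$ and the proof is complete.
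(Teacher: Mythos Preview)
Your proof is correct and takes a genuinely different, more streamlined route than the paper's.

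The paper proceeds by determinant manipulation: for $p=2$ it chooses $q$ with $F_{12q}\neq 0$, forms the $N\times N$ matrix $M$ by adjoining the row $\bigl(F_{12q}(F_{k11}-F_{k22})\bigr)_{k}$ to $P$, and then shows $\det M=0$ via a chain of row operations that invoke the scalar identity \eqref{ij entry of FiFk} and Lemma~\ref{sinqular matrix for rt-entry}. Singularity of $M$ forces the extra equation $F_{12q}(B_{11}-B_{22})=0$ to be compatible with the system $B_{1m}=0$, whence $B_{11}=B_{22}$.

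Your argument replaces all of this by a single application of $[F_1,F_p]=0$ to the vector $A$. The two key observations --- that $F_1 A = B_{11}e_1$ (Laplace expansion, equivalently $B_{1m}=0$ for $m\ge2$) and $F_p A = B_{pp}e_p$ (this is exactly the diagonality established in Proposition~\ref{non-diagonal metric.N-dim}) --- together with the symmetry $F_p e_1 = F_1 e_p$ collapse the computation to $(B_{pp}-B_{11})F_1 e_p=0$. The rank hypothesis is then used only in the minimal way, to ensure no row of $P$ vanishes. This is shorter and more conceptual: it identifies $A$ as the cofactor vector of $F_1$ and packages Proposition~\ref{non-diagonal metric.N-dim} as the statement that $A$ is a simultaneous ``eigenvector'' for all $F_p$. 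The paper's proof, while longer, has the minor advantage of making explicit at the scalar level which fragments of the commutativity relations are actually consumed.

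One cosmetic point: your final sentence should say that $N-2$ rows can span at most an $(N-2)$-dimensional space, so a zero row would force $\operatorname{rank} P\le N-2$; the logic is of course what you intend.
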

\begin{proof}
Let us assume that $i_0 =1$, the general case can be dealt with similarly.
Let us first consider the case $p=2$.
Since the matrix $P$ 
 has rank $N-1$, this implies that there exists some $q \, \, (1\leq q \leq N) $ such that $F_{12q}\neq 0.$
Following the idea of the proof of Proposition \ref{non-diagonal metric.N-dim}, let us consider the following set of homogeneous equations:
 \begin{align}\label{system for diagonal}
 &B_{1m}=0,\quad 2\leq m \leq N, \nonumber \\ 
 &F_{12q}(B_{11}-B_{22})=0. 
 \end{align}
It is sufficient to show that the coefficient matrix $M$ corresponding to equations \eqref{system for diagonal} considered as linear equations for $A^k$ is singular.  
We have
 \begin{equation*}
 M = 
\begin{pmatrix}
F_{112} & F_{212} & \cdots  & F_{N12} \\
F_{113} & F_{213} & \cdots  & F_{N13} \\
\vdots  & \vdots  & \ddots & \vdots  \\
F_{11N} & F_{21N} & \cdots &  F_{N1N} \\
F_{12q}(F_{111}-F_{122}) & F_{12q}(F_{211}-F_{222}) & \cdots & F_{12q}( F_{N11}- F_{N22})
\end{pmatrix}.
\end{equation*}
Let $D=\det M$.
From the identity  \eqref{ij entry of FiFk} 
we have
\begin{equation}\label{iden1}
F_{12q}F_{111}=\sum_{m=1}^{N}F_{12m}F_{1qm}- \sum_{m=2}^N F_{11m}F_{2qm}.    
\end{equation}
Similarly, 
%
we have
\begin{equation}\label{iden2}
    F_{12q}F_{222}=\sum_{m=1}^{N}F_{12m}F_{2qm}- \sum_{m\neq 2} F_{1qm}F_{22m}.
\end{equation}
Let $R_i$ denote the $i^{th}$ row in the matrix $M$. 
We have 
\begin{align*}
    R_i & =(F_{11(i+1)},F_{21(i+1)},\dots, F_{N1(i+1)}), \quad 1\leq i \leq N-1,\nonumber\\
    R_N & = (r_{N1}, r_{N2}, \dots, r_{NN}), 
\end{align*}
where
\begin{align*}
    r_{N1} & =\sum_{m\neq 2} F_{1qm}F_{12m} -\sum_{m\neq 1} F_{11m}F_{2qm}, \nonumber\\
     r_{N2} & =\sum_{m\neq 2} F_{1qm}F_{22m} -\sum_{m\neq 1} F_{12m}F_{2qm}, \nonumber\\
   r_{Nk}&=  F_{12q}(F_{k11}-F_{k22}),\quad 3\leq k \leq N
\end{align*}
by applying formulas \eqref{iden1}, \eqref{iden2}.
Now let us perform the following row operation on the matrix $M$ and let $\widetilde{M}$ be the resulting matrix:
$$R_{N}\rightarrow \widetilde{R}_{N}= R_N -F_{11q}R_1+\sum_{i=2}^{N}F_{2qi}R_{i-1}.  $$
Let $\widetilde{r}_{Nk}$ be the $k^{th}$ element in the row $\widetilde{R}_{N}$ of the matrix $\widetilde{M}$. 
We have
\begin{align*}
  \widetilde{r}_{N1} & =\sum_{m\neq 1,2}F_{1qm} F_{12m}, \nonumber\\
    \widetilde{r}_{N2} & =\sum_{m\neq 1,2}F_{1qm} F_{22m}, \nonumber\\
   \widetilde{r}_{Nk} & = \sum_{m=1}^N F_{2qm}F_{1km} -F_{12q}F_{22k}-F_{11q}F_{12k},\quad 3\leq k \leq N.
\end{align*}
Let $S_{2m}=(F_{12m}, F_{22m},\dots, F_{N2m}),$ where $3 \leq m \leq N$.
By Lemma \ref{sinqular matrix for rt-entry} row $S_{2m}$ is a linear combination of the rows of the matrix $P$.
 %
 Therefore one can add the row $S_{2m}$
 to the last row of the matrix $\widetilde{M}$ without changing its determinant $D$.
 %
 %
 Let us add the rows $ -F_{1qm} S_{2m}, \,  m=3,\dots, N$, consecutively to the last row of $\widetilde{M}$.
 %
 The last row $(\widehat{r}_{N1},\widehat{r}_{N2},\dots , \widehat{r}_{NN})$ of the resulting matrix has the form
\begin{align*}\label{entries of the last row.v3}
   &\widehat{r}_{N1}= 0 , \quad    \widehat{r}_{N2} = 0, \nonumber\\
   & \widehat{r}_{Nk}  =  \sum_{m=1}^{N}F_{2qm}F_{1km}-\sum_{m=1}^NF_{1qm}F_{2km}=- [F_1,F_2]_{(q,k)},\quad 3\leq k \leq N
\end{align*}
by formula \eqref{ab entry}.
It follows from the commutativity equations that  $D=0$.
This proves that $B_{11}=B_{22}$. Similarly, one can prove that $B_{11}=B_{pp}$ for all $p$.
\end{proof}
As a corollary of Propositions \ref{non-diagonal metric.N-dim} and \ref{diagonal metric.N-dim} the following statement takes place.
\begin{theorem}\label{metric proportiona to idenity}
Under the assumptions of Proposition 
\ref{non-diagonal metric.N-dim}
the matrix $B$ given by formulas \eqref{general formula of the metric}, \eqref{A_k in N-dim}
is proportional to the identity matrix.
\end{theorem}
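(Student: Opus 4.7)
The statement is a direct corollary that synthesizes Propositions \ref{non-diagonal metric.N-dim} and \ref{diagonal metric.N-dim}, so the proof plan is essentially a one-line stitching together of the two preceding results. The approach is to observe that Proposition \ref{non-diagonal metric.N-dim} kills every off-diagonal entry of $B$, while Proposition \ref{diagonal metric.N-dim} forces all diagonal entries to agree; together these properties identify $B$ with a multiple of the identity.

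More explicitly, the plan is as follows. First, since $F$ satisfies the commutativity equations \eqref{FF.ch4} and the matrix $P$ has rank $N-1$, Proposition \ref{non-diagonal metric.N-dim} applies and gives $B_{ij}=0$ whenever $i \neq j$, so $B$ is diagonal. Next, under the same hypotheses, Proposition \ref{diagonal metric.N-dim} yields $B_{11}=B_{pp}$ for all $1 \le p \le N$. Setting $\lambda \coloneqq B_{11}$, we conclude $B = \lambda \cdot I_N$, where $I_N$ is the $N \times N$ identity matrix, which is exactly the claim.

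The only point to verify is that the hypothesis of Proposition \ref{diagonal metric.N-dim} (phrased as "under the assumptions of Proposition \ref{non-diagonal metric.N-dim}") matches the hypothesis of the present theorem, which is indeed the case. Thus there is no genuine obstacle: all the substantive work is already contained in Lemmas \ref{property w.r.t FF. General entries} and \ref{sinqular matrix for rt-entry} and in the proofs of the two propositions, and what remains is purely a combination step. I would simply write the proof as a two-sentence deduction citing both propositions.
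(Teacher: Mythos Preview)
Your proposal is correct and matches the paper's approach exactly: the paper introduces Theorem \ref{metric proportiona to idenity} with the sentence ``As a corollary of Propositions \ref{non-diagonal metric.N-dim} and \ref{diagonal metric.N-dim} the following statement takes place,'' and gives no further proof. Your two-sentence deduction is precisely what is intended.
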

%
We also have the following result.
\begin{proposition}\label{non-deg linear combination of F}
Under the assumptions of Proposition 
\ref{non-diagonal metric.N-dim}
%
suppose also that there exists a non-degenerate linear combination $G=\eta^{k}F_k$ for some functions $\eta^k,  (1\leq k \leq N)$.
%
Then matrix $B$ given by formulas \eqref{general formula of the metric}, \eqref{A_k in N-dim} is a non-zero multiple of the identity matrix. 
\end{proposition}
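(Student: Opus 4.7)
The plan is to invoke Theorem \ref{metric proportiona to idenity}, which already forces $B = \mu I$ for some scalar function $\mu$ under the stated rank hypothesis, and then rule out the possibility $\mu \equiv 0$ by a contradiction argument that uses the non-degenerate combination $G$. The key observation I want to exploit is that $B = 0$ would force the coefficient vector $A = (A^1,\dots,A^N)^T$ of \eqref{A_k in N-dim} to be annihilated by every matrix $F_k$.

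To carry this out, I would assume for contradiction that $B = 0$ identically, and then simply read off the $(p,q)$ matrix entry: using the full symmetry $F_{kpq} = F_{pkq}$ of third derivatives,
\[
0 \;=\; B_{pq} \;=\; \sum_{k=1}^{N} A^{k} F_{kpq} \;=\; \sum_{k=1}^{N} A^{k} (F_{p})_{kq} \;=\; (F_{p}\, A)_{q},
\]
for every $p, q$. Hence $F_{p}\, A = 0$ for each $p = 1, \dots, N$. Taking the linear combination with coefficients $\eta^{p}$ then gives $G\, A = \sum_{p} \eta^{p} F_{p}\, A = 0$, and the non-degeneracy of $G$ forces $A = 0$.

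To close the argument, I would observe that the hypothesis $\operatorname{rank} P = N-1$ (inherited from Proposition \ref{non-diagonal metric.N-dim}) guarantees that at least one of the $(N-1)\times(N-1)$ minors $\det P_{k}$ is non-zero, so at least one $A^{k}$ is non-zero and $A \neq 0$; this observation was already recorded inside the proof of Proposition \ref{non-diagonal metric.N-dim}. The resulting contradiction rules out $\mu \equiv 0$, so $B$ is a non-zero multiple of the identity as desired. There is no substantial obstacle here beyond the identification $B_{pq} = (F_{p}A)_{q}$; the whole argument rests on the symmetry of third derivatives together with the two results already established (Theorem \ref{metric proportiona to idenity} and the non-vanishing of $A$), plus the one new input of the non-degenerate linear combination $G$.
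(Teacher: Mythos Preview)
Your proof is correct and essentially identical to the paper's: both invoke Theorem \ref{metric proportiona to idenity} to get $B=h\,I$, assume $h=0$, deduce from $A^{k}F_{ijk}=0$ that $\eta^{l}A^{k}F_{ljk}=0$ (i.e.\ $GA=0$), and conclude that the non-zero vector $A$ lies in the kernel of $G$, contradicting non-degeneracy. The only difference is notational---you phrase the key step as $F_{p}A=0$ via the matrix identification, while the paper stays in index form.
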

\begin{proof}
From Theorem \ref{metric proportiona to idenity} we know that the matrix $B$ is proportional to the identity matrix. It remains to show that $B$ is not the zero matrix.
Let $B_{ij}= A^k F_{ijk}  =h\delta_{ij}$ for some function $h=h(x)$.
Assume that $h=0$. Then $A^k F_{ijk}=0$. Hence $\eta^{l} A^k F_{ljk}=0$, which means that the non-zero vector $(A^1 , \dots , A^N)$ belongs to the kernel of the form $G$ (cf. a similar argument in \cite{Misha&Veselov 2008}). Therefore  $G$ is degenerate, which contradicts the assumption. Hence $h\neq 0$ and the statement follows.  
\end{proof}

The following theorem is a corollary of Theorem \ref{metric proportiona to idenity} and Proposition \ref{non-deg linear combination of F}, and it explains that a function $F$ satisfying the commutativity equations also solves the WDVV equations.
%
\begin{theorem}\label{F solves WDVV and FF}
Under the assumptions of Proposition 
\ref{non-diagonal metric.N-dim} suppose that there exists a non-degenerate linear combination $G=\eta^{k}F_k$ for some functions $\eta^k$.
Then $F$ is a solution of WDVV equations \eqref{WDVV with metric B form} where the matrix $B$ is given by formulas \eqref{general formula of the metric}, \eqref{A_k in N-dim}.

\end{theorem}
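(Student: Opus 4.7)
The plan is to deduce the theorem directly from the two results it cites as its hypotheses. By Theorem \ref{metric proportiona to idenity}, the matrix $B$ defined by formulas \eqref{general formula of the metric}, \eqref{A_k in N-dim} is proportional to the identity matrix, say $B = h\, I$ for some scalar function $h=h(x)$. Proposition \ref{non-deg linear combination of F} then upgrades this by using the additional hypothesis of the existence of a non-degenerate linear combination $G=\eta^{k}F_k$: it guarantees that $h\neq 0$, so $B$ is a \emph{non-zero} multiple of the identity, and in particular $B$ is invertible with $B^{-1}=h^{-1}I$.

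Once $B^{-1}$ is known to be a scalar matrix, the WDVV equations \eqref{WDVV with metric B form} reduce almost tautologically to the commutativity equations. Concretely, I would write
\begin{equation*}
F_i B^{-1} F_j \;=\; h^{-1} F_i F_j \;=\; h^{-1} F_j F_i \;=\; F_j B^{-1} F_i,
\end{equation*}
where the middle equality is the standing commutativity equation \eqref{FF.ch4}. This shows that $F$ satisfies \eqref{WDVV with metric B form} with the specific choice of $B$ coming from \eqref{general formula of the metric}, \eqref{A_k in N-dim}, as claimed.

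Since the heavy lifting (proportionality of $B$ to the identity, and non-vanishing of the proportionality factor) has already been done in Propositions \ref{non-diagonal metric.N-dim}, \ref{diagonal metric.N-dim}, and \ref{non-deg linear combination of F}, there is no real obstacle here: the theorem is a direct corollary and the argument is essentially the one already indicated in the remark preceding the statement that \eqref{WDVV with metric B form} reduces to \eqref{FF.ch4} precisely when $B$ is a multiple of the identity, read in the opposite direction.
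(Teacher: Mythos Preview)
Your argument is correct and is exactly the intended one: the paper states the theorem as a corollary of Theorem~\ref{metric proportiona to idenity} and Proposition~\ref{non-deg linear combination of F}, and your proof simply spells out that corollary by noting that $B=hI$ with $h\neq 0$ turns \eqref{WDVV with metric B form} into the commutativity equations \eqref{FF.ch4}.
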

%
 
\begin{remark}
Note that under the assumptions of Theorem \ref{F solves WDVV and FF}, function $F$ also satisfies the generalized WDVV equations 
\begin{equation*}
{F}_i {F}_{j}^{-1} {F}_k= {F}_k {F}_{j}^{-1} {F}_i, \quad i,j,k=1, \dots,N 
\end{equation*}
provided that matrices $F_j$ are non-degenerate. Indeed these equations follow from equations \eqref{WDVV with metric B form} 
by the result from \cite{Martini+Gragert 1999} (see also \cite{MMM.2000}). 
It also follows that $F$ satisfies the WDVV equations 
\begin{equation*}
{F}_i G^{-1} {F}_j= {F}_j G^{-1} {F}_i, \quad i,j=1, \dots,N
\end{equation*}
for any non-degenerate linear combination $G=a^i F_i$.
\end{remark}
\section{Existence of the identity field}\label{Existence of e}
In this Section we define a natural multiplication on the tangent plane $T_{x}V$ associated with a solution $F$ of the commutativity equations. 
We find the identity vector field of this multiplication and establish that it is proportional to the vector field $A^k \partial_{x^k}$,
where functions $A^k$ are given by formula \eqref{A_k in N-dim}.
%
%
Thus we will express the identity vector field in terms of $F$.

%
%
For any functions $ u=(u^1 , \dots , u^N),\, v=(v^1 , \dots , v^N)\colon  V \to V$, consider vector fields $\partial_u =u^i \partial_{x^i},\, \partial_v =v^i \partial_{x^i}\in \Gamma (TV)$. 
Let us define the following multiplication on the tangent space $T_{x}V$ for generic $x \in V \colon$
\begin{equation} \label{algebra.1}
\partial_{u} \ast \partial_{v}= u^{i} v^{j} F_{ijk}
\partial_{x^{k}}.
\end{equation}
Note that multiplication \eqref{algebra.1} defines a commutative associative algebra on $T_x V$ if $F$ satisfies commutativity equations \eqref{FF.ch4}.

Consider a vector field 
\begin{equation}\label{e the unity field}
e=e^{k}\partial_{x^{k}}, 
\end{equation}
where $e^{k}=e^{k}(x^{1},\dots,x^{N})$ 
are some functions.
Consider an $N\times N$ matrix $B=(B_{ij})_{i,j=1}^{N}$
%
given by
\begin{equation}\label{B matrix and e. formula}
B_{ij}=e(F_{ij})=e^{k} F_{ijk}, \quad i,j=1,\dots, N.
\end{equation}
%
%
\begin{proposition}\label{B matrix and e.2}
The following statements are equivalent:

(1) The matrix $B$ with entries given by  \eqref{B matrix and e. formula} is equal to the identity matrix,

(2) The vector field $e$ given by formula \eqref{e the unity field} is the identity vector field of the multiplication \eqref{algebra.1}. 
\end{proposition}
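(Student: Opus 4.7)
The plan is to directly unwind both conditions and observe that they coincide. The key preliminary observation is that $F_{ijk}$ is fully symmetric under permutations of its three indices, since it equals a mixed third partial derivative of $F$. This has two useful consequences: first, the product $\ast$ defined by \eqref{algebra.1} is automatically commutative in $\partial_u$ and $\partial_v$, so it suffices to verify the identity relation from one side; second, the quantity $e^i F_{ijk}$ is literally the matrix entry $B_{jk}$ defined in \eqref{B matrix and e. formula}.

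I would then compute $e \ast \partial_v$ for an arbitrary vector field $\partial_v = v^k \partial_{x^k}$ directly from \eqref{algebra.1}. Factoring out $v^j$ and invoking the symmetry of $F_{ijk}$ together with \eqref{B matrix and e. formula} rewrites the result in the form $v^j B_{jk}\,\partial_{x^k}$. The condition that $e$ act as the identity, namely that this expression equal $\partial_v = v^k \partial_{x^k}$ for every choice of coefficient functions $v^1,\dots,v^N$, is equivalent to the pointwise matrix identity $B_{jk} = \delta_{jk}$ for all $j,k$, that is, to $B = I$. Running the argument in either direction yields both implications simultaneously, with no obstruction beyond carefully tracking indices and using the full $S_3$-symmetry of $F_{ijk}$ at the right moment.
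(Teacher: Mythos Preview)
Your proposal is correct and follows essentially the same approach as the paper: compute $e \ast \partial_v$ directly from the definition, rewrite it as $v^j B_{jk}\,\partial_{x^k}$ using $B_{jk} = e^i F_{ijk}$, and observe that equality with $\partial_v$ for all $v$ is equivalent to $B_{jk} = \delta_{jk}$. The paper separates the two implications explicitly while you run them as a single biconditional, and you are a bit more careful to highlight the symmetry of $F_{ijk}$, but there is no substantive difference.
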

\begin{proof}
From relations \eqref{algebra.1}, \eqref{e the unity field} and \eqref{B matrix and e. formula} we have
\begin{align}\label{product w.r.t B}
 e \ast \partial_{v}= e^{i} v^{j}\partial_{x^{i}}\ast \partial_{x^{j}}
 = e^{i} v^{j} F_{ijk} \delta^{kl} \partial_{x^{l}}
 =B_{jk}  v^{j} \delta^{kl} \partial_{x^{l}}.
\end{align}
Let us firstly assume that $B_{jk}=\delta_{jk}$. Then relation \eqref{product w.r.t B} reduces to
\begin{align*}
 e \ast \partial_{v} 
 =  v^{j} \partial_{x^{j}} =\partial_{v}.
\end{align*}
That is statement (2) follows from (1).

Secondly, assume that $e$ 
is the identity vector field of the multiplication \eqref{algebra.1}. 
Then from relation \eqref{product w.r.t B} we have
\begin{align*}
 e \ast \partial_{v} 
 =B_{jk}  v^{j} \partial_{x^{k}}= \partial_{v} =   v^{j} \partial_{x^{j}}.
\end{align*}
This implies that $B_{jk}=\delta_{jk}$, that is statement (1) holds.
\end{proof}
%
 
Proposition \ref{B matrix and e.2} allows us to reformulate Theorem \ref{metric proportiona to idenity} as follows.
\begin{theorem}\label{generalised Th for FF with e}
Under the assumptions of Proposition 
\ref{non-diagonal metric.N-dim} there exists the identity vector field $e=e^k \partial_{x^k}$
for the multiplication \eqref{algebra.1}.
It has the form $e^k =h^{-1}A^k$, where $A^k$ is given by formula \eqref{A_k in N-dim} and
$h=A^k F_{kii}$ (for any $i=1,\dots, N$).
 \end{theorem}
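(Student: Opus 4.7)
The plan is to combine Theorem \ref{metric proportiona to idenity} with Proposition \ref{B matrix and e.2} and to verify only the scalar normalization by $h$.

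First, I would apply Theorem \ref{metric proportiona to idenity} to the functions $A^k$ defined by formula \eqref{A_k in N-dim}. Under the running assumptions this yields that the matrix $B$ with entries $B_{ij}=A^k F_{ijk}$ is proportional to the identity, so there is a scalar function $h=h(x)$ such that $A^k F_{ijk}=h\,\delta_{ij}$ for all $i,j$. In particular, taking $i=j$ shows $h=A^k F_{iik}$, and the statement of Theorem \ref{metric proportiona to idenity} (via Proposition \ref{diagonal metric.N-dim}) ensures that this value is independent of the choice of $i\in\{1,\dots,N\}$, so the expression for $h$ in the theorem is unambiguous.

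Next, I would argue that $h$ is not identically zero on the open set where the rank assumption of Proposition \ref{non-diagonal metric.N-dim} holds; indeed, otherwise the vector $(A^{1},\dots,A^{N})$ would lie in the kernel of every matrix $F_\ell$, which (after combining the $F_\ell$ with a non-degenerate combination, exactly as in the proof of Proposition \ref{non-deg linear combination of F}) contradicts the non-triviality of $(A^{1},\dots,A^{N})$ guaranteed by the rank condition on $P$. So on the relevant open set $h\neq 0$ and one may set
\begin{equation*}
e^{k}=h^{-1}A^{k}, \qquad e=e^{k}\partial_{x^{k}}.
\end{equation*}

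Finally, substituting this $e$ into formula \eqref{B matrix and e. formula} gives
\begin{equation*}
e(F_{ij})=e^{k}F_{ijk}=h^{-1}A^{k}F_{ijk}=h^{-1}\cdot h\,\delta_{ij}=\delta_{ij},
\end{equation*}
so the associated matrix $B$ is the identity. By Proposition \ref{B matrix and e.2} this is equivalent to $e$ being the identity vector field for the multiplication \eqref{algebra.1}, completing the proof.

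The routine part is the algebraic manipulation; the only real obstacle is justifying $h\neq 0$ so that the definition $e^{k}=h^{-1}A^{k}$ makes sense. This could either be taken as an implicit non-degeneracy hypothesis (consistent with the non-degeneracy assumptions used in Proposition \ref{non-deg linear combination of F}) or argued generically on the open locus where the rank-$N-1$ condition on $P$ holds, as above.
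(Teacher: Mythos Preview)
Your proof is correct and follows exactly the approach the paper intends: the paper presents Theorem \ref{generalised Th for FF with e} simply as a reformulation of Theorem \ref{metric proportiona to idenity} via Proposition \ref{B matrix and e.2}, and you have written out precisely those details. Your observation about $h\neq 0$ is well taken: the paper's statement tacitly assumes this (otherwise $e^{k}=h^{-1}A^{k}$ is undefined), and the justification you sketch is the one the paper gives separately in Proposition \ref{non-deg linear combination of F}, which requires the extra hypothesis of a non-degenerate combination $G=\eta^{k}F_{k}$; under only the assumptions of Proposition \ref{non-diagonal metric.N-dim} this is indeed an implicit additional hypothesis, as you note.
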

%
 
%
%
Now we are going to generalize Theorem \ref{F solves WDVV and FF} to the case of an arbitrary constant metric $g$ in place of the standard metric $\delta_{ij}$. Thus we start with equations of the form
\begin{equation}\label{wdvv with const g}
    F_{i j \alpha} g^{\alpha \beta} F_{\beta k l}=
    F_{k j \alpha} g^{\alpha \beta} F_{\beta i l}, quad 1\le i, j, k, l \le N.
\end{equation}
We will show that matrix $(g_{\alpha \beta})_{\alpha, \beta =1}^N$ can be represented as a linear combination of the matrices $F_i$ under some non-degeneracy assumptions. 
\begin{theorem}\label{Theorem of unity field}
Let $F=F(x^1 ,\dots , x^N)$ be a function on $V$ which satisfies equations \eqref{wdvv with const g}
%
%
for some constant symmetric non-degenerate matrix $(g^{\alpha \beta})$.
%
Define new coordinates 
\begin{align}\label{changing variables}
 y^i =\widehat{C} _{j}^{i} x^j ,
\end{align}
where the matrix $\widehat{C}=(\widehat{C}^i_j)$
satisfies the relations $ \widehat{C}_{i}^{\alpha} \widehat{C}_{j}^{\beta} g^{ij}=\delta^{\alpha \beta}$, where $1 \le \alpha, \beta \le N$. 
%
%
Let $\widetilde{F}(y)=F(x)$. Suppose that there exists $i_0, 1\leq i_0 \leq N$, such that the matrix $(\widetilde{F}_{i_0 ij}(y))$ has rank $N-1$, where 
$\widetilde F_{i_0 i j} =\frac{\partial^3 \widetilde F}{\partial y^{i_0} y^i y^j}$ 
and $1\leq i, j\leq N, \, i \neq i_0$.   
Then there exists a unique vector field $e=e^{k}(x)\partial_{x^k}$ 
such that
\begin{equation*}
    e(F_{lm})=e^{k} F_{klm}=g_{lm},
\end{equation*}
where $(g_{lm})$ is the inverse matrix for $(g^{\alpha \beta})$.
\end{theorem}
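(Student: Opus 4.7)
The plan is to reduce the theorem to the commutativity-equations setting handled in Theorem \ref{generalised Th for FF with e}, via the linear change of variables \eqref{changing variables} which brings $g^{\alpha\beta}$ to the standard form $\delta^{\alpha\beta}$ in the new coordinates. Since $g$ is non-degenerate and symmetric, the matrix $\hat{C}$ with $\hat{C}^\alpha_i \hat{C}^\beta_j g^{ij} = \delta^{\alpha\beta}$ exists and is invertible, and its defining property rewrites as $\hat{C}^\alpha_l \hat{C}^\alpha_m = g_{lm}$ (summed over $\alpha$).

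First, using the chain rule $F_{ijk} = \hat{C}^a_i \hat{C}^b_j \hat{C}^c_k \tilde{F}_{abc}$ together with the contraction $\hat{C}^c_\alpha \hat{C}^d_\beta g^{\alpha\beta} = \delta^{cd}$, a computation parallel to the proof of Proposition \ref{FF under orthogonal transformation} shows that \eqref{wdvv with const g} becomes the commutativity equations $\tilde{F}_i \tilde{F}_j = \tilde{F}_j \tilde{F}_i$ for $\tilde{F}$ in the $y$-coordinates. Second, the rank hypothesis on $(\tilde{F}_{i_0 ij})_{i \neq i_0}$ is exactly the hypothesis of Proposition \ref{non-diagonal metric.N-dim} applied to $\tilde{F}$, so by Theorem \ref{generalised Th for FF with e} there exists a vector field $\tilde{e} = \tilde{e}^k \partial_{y^k}$ with $\tilde{e}^k \tilde{F}_{klm} = \delta_{lm}$; explicitly $\tilde{e}^k = h^{-1} A^k$, where $A^k$ is given by \eqref{A_k in N-dim} applied to $\tilde{F}$ and $h = A^k \tilde{F}_{kii}$ (independent of $i$ by Proposition \ref{diagonal metric.N-dim}). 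Third, setting $e^j = (\hat{C}^{-1})^j_k \tilde{e}^k$ so that $e$ and $\tilde{e}$ represent the same vector field on $V$, direct substitution using $F_{klm} = \hat{C}^p_k \hat{C}^q_l \hat{C}^r_m \tilde{F}_{pqr}$ and $\hat{C}^\alpha_l \hat{C}^\alpha_m = g_{lm}$ gives
\[
e^k F_{klm} = \tilde{e}^p \hat{C}^q_l \hat{C}^r_m \tilde{F}_{pqr} = \hat{C}^q_l \hat{C}^r_m \delta_{qr} = g_{lm},
\]
as required. For uniqueness, if $e_1, e_2$ both satisfy the defining relation then $\tilde{\delta} := \tilde{e}_1 - \tilde{e}_2$ satisfies $\tilde{\delta}^k \tilde{F}_{klm} = 0$ for all $l, m$; the rank-$(N-1)$ hypothesis applied to the equations with $l = i_0$ and $m \neq i_0$ forces $\tilde{\delta}$ to be a scalar multiple of $A^k$, and then $\tilde{\delta}^k \tilde{F}_{kii} = 0$ combined with $h \neq 0$ gives $\tilde{\delta} = 0$.

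The main obstacle is securing $h \neq 0$, on which both the existence formula $\tilde{e} = h^{-1} A$ and the uniqueness argument depend. By Proposition \ref{non-deg linear combination of F} this is equivalent to the existence of \emph{some} non-degenerate linear combination $\eta^k \tilde{F}_k$ of the matrices $\tilde{F}_k$, i.e., to the identity matrix $I$ lying in the span of the $\tilde{F}_k$'s; equivalently, it amounts to invertibility of the matrix $\tilde{F}_{i_0}$ itself. The non-degeneracy of $g$ in \eqref{wdvv with const g} is the input that must furnish this condition: the WDVV-algebra $u \ast_g v = u^i v^j F_{ijk} g^{kl} \partial_{x^l}$ is an honest commutative associative algebra whose identity is precisely the sought $e$, and after the change of coordinates this identity realises $I$ as a combination of the $\tilde{F}_k$'s. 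Turning this observation into a direct derivation of $h \neq 0$ from the stated hypotheses is the substantive technical point of the proof.
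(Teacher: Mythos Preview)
Your approach mirrors the paper's proof very closely: the paper also (i) uses the change of variables \eqref{changing variables} and the chain rule to reduce \eqref{wdvv with const g} to the commutativity equations $\widetilde F_i\widetilde F_j=\widetilde F_j\widetilde F_i$, (ii) invokes the rank hypothesis to produce the identity field $e$ in the $y$-coordinates, and (iii) transforms back using $\widehat C^\alpha_l\widehat C^\alpha_m=g_{lm}$ to obtain $e(F_{lm})=g_{lm}$. Your uniqueness argument is a bit more explicit than the paper's (the paper simply asserts uniqueness), but the logic is the same: the rank-$(N-1)$ hypothesis forces any homogeneous solution to be proportional to $(A^k)$, and then the diagonal condition kills the proportionality constant.

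The issue you flag about $h\neq 0$ is real, and it is worth noting that the paper's proof does not address it either: at the key step the paper writes ``it follows by Theorem~\ref{metric proportiona to idenity} that there exists a unique vector field $e$ \ldots'', but Theorem~\ref{metric proportiona to idenity} only gives $B=hI$ with no claim that $h\neq 0$; the non-vanishing is established separately in Proposition~\ref{non-deg linear combination of F} under the extra hypothesis of a non-degenerate combination $\eta^k F_k$, which is \emph{not} among the hypotheses of Theorem~\ref{Theorem of unity field}. So you are not missing an argument that the paper supplies --- the paper glosses over exactly the point you isolate. Your proposed route to $h\neq 0$ via the WDVV-algebra identity is, as you seem to sense, circular (it presupposes the existence of $e$), and your alternative reformulation ``equivalently, it amounts to invertibility of $\widetilde F_{i_0}$'' overstates things: invertibility of $\widetilde F_{i_0}$ is sufficient (take $\eta^k=\delta^k_{i_0}$ in Proposition~\ref{non-deg linear combination of F}) but not obviously necessary, and it does not follow from the rank-$(N-1)$ hypothesis on $P$ alone. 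In short, your proof matches the paper's both in strategy and in the residual gap.
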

%
%
%
%
\begin{proof}
%
 %
Let $C=\widehat C^{-1} = (C_k^i)$.  
Then $x^i = C_{j}^{i} y^j$.
We also have
\begin{equation}\label{basis change}
    \partial_{x^j}= \widehat{C}^{i}_{j} \partial_{y^i}, \quad 
     \partial_{y^j}= {C}^{i}_{j} \partial_{x^i}.
\end{equation}
%
%
%
%
%
From \eqref{basis change} we have the following relations:
\begin{equation*}
    F_j (x)=\widehat{C}^{i}_{j}\partial_{y^i}\widetilde{F}(y)=\widehat{C}^{i}_{j}\widetilde{F}_i(y).
\end{equation*}
%
%
%
%
Hence, 
\begin{align}\label{3rd derivitives after changing variables v1}
 F_{pjk}(x) = \widehat{C}^{m}_{p}\widehat{C}^{i}_{j}\widehat{C}^{l}_{k}\widetilde{F}_{mil}(y).
\end{align}
By multiplying relation \eqref{3rd derivitives after changing variables v1}
by $C_{a}^{p} C_{b}^{j} C_{c}^{k}$ we get 
\begin{align*}
 \widetilde{F}_{abc}(y)= C^{p}_{a} C^{j}_{b}  C^{k}_{c} F_{pjk}(x).
\end{align*}
%
%
By applying relation \eqref{3rd derivitives after changing variables v1} we rewrite equation  \eqref{wdvv with const g} as 
\begin{align}\label{wdvv with const g. v3}
    \widehat{C}_{i}^{p}  \widehat{C}_{j}^{q}  \widehat{C}_{\alpha}^{r} \widetilde{F}_{pqr}
    g^{\alpha \beta}
     \widehat{C}_{\beta}^{a}  \widehat{C}_{k}^{b}  \widehat{C}_{l}^{d} \widetilde{F}_{abd}
     =  \widehat{C}_{k}^{s}  \widehat{C}_{j}^{q}  \widehat{C}_{\alpha}^{r} \widetilde{F}_{sqr} 
     g^{\alpha \beta}
      \widehat{C}_{\beta}^{a}  \widehat{C}_{i}^{b}  \widehat{C}_{l}^{d} \widetilde{F}_{abd}.
\end{align}
It follows from the relation 
$\widehat{C}^{\alpha}_i \widehat{C}^{\beta}_j g^{ij}=\delta^{\alpha \beta}$
that
\begin{equation}\label{iden3}
    \widehat{C}^{l}_a \widehat{C}^{m}_b \delta_{lm}=g_{ab}.
\end{equation}
%
 Hence we reduce equation \eqref{wdvv with const g. v3} to
\begin{align}\label{wdvv with const g. v4}
    \widehat{C}_{i}^{p} 
    \widehat{C}_{k}^{b} 
   \widetilde{F}_{pqr}
    \widetilde{F}_{rbd}
     =  \widehat{C}_{k}^{s}  \widehat{C}_{i}^{b}  \widetilde{F}_{sqr} 
       \widetilde{F}_{rbd}.
\end{align}
By multiplying equation \eqref{wdvv with const g. v4} by $C_{n}^{i} C_{m}^{k}$ we get
\begin{align*}
   \widetilde{F}_{nqr}
    \widetilde{F}_{rmd}
     =   \widetilde{F}_{mqr} 
       \widetilde{F}_{rnd},
\end{align*}
that is $ \widetilde{F}_m$ and $ \widetilde{F}_n$ commute. 
Now since the rank of the matrix $ (\widetilde{F}_{i_0 ij})$ is  $N-1$, 
it follows by Theorem \ref{metric proportiona to idenity} that there exists a unique vector field $e=e^j (y) \partial_{y^j}$ 
such that 
\begin{equation*}
    e(\widetilde{F}_{\alpha \beta}(y))=e^j (y) \widetilde{F}_{j \alpha \beta}(y)=\delta_{\alpha \beta}.
\end{equation*}
%
%
From relation \eqref{3rd derivitives after changing variables v1} we have
$$C_{j}^{i}\partial_{x^i}F_{ab}(x)
=\widehat{C}_{a}^{l} \widehat{C}_{b}^{m}  \widetilde{F}_{jlm}(y).$$
This equation implies that
\begin{equation*}
   e(F_{ab}(x))
   =e^j C_{j}^{i}\partial_{x^i}F_{ab}(x)=
   e^j \widetilde{F}_{jlm}(y)\widehat{C}_{a}^{l} \widehat{C}_{b}^{lm}
    =\delta_{lm}\widehat{C}_{a}^{l} \widehat{C}_{b}^{m}=g_{ab}
\end{equation*}
by relation \eqref{iden3}. 
This proves the theorem. 
\end{proof}
\begin{remark}
\label{twodimrem1}
 We note that the maximality of the rank of the matrix $P$   is sufficient but not necessary for the existence of the identity field. Indeed, in the case of two-dimensional Frobenius manifold consider the function $F$ given by 
 \begin{equation}
 \label{prepdim2}
     F(t^1, t^2)=\frac{1}{2}(t^1)^2 t^2 +f(t^2)
 \end{equation} 
with some function $f(t^2)$.
 We have equation $F_1 G F_2= F_2 G F_1$, where 
the matrix entries $(F_i)_{kl}=\frac{\partial^3F}{\partial t^i \partial t^k \partial t^l}$, and 
\begin{equation}
G=G^{-1}=
 \begin{pmatrix}
 0 &1\\
 1 & 0\\
 \end{pmatrix}
 =F_1 .
\end{equation}
%
Now let 
\begin{align*}
C=
 \begin{pmatrix}
 -\frac{i}{2} &\frac{1}{2}\\
 i & 1\\
 \end{pmatrix}, \quad
 \widehat{C}=
 \begin{pmatrix}
i & -\frac{i}{2}\\
 1 &  \frac{1}{2}\\
 \end{pmatrix}
  \end{align*}
be the matrices of the change of variables such that 
\begin{align}\label{changing variables1,2}
 t^i ={C} _{j}^{i} x^j, \quad
 x^i =\widehat{C} _{j}^{i} t^j ,
\end{align}
where $(x^1, x^2)$ 
is a new coordinate system and the matrix $G=(g^{ij})$ satisfies the relation
$ \widehat{C}_{i}^{\alpha} \widehat{C}_{j}^{\beta} g^{ij}=\delta^{\alpha \beta}$. 
   Let $\widetilde{F}(x)=F(t)$.
   Then we have
   \begin{align*}
     \widetilde{F}_{112}&
 =\frac{\partial^3 \widetilde{F}}{\partial x^1 \partial x^1 \partial x^2}
 =\frac{1}{4}-f^{'''}(ix^1 +x^2),\\
 \widetilde{F}_{122}&
 =\frac{\partial^3 \widetilde{F}}{\partial x^1 \partial x^2 \partial x^2}
 =-\frac{i}{4}+if^{'''}(ix^1 +x^2).
   \end{align*}
 %
 Note that the matrix $( \widetilde{F}_{112}, \widetilde{F}_{122})$ has rank zero if  $f(t^2)=\frac{1}{24}(t^2)^3=\frac{1}{24}(ix^1+x^2)^3$.
 Nonetheless $e=\partial_{t^1}=i\partial_{x^1}+ \partial_{x^2}$ is the identity field.
 \end{remark}
 
 \begin{remark}
 The maximality of the rank of the matrix $P$ condition may be satisfied in the case of a family of non-semisimple algebras. An example is given by prepotential \eqref{prepdim2} with $f=0$, as it follows from considerations in Remark \ref{twodimrem1}.
 \end{remark}

%
%

\section{Identity field for non-simply laced root systems and their projections}\label{Section.Uniform formula}
%

%
%
%
%
 %

We are going to relate the identity field for a solution of commutativity equations and a restriction of such a solution. Firstly, we have the following statement.


%
\begin{lemma}\label{e under orthogonal transformation}
Let $F(x)$ be a function and $e=e^{k}\partial_{x^{k}}$ be a vector field such that $e(F_{ij})=\delta_{ij}$, where $1\leq i,j \leq N$.
Let $\widetilde{x}^k = C^k_i x^i$ for a matrix $C=(C^k_i)\in O(N,\mathbb C)$.
%
Let $\widetilde{F}(\widetilde{x})=F(x)$. 
%
Then ${e}(\widetilde{F}_{\mu \nu})=\delta_{\mu\nu}$, where $1\leq \mu ,\nu \leq N$.
%
%
\end{lemma}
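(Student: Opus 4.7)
The plan is to verify the statement by a direct chain-rule computation, patterned on the proof of Proposition~2.2 (invariance of the commutativity equations under $O(N,\mathbb{C})$). The key ingredients are (i) the transformation rule for the third derivatives of $F$, (ii) the transformation rule for the components of the vector field $e$, and (iii) the defining orthogonality relations for $C$.

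First I would write out how $e$ looks in the new coordinates. Since $\widetilde{x}^k = C^k_i x^i$, we have $\partial_{x^i} = C^k_i\,\partial_{\widetilde{x}^k}$, so
\begin{equation*}
e \;=\; e^k \partial_{x^k} \;=\; \widetilde{e}^{\,l}\,\partial_{\widetilde{x}^l}, \qquad \widetilde{e}^{\,l} \;=\; C^l_k\, e^k.
\end{equation*}
Next I would express the third derivatives of $F$ in the old coordinates through those of $\widetilde{F}$ in the new coordinates, obtaining
\begin{equation*}
F_{kij} \;=\; C^\lambda_k\, C^\mu_i\, C^\nu_j\, \widetilde{F}_{\lambda\mu\nu},
\end{equation*}
just as in the proof of Proposition~2.2.

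Then I would substitute this expression into the hypothesis $e^k F_{kij} = \delta_{ij}$ to get
\begin{equation*}
\widetilde{e}^{\,\lambda}\, C^\mu_i\, C^\nu_j\, \widetilde{F}_{\lambda\mu\nu} \;=\; \delta_{ij}.
\end{equation*}
Contracting both sides with $\widehat{C}^i_\alpha \widehat{C}^j_\beta$, where $\widehat{C} = C^{-1}$, and using $\widehat{C}^i_\alpha C^\mu_i = \delta^\mu_\alpha$ on the left, I get $\widetilde{e}^{\,\lambda}\,\widetilde{F}_{\lambda\alpha\beta}$ on the left. On the right, the orthogonality condition $C \in O(N,\mathbb{C})$ gives $\widehat{C} = C^{T}$, hence $\widehat{C}^i_\alpha \widehat{C}^i_\beta = C^\alpha_i C^\beta_i = \delta_{\alpha\beta}$. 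Combining the two sides yields
\begin{equation*}
\widetilde{e}^{\,\lambda}\, \widetilde{F}_{\lambda\alpha\beta} \;=\; \delta_{\alpha\beta},
\end{equation*}
which is precisely $e(\widetilde{F}_{\alpha\beta}) = \delta_{\alpha\beta}$, as required.

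There is no real obstacle here; the only subtlety is the bookkeeping of indices and making sure that both orthogonality relations $C^k_i C^k_j = \delta_{ij}$ and $C^i_k C^j_k = \delta^{ij}$ are used correctly (one to transform the third derivatives, the other to dispose of the factor $\delta_{ij}$ on the right-hand side after contracting with $\widehat{C}$). The calculation is entirely parallel to, but slightly shorter than, the one in the proof of Proposition~2.2.
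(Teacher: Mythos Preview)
Your proposal is correct and follows essentially the same approach as the paper's proof: both compute the transformation of the third derivatives via the chain rule, substitute into $e^k F_{kij}=\delta_{ij}$, contract with $\widehat{C}$ on the two free indices, and use the orthogonality of $C$ (equivalently of $\widehat{C}$) to turn $\widehat{C}^i_\alpha\widehat{C}^i_\beta$ into $\delta_{\alpha\beta}$. The only cosmetic difference is that you explicitly name the transformed components $\widetilde{e}^{\,l}=C^l_k e^k$ at the outset, whereas the paper carries the expression $e^k C_k^{\widetilde{k}}$ through to the end.
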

\begin{proof}
We have
\begin{equation}\label{e(F)=delta}
e^k F_{ijk}=\delta_{ij}.   
\end{equation}
By relation \eqref{changing variables.V2}
we have 
$\partial_{x^i}=C_i^k \partial_{\widetilde{x}^k}$.
Hence we have
\begin{align}\label{change F.2}
F_{ijk}={C}_{k}^{\widetilde{k}}{C}_{j}^{\widetilde{j}}{C}_{i}^{\widetilde{i}}\widetilde{F}_{\widetilde{i}\widetilde{j}\widetilde{k}}.   
\end{align}
%
Then by formula \eqref{change F.2} relation \eqref{e(F)=delta} can be written as
\begin{align}\label{change F.3}
 {e}^k 
 {C}_{k}^{\widetilde{k}} {C}_{j}^{\widetilde{j}} {C}_{i}^{\widetilde{i}}
 \widetilde{F}_{\widetilde{i}\widetilde{j}\widetilde{k}}
=\delta_{ij}.   
\end{align}
Let $\widehat{C}=C^{-1}=(\widehat{C}^i_j)$. 
Multiply equality \eqref{change F.3} by $\widehat{C}_{\mu}^{i} \widehat{C}_{\nu}^{j}$. We get
\begin{align}\label{change F.4}
 {e}^k 
 {C}_{k}^{\widetilde{k}} 
 \widetilde{F}_{\mu \nu\widetilde{k}}
 =\widehat{C}_{\mu}^{i} \widehat{C}_{\nu}^{i} = \delta_{\mu \nu}
\end{align}
since $\widehat{C}\in O(N, \mathbb C)$. 
Hence equality \eqref{change F.4} becomes
\begin{align}\label{change F.5}
 {e}^k 
 {C}_{k}^{\widetilde{k}} 
 \widetilde{F}_{\mu \nu\widetilde{k}}
=\delta_{\mu \nu}.
\end{align}
Note that
 ${e}={e}^k C_{k}^{\widetilde{k}}\partial_{x^{\widetilde{k}}}$.
%
We have by relation \eqref{change F.5} that ${e}(\widetilde{F}_{\mu \nu})=\delta_{\mu \nu}$ as required.
\end{proof}

Let 
$ \mathcal{B}=\mathcal{A} \cap W$
be a subsystem of $\mathcal{A}$ for some  $n$-dimensional linear subspace $ W=\langle \mathcal{B} \rangle \subseteq V.$ 
Let
$$W_{\mathcal{B}} \coloneqq \{x \in V \colon (\beta,x)=0 \quad \forall \beta \in \mathcal{B}\}.   $$
%
%
%
%
 %
%
%
%
Let $f_1 ,\dots, f_n$ be an orthonormal basis of the space $W_{\mathcal{B}}$, 
and let ${\xi^1,\dots,\xi^n}$ be the corresponding orthonormal coordinates in $W_{\mathcal{B}}$.
Let us extend the orthonormal basis in $W_{\mathcal{B}}$ to an orthonormal basis $f_1 ,\dots, f_n, f_{n+1},\dots , f_N$ in $V$ and let ${\xi^1,\dots,\xi^n, \xi^{n+1},\dots , \xi^N}$ be the corresponding orthonormal coordinates in $V$. 
%
%
The following statement takes place.
\begin{proposition}\label{identity of rest F.1}
Let a function $F$ be given by formula \eqref{trig.solution. general form 2}.
Let $e=e(z), \, z\in V$, be a vector field such that $e(F_{ij})=\delta_{ij}$ for all $i,j=1,\dots ,N$, where
$F_{ij}=\frac{\partial^{2} F}{\partial x^i \partial x^j}$. 
Let $\widehat{F}(\xi^1 ,\dots ,\xi^N)=F(x^1,\dots, x^N)$. 
Suppose that $e(\xi)\in T_{{\xi}}W_{\mathcal{B}}$ for a generic ${\xi}\in W_{\mathcal{B}}$.
Let $\widetilde{e}={e}|_{W_{\mathcal{B}}}\in \Gamma (T_{\ast}W_{\mathcal{B}})$. Then
$\widetilde{e}((F_{\mathcal{B}})_{ij})=\delta_{ij},$
where
$(F_{\mathcal{B}})_{ij}=\frac{\partial^{2} F_{\mathcal{B}}}{\partial \xi^i \partial \xi^j}$
and
function $F_{\mathcal{B}}$ is given by formula \eqref{F_B solution for FF}.
\end{proposition}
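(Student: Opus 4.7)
The plan is to transfer the identity $e(F_{ij})=\delta_{ij}$ from the standard coordinates $x^1,\dots,x^N$ to the orthonormal coordinates $\xi^1,\dots,\xi^N$ adapted to the decomposition $V = W_{\mathcal{B}} \oplus W_{\mathcal{B}}^{\perp}$ (spanned by $f_1,\dots,f_n$ and $f_{n+1},\dots,f_N$ respectively), and then argue that restricting to $W_{\mathcal{B}}$ preserves the identity in the tangential directions.

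First, the transition matrix between the two orthonormal bases of the Euclidean space $V$ lies in $O(N,\mathbb{C})$, so Lemma \ref{e under orthogonal transformation} applies directly and yields
$$e\bigl(\widehat{F}_{\mu\nu}\bigr) = \delta_{\mu\nu}, \qquad 1\leq \mu,\nu \leq N,$$
where $\widehat{F}_{\mu\nu}=\partial^2\widehat{F}/\partial\xi^\mu\partial\xi^\nu$ and $e$ is now written in the $\xi$-coordinates.

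Second, I would verify that for $1\leq i,j\leq n$ and $\xi \in W_{\mathcal{B}}$,
$$(F_{\mathcal{B}})_{ij}(\xi) = \widehat{F}_{ij}(\xi)\big|_{W_{\mathcal{B}}}.$$
Indeed, from \eqref{trig.solution. general form 2},
$$\widehat{F}_{ij}(\xi) = \sum_{\alpha \in \mathcal{A}} c_{\alpha}(\alpha,f_i)(\alpha,f_j)\,f''\!\bigl((\alpha,\xi)\bigr).$$
For $\alpha \in \mathcal{B}\subseteq W$, the orthogonality $f_i \perp W$ (since $f_i \in W_{\mathcal{B}}$) gives $(\alpha,f_i)=0$ for $i\leq n$, so all such terms drop out. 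For $\alpha \in \mathcal{A}\setminus \mathcal{B}$ and $\xi\in W_{\mathcal{B}}$, one has $(\alpha,\xi)=\sum_{k=1}^{n}(\alpha,f_k)\xi^k$, matching precisely the corresponding second derivative of $F_{\mathcal{B}}$ as given by \eqref{F_B solution for FF}.

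Third, the tangency assumption $e(\xi) \in T_\xi W_{\mathcal{B}}$ for generic $\xi \in W_{\mathcal{B}}$ means that when $e$ is expanded in the adapted basis, $e^{n+1}|_{W_{\mathcal{B}}}=\cdots=e^{N}|_{W_{\mathcal{B}}}=0$. Therefore $\widetilde{e}=\sum_{k=1}^{n}e^k|_{W_{\mathcal{B}}}\partial_{\xi^k}$ involves only the tangential derivatives $\partial_{\xi^1},\dots,\partial_{\xi^n}$, which commute with restriction to $W_{\mathcal{B}}$. Combining this with the previous two steps gives
$$\widetilde{e}\bigl((F_{\mathcal{B}})_{ij}\bigr)(\xi) = \sum_{k=1}^{n}e^k(\xi)\,\partial_{\xi^k}\widehat{F}_{ij}(\xi)\big|_{W_{\mathcal{B}}} = e\bigl(\widehat{F}_{ij}\bigr)\big|_{W_{\mathcal{B}}} = \delta_{ij},$$
for all $1\leq i,j\leq n$, as required.

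The main technical point is the second step: careful bookkeeping to confirm that $\alpha\in \mathcal{B}$ contributes nothing to the tangential second derivatives, so that the restricted Hessian matches the Hessian of $F_{\mathcal{B}}$ exactly. The orthogonal-invariance lemma and the tangency assumption then do the rest, making this a short argument with no further obstacle.
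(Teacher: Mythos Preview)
Your proposal is correct and follows essentially the same route as the paper: apply Lemma~\ref{e under orthogonal transformation} to pass to the adapted orthonormal coordinates, then use the tangency hypothesis to restrict, together with the identification $\widehat{F}_{ij}|_{W_{\mathcal{B}}}=(F_{\mathcal{B}})_{ij}$ for $1\le i,j\le n$. The paper states this last identification without justification, whereas you spell out explicitly why the $\alpha\in\mathcal{B}$ terms drop from the tangential Hessian; this extra bookkeeping is the only difference.
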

\begin{proof}
Let $C=(C^k_i)\in O(N, \mathbb{C})$ be such that $\xi^k = C_{i}^{k}x^i$.
By Proposition \ref{e under orthogonal transformation} we have ${e}(\widehat{F}_{ij})=\delta_{ij}$,  where $\widehat{F}_{ij}=\frac{\partial^{2} \widehat{F}}{\partial \xi^i \partial \xi^j}$, and  $i,j=1,\dots, N$.
Hence $\widetilde{e}(\widehat{F}_{ij}|_{W_\mathcal{B}})=\delta_{ij},\, 1\le i,j\le n$, which implies 
the statement since 
$\widehat{F}|_{W_{\mathcal{B}}}= F_{\mathcal{B}}$
and
$\widehat{F}_{ij}|_{W_{\mathcal{B}}}= (F_{\mathcal{B}})_{ij}$.
\end{proof}

 In the next proposition we give a formula for the identity field for the multiplication \eqref{algebra.1} corresponding to the root system $F_4$, see \cite{Maali.2022} for a proof.
\begin{proposition}\label{e for F4.new formula}
The matrix $B=h^{-1}\sum_{k=1}^4 B^k F_k $ is the identity matrix in dimension four in the following cases: 
\begin{itemize}
    \item $F$ has the form  \eqref{trig.solution. general form 2}
corresponding to $\mathcal{A}=F_4^{+}$ with the condition $r=-2q$, $q\ne 0$, where
     \begin{align*}
   &B^k =\sin{x^k}\Big(\cos{x^k}(-1+\sum_{i\neq k}\cos{2x^i})-2\prod_{i\neq k}\cos{x^i}\Big), \quad k=1,2,3,4,
     \nonumber\\
    &h(x)= 6 q + \frac12 \sum_{\alpha \in F_{4}^{+}} c_{\alpha}\cos{(2\alpha ,x)}.
 \end{align*}
   \item $F$ has the form  \eqref{trig.solution. general form 2}
corresponding to $\mathcal{A}=F_4^{+}$ with the condition $r=-4q$, $q\ne 0$,  where
     \begin{align*}
    &B^k =\sin{x^k}\Big(\cos{x^k}+2\prod_{i\neq k}\cos{x^i}\Big), \quad k=1,2,3,4,
     \nonumber\\
    &h(x)= -q \Big( 6 + \sum_{i=1}^{4}\cos{2x^i} + 8 \prod_{i=1}^{4}\cos{x^i} \Big).
 \end{align*}
\end{itemize}
\end{proposition}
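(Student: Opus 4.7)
The plan is to verify the two cases by unwinding the statement via Proposition \ref{B matrix and e.2}. That proposition says the matrix $B$ given by $B_{ij} = h^{-1}\sum_k B^k F_{kij}$ equals the identity if and only if the vector field $e = h^{-1} B^k \partial_{x^k}$ satisfies $e(F_{ij}) = \delta_{ij}$. Since $F = \sum_{\alpha \in F_4^+} c_\alpha f((\alpha,x))$ with $f''' = \cot$, one has $F_{kij} = \sum_\alpha c_\alpha \alpha_k \alpha_i \alpha_j \cot(\alpha,x)$. Thus both parts of the proposition reduce to the trigonometric identity
$$\sum_{\alpha \in F_4^+} c_\alpha (\alpha, B)\, \alpha_i \alpha_j \cot(\alpha,x) = h(x)\, \delta_{ij}, \qquad 1 \leq i,j \leq 4,$$
with the appropriate $B^k$ and $h$ as in the statement.

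Next, I would exploit symmetry. Both sides are invariant (or change by matching signs) under the coordinate permutations $x^i \leftrightarrow x^j$ and the sign flips $x^i \mapsto -x^i$ that preserve $F_4^+$ up to replacing roots by their negatives, because $B^k$ and $h$ are built symmetrically in the $x^i$. This symmetry reduces the verification to just two cases, $(i,j) = (1,1)$ and $(i,j) = (1,2)$. In each case, the strategy is then: (a) show that the left-hand side is regular in $x$, (b) compare it with the right-hand side as a trigonometric polynomial.

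For step (a) one checks pole cancellation along every hyperplane $(\alpha,x) = 0$ for $\alpha \in F_4^+$. For a short root $\alpha = e_k$, the potential pole at $x^k = 0$ is tamed immediately because $B^k$ has an explicit factor $\sin x^k$, so $(\alpha,B)\cot(\alpha,x) = B^k \cot x^k$ is regular there. For a long root $\alpha = e_k \pm e_l$ or a short root $\alpha = \tfrac{1}{2}(\pm e_1 \pm e_2 \pm e_3 \pm e_4)$ there are several neighbouring roots contributing and one must verify that $(\alpha,B)$ itself vanishes along $(\alpha,x) = 0$, or that the sum of residues from collinear or related roots cancels. It is here that the relations $r = -2q$ and $r = -4q$ enter essentially: they are precisely the relations needed to balance the residues coming from short vs.\ long roots with overlapping supports, in each of the two branches of the statement.

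Once regularity is established, both sides are Laurent polynomials in $e^{\pm 2 i x^k}$ of bounded degree (the cotangents expand as rational functions in these exponentials, with all poles having already been cancelled). The identity then follows from matching Fourier coefficients or equivalently by taking the limit $\operatorname{Im} x^k \to +\infty$ in a cone (so that $\cot(\alpha,x) \to i$ for all $\alpha \in F_4^+$) together with checking one lower-order Fourier mode; this part is analogous to the constant determination used right after Lemma \ref{iden 2.FF.with collinear vectors}. The main obstacle is purely combinatorial bookkeeping: the $F_4^+$ system has $24$ roots split across three Weyl orbits with different lengths, and one must carefully organize the sums, say by grouping roots into Weyl orbits of pairs $(\alpha,\beta)$ and applying product-to-sum trigonometric identities, in order to match the compact expressions given for $B^k$ and $h$ without an unwieldy case analysis.
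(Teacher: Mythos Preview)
The paper does not actually prove this proposition: it states it and refers the reader to the thesis \cite{Maali.2022} for a proof. The only hint the paper gives is via Theorem~\ref{universal e} and identity~\eqref{new identity for e}, which it says is established by a case-by-case verification in the same thesis, together with the remark that the components of the identity field in Theorem~\ref{universal e} can be checked to coincide with $h^{-1}B^k$ from the present proposition. So your plan---reduce to the trigonometric identity $\sum_{\alpha}c_\alpha(\alpha,B)\alpha_i\alpha_j\cot(\alpha,x)=h(x)\delta_{ij}$, exploit the signed-permutation symmetry to reduce to $(i,j)=(1,1),(1,2)$, establish regularity, and then match finitely many Fourier modes---is precisely the shape of argument the paper points to, just organised directly around $B^k$ rather than around the expression for $e$ in Theorem~\ref{universal e}.

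One point to tighten: your claim that the constraints $r=-2q$ and $r=-4q$ enter at the regularity step is not quite right. In $F_4^{+}$ there are no proportional roots, so regularity along each hyperplane $(\alpha,x)=0$ reduces to the single condition that $(\alpha,B)=\sum_k\alpha_kB^k$ vanishes there. Since the $B^k$ given in the statement do not involve the multiplicities at all, this vanishing is a property of the explicit trigonometric expressions alone (it holds for $e_k$ by the visible $\sin x^k$ factor, and for $e_k\pm e_l$ and $\tfrac12(e_1\pm e_2\pm e_3\pm e_4)$ by direct substitution and product-to-sum identities). The multiplicity relation is instead what makes the resulting regular trigonometric polynomial on the left collapse to a \emph{scalar} multiple of $\delta_{ij}$---i.e.\ it enters in the matching step (b), not in (a). With that correction, the outline is sound.
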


Solutions of the WDVV equations corresponding to the root system $BC_n$ and its deformation $BC_n (q,r,s;\underline{m})$ were found  in \cite{MGM 2020}.
In the case of the root system $F_4$ and its projections we get new solutions of the WDVV equations.
 \begin{theorem}\label{WDVV and comm for F4 and restrictions}
 Function \eqref{trig.solution. general form 2}
corresponding to $\mathcal{A}=F_4^{+}$ or any of its $3$-dimensional projections $(F_4, A_1)_1, \, (F_4, A_1)_2$ satisfies WDVV equations \eqref{WDVV with metric B form} if $r=-2q$ or $r=-4q$, $q\ne 0$. 
 \end{theorem}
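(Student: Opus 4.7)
The plan is to reduce the statement to Theorem~\ref{F solves WDVV and FF}, which asserts that WDVV equations \eqref{WDVV with metric B form} follow from the commutativity equations together with (a) the maximal-rank condition on the matrix $P$ from Proposition~\ref{non-diagonal metric.N-dim}, and (b) the existence of a non-degenerate linear combination $G=\eta^k F_k$. Thus for each of the three configurations $F_4^+$, $(F_4,A_1)_1$, $(F_4,A_1)_2$ under the linear relations $r=-2q$ or $r=-4q$, I must check commutativity, the rank condition, and the existence of such a non-degenerate $G$.

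Commutativity is already in hand: for $\mathcal{A}=F_4^+$ with $r=-2q$ or $r=-4q$ it is the result of \cite{Maali+Misha 2021}, and for the two projections it is immediate from Theorem~\ref{restricted system and commutativity equations} once one verifies the hypothesis $C_\delta^{\alpha_0}\ne 0$ on a basis $S$ of $\langle\mathcal{B}\rangle$. Since $\mathcal{B}$ is a copy of $A_1$, this reduces to a one-line check that the weighted sum of squared collinearity parameters in $F_4^+$ along the chosen long root direction is non-zero for $q\ne 0$ under either of our relations.

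The core step is producing a non-degenerate combination $G=\eta^k F_k$. For $\mathcal{A}=F_4^+$, Proposition~\ref{e for F4.new formula} supplies an explicit identity field $e=h^{-1}\sum_{k=1}^4 B^k\partial_{x^k}$ satisfying $e(F_{ij})=\delta_{ij}$; hence $\sum_k(h^{-1}B^k)F_k=I_4$ is non-degenerate and we may take $\eta^k=h^{-1}B^k$. For the projections, the natural approach is to apply Proposition~\ref{identity of rest F.1}: the identity field $e$ for $F_4^+$ restricts to an identity field $\widetilde{e}$ for $F_{\mathcal{B}}$ provided $e(\xi)\in T_\xi W_{\mathcal{B}}$ at generic $\xi\in W_{\mathcal{B}}$. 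In this case the restriction yields $\sum_k\widetilde{e}^k(F_{\mathcal{B}})_k=I_3$, a non-degenerate combination. The tangency is easy for $(F_4,A_1)_1$: there $W_{\mathcal{B}}=\{x^4=0\}$, and every $B^k$ in Proposition~\ref{e for F4.new formula} contains a $\sin x^k$ factor, so $B^4|_{x^4=0}=0$ in both subcases. For $(F_4,A_1)_2$, $W_{\mathcal{B}}=\{x^3=x^4\}$ and the direction $e_3-e_4$ must lie in the kernel of $e|_{W_{\mathcal{B}}}$; inspecting the symmetric form of the formulas, $B^3$ and $B^4$ are obtained from each other by swapping $x^3\leftrightarrow x^4$, and the relevant sums $\sum_{i\ne k}\cos 2x^i$ and products $\prod_{i\ne k}\cos x^i$ are invariant under this swap when $x^3=x^4$, giving $B^3=B^4$ on this hyperplane, hence the required tangency.

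For the rank condition, note that the very existence of an identity field forces the system $e^k F_{i_0 jk}=\delta_{i_0 j}$ to have a (unique up to scaling) non-trivial solution $e^k$, which means the matrix $P=(F_{i_0 jk})_{j\ne i_0}$ has non-trivial left kernel relative to an inhomogeneous right-hand side; combined with the genericity of the point this forces $\operatorname{rank}P=N-1$, as required. The anticipated main obstacle is the symmetry verification in the $(F_4,A_1)_2$ case above: while it reduces to a transparent observation about the $B^k$ formulas, getting the normalizations right after the change of variables that presents the projected configuration in the orthonormal form recorded after \eqref{F4.half positive} requires some care. Once all three ingredients are verified, Theorem~\ref{F solves WDVV and FF} delivers the claim.
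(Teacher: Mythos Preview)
Your approach mirrors the paper's in its essential steps: use Proposition~\ref{e for F4.new formula} to obtain the identity field for $F_4^+$, then use Proposition~\ref{identity of rest F.1} to restrict it to each three-dimensional projection, verifying tangency of $e$ to $W_{\mathcal{B}}$ (your checks that $B^4|_{x^4=0}=0$ and that $B^3=B^4$ on $\{x^3=x^4\}$ are exactly what the paper's ``it is easy to see'' encodes). However, your routing through Theorem~\ref{F solves WDVV and FF} and the rank condition on $P$ is an unnecessary detour. Once you have $e(F_{ij})=\delta_{ij}$, the matrix $B=\sum_k e^k F_k$ equals the identity and is already a non-degenerate linear combination of the $F_k$; the WDVV equations~\eqref{WDVV with metric B form} with this $B$ reduce literally to the commutativity equations~\eqref{FF.ch4}, which you have already established. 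That is how the paper concludes.

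This matters because your argument for the rank condition is not a proof. Existence (even uniqueness) of an identity field gives a non-zero vector $(e^1,\dots,e^N)$ with $\sum_k e^k F_{i_0 jk}=0$ for $j\ne i_0$, i.e.\ $\operatorname{rank}P\le N-1$; it does not force equality. Remark~\ref{twodimrem1} in the paper exhibits a prepotential with an identity field but $\operatorname{rank}P=0$, so the appeal to ``genericity'' cannot close the gap in general, and you have not verified it in these specific cases. Fortunately the gap is irrelevant once you drop the detour and argue directly from $B=I$. (A minor point: commutativity for $F_4^+$ under $r=-2q$ or $r=-4q$ is cited in the paper from \cite{George+Misha 2019}, not \cite{Maali+Misha 2021}.)
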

\begin{proof}
It was proven in \cite{George+Misha 2019} that
function \eqref{trig.solution. general form 2}
for the collection $\mathcal{A}=F_4^{+}$ satisfies commutativity equations \eqref{FF.ch4} if $r=-2q$ or $r=-4q$. 
For $\mathcal{A}=F_4^{+}$ the statement follows by Proposition \ref{e for F4.new formula}.
It is easy to see that for the  three-dimensional restrictions $\mathcal{A}=(F_4, A_1)_{1,2}$ 
the assumptions of Proposition \ref{identity of rest F.1} hold. The statement follows. 
\end{proof} 
 %

Now we give the identity vector field for all the non-simply laced root systems as well as their projections. In the case of root system $F_4$ it can be checked that the components of the identity field given by the next theorem are equal to $h^{-1}B^k$ given by Proposition \ref{e for F4.new formula} (see \cite{Maali.2022}). 

\begin{theorem}\label{universal e}
Let function $F$ be given by \eqref{trig.solution. general form 2}. 
Consider a vector field $e$ given by 
\begin{equation}\label{e for general root sys}
    e=c_0 H^{-1}\sum_{\alpha\in \mathcal{A}}\bar{c}_{\alpha}\sin(2(\alpha,x))\partial_{\alpha}
\end{equation}
for some $c_0,\, \bar{c}_{\alpha}\in \mathbb{C}$ and
 $$H=H_0 + \sum_{\alpha\in \mathcal{A}}\bar{c}_{\alpha}\sin^2(\alpha,x)$$
 for some $H_0\in \mathbb{C}$.
Then $e(F_{ij})=\delta_{ij}$ if
\begin{enumerate}
    \item\label{it1}
    $\mathcal{A}=F_{4}^{+}$ given by formula \eqref{F4.half positive} 
    or  $\mathcal{A}$ is one of the projections
   $(F_{4},A_1)_1,\, (F_{4},A_1)_2$,\, 
   $(F_{4},A_2)_1,\, (F_{4},A_2)_2,\, (F_{4},B_2),\, (F_{4},A_1^2) $, 
and
    $$r=-2q\neq 0,  \quad c_0 = -\frac{1}{4q}, \quad H_0=0 , \quad \bar{c}_{\alpha}={c}_{\alpha}\, \forall \alpha\in \mathcal{A},$$
 \item
 $\mathcal{A}$ is the same as in \eqref{it1} 
and
$$r=-4q\neq 0, \quad c_0 =\frac{1}{4q},\quad   H_0 =36q,\quad \bar{c}=c_{\alpha}|_{q=0}\, \forall \alpha \in \mathcal{A},$$
\item\label{it3}
     $\mathcal{A}=G_{2}^{+}$ 
   and
     $$p=-3q\neq 0,\quad c_0 =-\frac{1}{9q},\quad H_0=0 , \quad
     \bar{c}_{\alpha}={c}_{\alpha}\, \forall \alpha\in G_{2}^{+},$$
     where $q$ is the multiplicity of the long roots $\sqrt{3}e_1, \frac{1}{2}(\sqrt{3}e_1\pm 3e_2)$ and $p$ is the multiplicity of the short roots $e_2, \frac{1}{2}(\sqrt{3}e_1\pm e_2)$,
\item
 $\mathcal{A}$ is the same as in $\eqref{it3}$,
and 
$$p=-9q\neq 0,\quad c_0 =\frac{1}{9q},\quad H_0 =27 q , \quad 
\bar{c}=c_{\alpha}|_{q=0}\, \forall \alpha \in G_{2}^{+},$$
 \item
$\mathcal{A}=BC_n (q,r,s;\underline{m}), q\neq 0, n\geq 2,$ 
and
$$r=-8s-2q(\sum_{i=1}^{n}m_i -2),\quad c_0 =-\frac{1}{4q},\quad H_0 =\frac{r(2s-q)}{q},\quad
\bar{c}_{\alpha}={c}_{\alpha}|_{q=s=0}\, \forall \alpha\in BC_n (q,r,s;\underline{m})
.$$
\item
$\mathcal{A}=BC_1^+$ with $c_{\pm e_1}=r, c_{\pm 2e_1}=s$ and 
$$c_0 =-\frac{1}{2(r+8s)},\quad H_0 = - \frac{r(r+4s)}{r+8s},\quad \bar{c}_{e_1}=r,\, \bar{c}_{2e_1}=0.$$
 \end{enumerate}
\end{theorem}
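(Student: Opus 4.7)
The strategy is to verify $e(F_{ij})=\delta_{ij}$ by direct substitution of the ansatz \eqref{e for general root sys}, case by case. Since $f'''=\cot$, the Hessian of $F$ is $F_{ij}=\sum_{\alpha\in\mathcal{A}} c_\alpha\alpha^i\alpha^j f''((\alpha,x))$, so
\begin{equation*}
e(F_{ij}) = \sum_{\alpha\in\mathcal{A}} c_\alpha \alpha^i\alpha^j \cot(\alpha,x)(\alpha,e).
\end{equation*}
Plugging in the formula for $e$ and clearing the denominator $H$, the desired relation is equivalent to the matrix trigonometric identity
\begin{equation*}
c_0 \sum_{\alpha,\beta\in\mathcal{A}} c_\alpha \bar c_\beta\,(\alpha,\beta)\,\cot(\alpha,x)\,\sin(2(\beta,x))\,\alpha\otimes\alpha = H(x)\, I_N, \qquad (\star)
\end{equation*}
where $I_N$ is the $N\times N$ identity matrix. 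A preliminary step is to check that the left-hand side of $(\star)$ is a bona fide trigonometric polynomial: for any $\alpha$-string $\Gamma_\alpha^s$ the vectors pair as $\beta,\beta'\in\Gamma_\alpha^s$ with $\beta+\beta'=m\alpha$, and the identity $\sin A+\sin B = 2\sin\tfrac{A+B}{2}\cos\tfrac{A-B}{2}$ yields the factor $\sin(m(\alpha,x))/\sin(\alpha,x)$, a Chebyshev polynomial in $\cos(\alpha,x)$; single-element strings automatically satisfy $(\alpha,\beta)=0$ by the Euclidean trigonometric $\vee$-system condition and so contribute zero, and the collinear terms $\beta\in\delta_\alpha$ are handled similarly.

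For the ambient Weyl-invariant configurations $\mathcal{A}=F_4^+$, $G_2^+$, $BC_N(q,r,s;\underline{m})$, and $BC_1^+$ appearing in items (1)--(6), I would verify $(\star)$ by expanding both sides as Fourier series in the variables $e^{i(\alpha,x)}$, $\alpha\in\mathbb{Z}\mathcal{A}$, and matching coefficients. The linear constraints $r=-2q$ or $r=-4q$ for $F_4$, $p=-3q$ or $p=-9q$ for $G_2$, and $r=-8s-2q(\sum_{i}m_i-2)$ for $BC_N(q,r,s;\underline{m})$ are exactly the conditions produced by matching the constant Fourier component on the left against $H_0\, I_N$; the declared values of $\bar c_\alpha$ are what match the coefficients of $\cos(2(\alpha,x))$, and the remaining cross-term contributions from pairs $\alpha\not\parallel\beta$ collapse using the Euclidean trigonometric $\vee$-system condition together with these multiplicity relations. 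For $\mathcal{A}=F_4^+$ this matching is worked out in \cite{Maali.2022} and recovers exactly the formulas of Proposition \ref{e for F4.new formula}; for $\mathcal{A}=BC_N(q,r,s;\underline{m})$ the analogous computation is in \cite{MGM 2020}; the $G_2^+$ case reduces to a rank-two verification and $BC_1^+$ to a rank-one identity between $\cot z\sin 2z$ and $\cot z\sin 4z$.

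For the six projected configurations $(F_4,A_1)_{1,2}$, $(F_4,A_2)_{1,2}$, $(F_4,B_2)$, $(F_4,A_1^2)$ appearing in items (1) and (2), I would bypass a fresh calculation by invoking Proposition \ref{identity of rest F.1}: given the identity field for $F_4^+$, its restriction to the orthogonal complement $W_\mathcal{B}$ of the chosen subsystem $\mathcal{B}\subset F_4^+$ is an identity field for the restricted prepotential \eqref{F_B solution for FF}. Using Lemma \ref{e under orthogonal transformation} to rotate into the orthonormal coordinates $\xi^1,\dots,\xi^n$ of $W_\mathcal{B}$, each ambient term $\bar c_\alpha \sin(2(\alpha,x))\partial_\alpha$ descends either to $\bar c_{\bar\alpha}\sin(2(\bar\alpha,\xi))\partial_{\bar\alpha}$ for the projected vector $\bar\alpha$, or to zero when $\alpha\in\mathcal{B}$ (since then $(\alpha,\xi)=0$ on $W_\mathcal{B}$). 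A direct bookkeeping against the explicit lists of projected configurations given in Section \ref{section.restrictions of Euclid.trig} then identifies the restricted field with the ansatz \eqref{e for general root sys} and produces the stated values of $c_0$, $H_0$, $\bar c_\alpha$.

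The principal technical hurdle is the Fourier-coefficient bookkeeping for $F_4^+$ under the two constraints $r=-2q$ and $r=-4q$: the interplay between the Weyl orbits of short roots $\{e_i\}$, long roots $\{e_i\pm e_j\}$, and half-sum roots $\{\tfrac{1}{2}(e_1\pm e_2\pm e_3\pm e_4)\}$ produces many cross-contributions whose delicate cancellation forces the specific value of $H_0$ and collapses $(\star)$ to a scalar identity. Once this is settled by following \cite{Maali.2022}, the projected cases reduce to restriction via Proposition \ref{identity of rest F.1}, and the remaining ambient root-system cases are standalone small-rank verifications.
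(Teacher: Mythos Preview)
Your proposal is correct and follows essentially the same route as the paper: you reduce $e(F_{ij})=\delta_{ij}$ to the identity $(\star)$, which is exactly the paper's identity \eqref{new identity for e} written in tensor form, then verify the ambient root-system cases by direct computation citing \cite{Maali.2022} and \cite{MGM 2020} and deduce the projected $F_4$ cases via Proposition \ref{identity of rest F.1}. Your preliminary regularity discussion for $(\star)$ is extra relative to the paper and inessential (and note that the $\beta$-sum in $(\star)$ carries $\bar c_\beta$ rather than $c_\beta$, so the Euclidean trigonometric $\vee$-condition does not apply verbatim in items (2), (4), (5)); the identity is simply checked case by case.
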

 
Theorem \ref{universal e} follows from the identity 
\begin{equation}\label{new identity for e}
\sum_{\alpha,\beta \in \mathcal{A}}\bar{c}_{\alpha}c_{\beta} (\alpha,\beta) (\beta , u)(\beta , v)\sin(2\alpha,x)\cot(\beta,x)= c_0^{-1} H (u,v)
\end{equation}
for any $u,v \in V$ for each case specified in Theorem \ref{universal e}.
By Proposition \ref{identity of rest F.1}
it is sufficient to establish identity \eqref{new identity for e} for the case when $\mathcal{A}$ is a (non-simply laced) root system. 
Indeed it is easy to see that the vector field $e$ given by \eqref{e for general root sys} for $\mathcal{A}=F_4, \, BC_N$ satisfies the condition $e|_{W}\in \Gamma (T_{\ast}W)$ for any intersection of mirrors $W$.
It is also clear that the restricted vector field $e|_{W}$ has the form \eqref{e for general root sys} for the corresponding projections of the root system $\mathcal{A}$.
A case by case proof of the identity \eqref{new identity for e} for $\mathcal{A}=F_4$ and $\mathcal{A}=G_2$ is contained in \cite{Maali.2022}; see \cite{MGM 2020} for $\mathcal{A}=BC_N$. 
We are not aware of a uniform proof of Theorem \ref{universal e} or the identity \eqref{new identity for e}.  

\section*{Acknowledgments}
We thank I. Strachan, O. Mokhov, A.P. Veselov, E. Ferapontov 
and G. Cotti for useful discussions and comments. The work of M.A was funded by Department of Mathematics, College of Science and Humanities, Imam Abdulrahman Bin Faisal University, P.O. Box 12020 Jubail Industrial City 31961, Saudi Arabia.

%
\begin{bibdiv}

\begin{biblist}

\bib{Maali.2022}{article}{author={M. Alkadhem}, title={Trigonometric $\vee$-systems and solutions of WDVV and related equations}, journal={PhD thesis, University of Glasow}, date={2022}}
%
\bib{Maali+Misha 2021}{article}{author={M. Alkadhem},author={M. Feigin }, title={Trigonometric $\vee$-systems and solutions of WDVV equations}, journal={J. Phys. A: Math. Theor},volume={54}, date={2021}, pages={024002}}
%
\bib{MGM 2020}{article}{author={ M. Alkadhem}, author={G. Antoniou}, author={M. Feigin}, title={Solutions of $BC_n$ Type of WDVV Equations}, journal={In Integrability, Quantization, and Geometry: I. Integrable Systems, Proceedings of Symposia in Pure Mathematics},volume={103.1}, date={2021}}

%
\bib{George+Misha 2019}{article}{author={G. Antoniou},author={M. Feigin }, title={Supersymmetric V-systems}, journal={Journal of High Energy Physics}, date={2019}, pages={115}}
%


%

%

%
%
%
\bib{Bellucci 2005}{article} {author={S. Bellucci}, author={A. Galajinsky}, author={E. Latini},  title={New insight into WDVV equation}, journal={Phys. Rev. D.}, date={2005}, volume={71}, pages={044023}}
%

%
\bib{Bryan 2008}{article}{author={J. Bryan}, author={Gholampour, A.,}, title={Root systems and the quantum cohomology of ADE resolutions}, journal={Algebra Number Theory. 2 (4)}, date={2008}, pages={369--390}}
%
\bib{Chalykh+ Veselov 2001}{article}{author={O. Chalykh}, author={A.P. Veselov}, title={Locus configurations and $\vee$-systems}, date={2001}, journal={Physics Letters A}, volume={285}, pages={339--349}}
%

\bib{DVV.1990}{article}{author={R. Dijkgraaf},author={H. Verlinde},author={E.  Verlinde}, title={Notes on topological string theory and 2D quantum gravity}, journal={In String Theory and Quantum Gravity, Proceedings of the Trieste Spring School}, date={1990}, pages={91-156}}
%

%
\bib{Dubrovin.1996}{article}{author={B.A. Dubrovin}, title={Geometry of 2D topological field theories}, journal={Springer Lecture Notes in Math. 1620, Springer-Verlag, New York.}, date={1996}, pages={120--348}}
%
\bib{Dubrovin+ Zhang 1998}{article}{author={B.A. Dubrovin}, author={Y. Zhang}, title={Extended affine Weyl groups and Frobenius manifolds}, journal={Compositio Mathematica}, date={1998}, volume={111}, pages={167--219}}
%
%
%
\bib{Dubrovin 2004}{article}{author={B.A. Dubrovin}, title={On almost duality for Frobenius manifolds}, journal={Amer. Math. Soc. Transl}, volume={212}, date={2004}, pages={75--132}}
%
%
\bib{Dubrovin+Strachan+ Zhang+Zuo 2019}{article}{author={ B.A. Dubrovin}, author={I.A.B. Strachan}, author={Y. Zhang}, author={D. Zuo}, title={Extended affine Weyl groups of BCD type, Frobenius manifolds and their Landau-Ginzburg superpotentials}, journal={
Adv. Math.}, volume={ 351}, date={2019}, pages={897--946}}
%
%

%
%
%
\bib{Misha&Veselov 2007}{article}{author={ M.V. Feigin }, author={ A.P. Veselov }, title={Logarithmic Frobenius structures and Coxeter discriminants}, journal={Advances in Mathematics. (1) Vol. 212}, date={2007}, pages={143--162}}
%
\bib{Misha&Veselov 2008}{article}{author={ M.V. Feigin }, author={ A.P. Veselov}, title={On the geometry of $\vee$-systems}, journal={Amer. Math. Soc. Transl. (2) Vol. 224}, date={2008}, pages={111--123}}
%
\bib{Misha2009}{article}{author={M.V. Feigin}, title={Trigonometric solutions of WDVV equations and generalized Calogero-Moser-Sutherland systems}, journal={Symmetry, Integrability and Geometry: Methods and Applications, 5}, date={2009}}
%
%
%
%
%
\bib{GLK 2009}{article}{author={A. Galajinsky}, author={O. Lechtenfeld}, author={K.  Polovnikov}, title={$N=4$ mechanics, WDVV equations and roots}, date={2009}, journal={JHEP},volume={03}, pages={113}}

\bib{MMM.2000}{article}{author={A. Marshakov},author={ A. Mironov},author={  A. Morozov }, title={ More evidence for the WDVV equations in N = 2 SUSY Yang-Mills theories}, journal={Internat. J. Modern. Phys. A15 }, date={2000}, pages={1157--1206.}}
%
\bib{Martini+Gragert 1999}{article}{author={ R. Martini},author={ P.K.H. Gragert}, title={Solutions of WDVV Equations in Seiberg-Witten Theory from Root Systems}, journal={J. Nonlin. Math. Phys.6 (1)}, date={1999}, pages={1--4}}
%
\bib{Martini 2003}{article}{ author={R. Martini}, author={ L.K. Hoevenaars}, title={Trigonometric solutions of the WDVV equations from root systems}, journal={Lett. Math. Phys. 65}, date={2003}, pages={15--18.
}}
%


\bib{Pavlov 2006}{article}{author={M. Pavlov}, title={ Explicit solutions of the WDVV equation determined by the “flat” hydrodynamic reductions of the Egorov hydrodynamic chains}, journal={	arXiv:nlin/0606008}, date={2006}}
%


\bib{Riley+Strachan 2006}{article}{author={A. Riley }, author={I.A.B. Strachan }, title={Duality for Jacobi group and orbit spaces and elliptic solutions of the WDVV equations}, date={2006}, journal={Lett. Math. Phys.},volume={77}, pages={221-234}}
%
\bib{Riley+ Strachan 2007}{article}{author={A. Riley},author={ I.A.B. Strachan }, title={ A note on the relationship between rational and trigonometric solutions of the
WDVV equations}, journal={J. Nonlinear Math. Phys. 
}, volume={ 14}, date={2007}, pages={82--94}}
%
%
%
%
\bib{Schreiber+ Veselov 2014}{article}{author={V. Schreiber}, author={A.P. Veselov}, title={On deformation and classification of $\vee$-systems}, date={2011}, journal={ J. Nonlin. Math.
Phys},volume={ 21(4)}, pages={543--583}}
%

%
\bib{Shen 2019}{article}{author={D. Shen}, title={Frobenius algebras and root systems: the trigonometric case}, 
journal={ Lett. Math. Phys. 111}, volume={125},
date={2021}}
%
%

%
\bib{Strachan 2010}{article}{author={ I.A.B. Strachan}, title={Weyl groups and elliptic solutions of the WDVV equations}, journal={Advances in Mathematics},volume={ 224 }, date={2010}, pages={ 1801--1838}}
%
%



%
\bib{Veselov 1999}{article}{author={A.P. Veselov}, title={Deformations of root systems and new solutions to generalised WDVV equations}, journal={Phys. Lett. A 261}, date={1999}, pages={297--302}}
%
%

\bib{Witten.1990}{article}{author={E. Witten}, title={On the structure of the topological phase of two-dimensional gravity}, journal={Nucl. Phys. B 340}, date={1990}, pages={281-332}}

%
\bib{Wyllard 2000}{article}{author={N. Wyllard}, title={(Super)-conformal many-body quantum mechanics with extended supersymmetry}, journal={JHEP}, date={2000},volume={(41)}, pages={2826}}

\end{biblist}
\end{bibdiv}

\end{document}